\title{A Sound and Complete Characterization of\\ 
Fair Asynchronous Session Subtyping}
\author{Mario Bravetti}{University of Bologna}{mario.bravetti@unibo.it}{https://orcid.org/0000-0001-5193-2914}{}
\author{Luca Padovani}{University of Bologna}{luca.padovani2@unibo.it}
{https://orcid.org/0000-0001-9097-1297}{}
\author{Gianluigi Zavattaro}{University of Bologna}{gianluigi.zavattaro@unibo.it}{https://orcid.org/0000-0003-3313-6409}{}
\authorrunning{M. Bravetti, L. Padovani and G. Zavattaro}
\keywords{Binary sessions, session types, fair asynchronous subtyping} 
\newcommand{\F}{\mathcal{F}}
\newif\ifcomments
\newcommand{\marginnote}[2]{%
  \ifcomments%
    $^{\color{magenta}\mathclap\star}$%
    \marginpar[
        \flushright\tiny\sf\textbf{#1}: #2
    ]{
        \flushleft\tiny\sf\textbf{#1}: #2
    }%
  \fi%
}
\newcommand{\LP}[1]{\marginnote{LP}{\color{blue}#1}}
\newcommand{\GZ}[1]{\marginnote{GZ}{\color{purple}#1}}
\newcommand{\MB}[1]{\marginnote{MB}{\color{red}#1}}
\definecolor{mc}{rgb}{0.5,0,0}
\newcommand{\set}[1]{\braces{#1}}
\newcommand{\Nat}{\mathbb{N}}
\newcommand{\rulename}[1]{\textnormal{\textup{\textsc{\small\bracks{#1}}}}}
\newcommand{\eoe}{\hfill$\lrcorner$}
\newcommand{\mkkeyword}[1]{\mathsf{\color{blue}#1}}
\newcommand{\proofcase}[1]{\textbf{Case #1.}\xspace}
\newcommand{\ie}{\emph{i.e.}\xspace}
\newcommand{\eg}{\emph{e.g.}\xspace}
\newcommand{\bracks}[1]{[#1]}
\newcommand{\braces}[1]{\{#1\}}
\newcommand{\pol}{p}
\newcommand{\inp}{{?}}
\newcommand{\out}{{!}}
\newcommand{\MessageType}{\MessageTypeS}
\newcommand{\MessageTypeS}{\TagA}
\newcommand{\MessageTypeT}{\TagB}
\newcommand{\SessionTypeT}
\newcommand{\End}{\mkkeyword{end}}
\newcommand{\Branch}{\sum}
\newcommand{\branch}{\mathbin{+}}
\newcommand{\Tag}[1][a]{\mathsf{\color{teal}#1}}
\newcommand{\TagA}{\Tag[a]}
\newcommand{\TagB}{\Tag[b]}
\newcommand{\TagC}{\Tag[c]}
\newcommand{\TagD}{\Tag[d]}
\newcommand{\TagTC}{\Tag[tc]}
\newcommand{\TagTM}{\Tag[tm]}
\newcommand{\TagDONE}{\Tag[done]}
\newcommand{\TagOVER}{\Tag[over]}
\newcommand{\Action}{\ActionA}
\newcommand{\ActionA}{\alpha}
\newcommand{\ActionB}{\beta}
\newcommand{\Actions}{\ActionsA}
\newcommand{\ActionsA}{\varphi}
\newcommand{\ActionsB}{\psi}
\newcommand{\session}[2]{#1\mathbin\|#2}
\newcommand{\tr}{\mathsf{tr}}
\newcommand{\actions}{\mathsf{act}}
\newcommand{\inputs}{\mathsf{inp}}
\newcommand{\outputs}{\mathsf{out}}
\newcommand{\co}[1]{\overline{#1}}
\newcommand{\srel}{\mathcal{S}}
\newcommand{\eqdef}{\stackrel{\smash{\textsf{\upshape\tiny def}}}=}
\newcommand{\conv}{\sqsubseteq}
\newcommand{\correct}[2]{#1\! \mathrel{\Join}\! #2}
\newcommand{\subt}{\preceq}
\newcommand{\red}{\rightarrow}
\newcommand{\wred}{\red^*}
\newcommand{\lred}[1]{\stackrel{#1}{\longrightarrow}}
\newcommand{\xlred}[1]{\xrightarrow{#1}}
\newcommand{\nlred}[1]{\longarrownot\lred{#1}}
\newcommand{\wlred}[1]{\stackrel{#1}{\Longrightarrow}}
\newcommand{\TagReq}{\Tag[req]}
\newcommand{\TagResp}{\Tag[resp]}
\newcommand{\TagStop}{\Tag[stop]}
\newcommand{\TagYes}{\Tag[yes]}
\newcommand{\TagNo}{\Tag[no]}
\newcommand{\msg}[1]{\Tag[#1]}
\newcommand{\snd}[1]{!{\msg{#1}}}
\newcommand{\exTM}{\msg{tm}}
\newcommand{\exTC}{\msg{tc}}
\newcommand{\exDONE}{\msg{done}}
\newcommand{\exOVER}{\msg{over}}
\newcommand{\exGR}{T'_G}
\newcommand{\exG}{T_G}
\newcommand{\exS}{T_S}
\newcommand{\exPSR}{T'_{PS}}
\newcommand{\exPS}{T_{PS}}
\newcommand{\exC}{T_{C}}
\newcommand{\rcv}[1]{?\msg{#1}}
\newcommand{\dual}[1]{\overline{#1}}
\tikzset{
  every state/.style={minimum size=1pt,inner sep=1.5pt, initial text={}},
  mycfsm/.style={
    font=\scriptsize,
    initial where=left,
    initial distance=0.25cm,
    ->,>=stealth,auto, node distance=0.8cm and 0.8cm,
    scale=1, every node/.style={transform shape},
    baseline=(current  bounding  box.center)
  },
  ogate/.style = {
    diamond, draw, fill=white,
    minimum size=4mm,
    inner sep=0pt,
    postaction={path picture={%
        \draw[black]
        ([yshift=\gatedistancein]path picture bounding box.south) -- ([yshift=-\gatedistancein]path picture bounding box.north)
        ([xshift=-\gatedistancein]path picture bounding box.east) -- ([xshift=\gatedistancein]path picture bounding box.west)
        ;}}, drop shadow},
  agate/.style={draw,rectangle,
    minimum size=3mm,
    inner sep=0pt,
    fill=white,
    postaction={path picture={%
        \draw[black]
        ([yshift=\gatedistanceinand]path picture bounding box.south) --
        ([yshift=-\gatedistanceinand]path picture bounding box.north) ;}}, drop shadow},
  source/.style={draw,circle,fill=white,
    minimum size=3mm,
    inner sep=0pt, drop shadow},
  sink/.style={draw,circle,double,fill=white,
    minimum size=3mm,
    inner sep=0pt, drop shadow},
  intera/.style = {rectangle, draw=black, align=center, fill=white, rounded corners=0.1cm,
    minimum height=12,
    inner sep=2pt, drop shadow},
  line/.style = {draw,->, rounded corners=0.07cm,>=latex},
  venn/.style={preaction={fill, #1},opacity=0.6},
  cnode/.style={rectangle,draw=black,inner sep=2pt},
  ancestor/.style={densely dashed,->},
  silentedge/.style={>=latex,->},
  nlabel/.style={fill=white,inner sep=0pt,font=\footnotesize},
  notexplo/.style={fill=gray!10},
  echnode/.style={rectangle,draw=black,inner sep=2pt},
  schnode/.style={diamond,draw=black,inner sep=0pt},
}
\begin{document}
\maketitle

\begin{abstract}
    Session types are abstractions of communication protocols enabling the
    static analysis of message-passing processes. Refinement notions for session
    types are key to support safe forms of process substitution while preserving
    their compatibility with the rest of the system.
    Recently, a fair refinement relation for asynchronous session types has been
    defined allowing the anticipation of message outputs with respect to an
    unbounded number of message inputs.
    This refinement is useful to capture common patterns in communication
    protocols that take advantage of asynchrony. However, while the semantic
    (\emph{\`a la testing}) definition of such refinement is straightforward,
    its characterization has proved to be quite challenging. In fact, only a
    sound but not complete characterization is known so far.
    In this paper we close this open problem by presenting a sound and complete
    characterization of asynchronous fair refinement for session types. We
    relate this characterization to those given in the literature for
    \emph{synchronous} session types by leveraging a novel labelled transition
    system of session types that embeds their asynchronous semantics.
\end{abstract}

\section{Introduction}
\label{sec:introduction}

\begin{figure}[t]
  \centering
  \begin{tabular}{c@{\qquad}c@{\qquad}c}
    \begin{tikzpicture}[mycfsm, node distance = 0.5cm and 0.9cm
      ,scale=1.2, every node/.style={transform shape}]
      \node[state, initial, initial where=left] (s0) {$0$};
      \node[state, right =of s0] (s1) {$1$};
      \node[state, right=of s1] (s2) {$2$};
      \path 
      (s0) edge [loop above] node {$\snd{\exTC}$} (s0)
      (s0) edge node {$\snd{\exDONE}$} (s1)
      (s1) edge [loop above] node [above] {$\rcv{\exTM}$} (s1)
      (s1) edge node [below] {$\rcv{\exOVER}$} (s2)
      ; 
    \end{tikzpicture}
    &
    \begin{tikzpicture}[mycfsm, node distance = 0.5cm and 0.9cm
      ,scale=1.2, every node/.style={transform shape}]
      \node[state, initial, initial where=left] (s0) {$0$};
      \node[state, right =of s0] (s1) {$1$};
      \node[state, right=of s1] (s2) {$2$};
      \path 
      (s0) edge [loop above] node [above] {$\rcv{\exTM}$} (s0)
      (s0) edge node [below] {$\rcv{\exOVER}$} (s1)
      (s1) edge [loop above] node {$\snd{\exTC}$} (s1)
      (s1) edge node {$\snd{\exDONE}$} (s2)
      ; 
    \end{tikzpicture}
    &
    \begin{tikzpicture}[mycfsm, node distance = 0.5cm and 0.9cm
      ,scale=1.2, every node/.style={transform shape}]
      \node[state, initial, initial where=left] (s0) {$0$};
      \node[state, right =of s0] (s1) {$1$};
      \node[state, right=of s1] (s2) {$2$};
      \path 
      (s0) edge [loop above] node [above] {$\snd{\exTM}$} (s0)
      (s0) edge node [below] {$\snd{\exOVER}$} (s1)
      (s1) edge [loop above] node {$\rcv{\exTC}$} (s1)
      (s1) edge node {$\rcv{\exDONE}$} (s2)
      ; 
    \end{tikzpicture}
    \\ 
    $\exGR$  & $\exG= \dual{\exS}$ & $\exS$  
  \end{tabular}
  \caption{Satellite protocols. $\exGR$ is the refined session type of the ground station, $\exG$ is the session type of ground station, and $\exS$ is the session type of the spacecraft.}
\label{fig:oldex-types}
\end{figure}
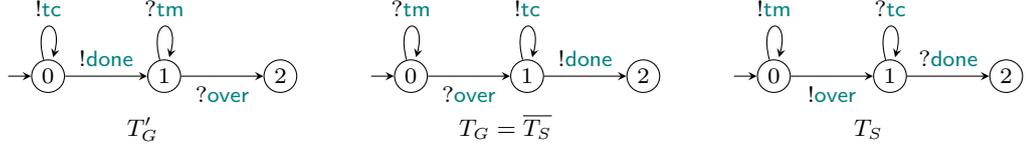


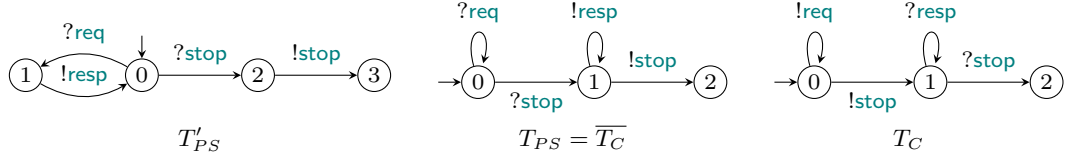
\begin{figure}[t]
  \centering
  \begin{tabular}{c@{\quad}c@{\quad}c} 
    \begin{tikzpicture}[mycfsm, node distance = 0.5cm and 0.9cm
      ,scale=1.2, every node/.style={transform shape}]
      \node[state, initial, initial where=above] (s0) {$0$};
      \node[state, left =of s0] (s1) {$1$};
      %
      \node[state, right =of s0] (s2) {$2$};
      \node[state, right =of s2] (s3) {$3$};
      \path
      (s0) edge [bend right] node [above] {$\rcv{\TagReq}$} (s1)
      (s0) edge node {$\rcv{\TagStop}$} (s2)
      (s1) edge [bend right]  node {$\snd{\TagResp}$} (s0)
      (s2) edge node {$\snd{\TagStop}$} (s3)   
      ; 
    \end{tikzpicture}
    &
    \begin{tikzpicture}[mycfsm, node distance = 0.5cm and 0.9cm
      ,scale=1.2, every node/.style={transform shape}]
      \node[state, initial, initial where=left] (s0) {$0$};
      \node[state, right =of s0] (s1) {$1$};
      \node[state, right=of s1] (s2) {$2$};
      \path 
      (s0) edge [loop above] node [above] {$\rcv{\TagReq}$} (s0)
      (s0) edge node [below] {$\rcv{\TagStop}$} (s1)
      (s1) edge [loop above] node {$\snd{\TagResp}$} (s1)
      (s1) edge node {$\snd{\TagStop}$} (s2)
      ; 
    \end{tikzpicture}
    &
    \begin{tikzpicture}[mycfsm, node distance = 0.5cm and 0.9cm
      ,scale=1.2, every node/.style={transform shape}]
      \node[state, initial, initial where=left] (s0) {$0$};
      \node[state, right =of s0] (s1) {$1$};
      \node[state, right=of s1] (s2) {$2$};
      \path 
      (s0) edge [loop above] node [above] {$\snd{\TagReq}$} (s0)
      (s0) edge node [below] {$\snd{\TagStop}$} (s1)
      (s1) edge [loop above] node {$\rcv{\TagResp}$} (s1)
      (s1) edge node {$\rcv{\TagStop}$} (s2)
      ; 
    \end{tikzpicture}
    \\ 
    $\exPSR$  & $\exPS = \dual{\exC}$ & $\exC$ 
\end{tabular}
\caption{Stream processing server $\exPSR$ is the refined session type of the batch processing server $\exPS$, and $\exC$ is the session type of the client.}
\label{fig:runex-types}
\end{figure}

Abstract models such as communicating finite-state machines~\cite{BZ83} and
asynchronous session types~\cite{HYC16} are essential to reason about
correctness of distributed systems whose components communicate with
point-to-point \textsc{fifo} channels.
%
A fundamental issue, which makes it possible to manage system correctness in a
compositional way, is the development of techniques allowing a component to be
\emph{refined} independently of the others, without compromising the correctness
of the whole system.
In this respect, the notion of fair asynchronous refinement for session types
introduced by Bravetti, Lange and Zavattaro~\cite{BravettiLangeZavattaro24}
guarantees fair termination and implies all the desirable safety and liveness
properties of communicating systems, including communication safety, deadlock
freedom, absence of orphan messages, and lock freedom. However, while the
semantic (\emph{\`a la testing}) definition of such refinement is
straightforward, its (coinductive) characterization (\emph{\`a la session
subtyping}) has proved to be quite challenging. In fact, Bravetti, Lange and
Zavattaro~\cite{BravettiLangeZavattaro24} only provide a sound but not complete
characterization.

With respect to previous notions of asynchronous session subtyping~\cite{ESOP09,
CDY2014, MariangiolaPreciness} (which leverage asynchrony by allowing the
refined component to anticipate message emissions, but only under certain
conditions) fair asynchronous subtyping~\cite{BravettiLangeZavattaro24} makes it
possible to encompass subtypes that occur naturally in communication protocols,
\eg where two parties simultaneously send each other a finite but unspecified
amount of messages before consuming them from their buffers. We illustrate this
scenario in Figure~\ref{fig:oldex-types}, which depicts the interaction between
a spacecraft $S$ and a ground station $G$ that communicate via two unbounded
asynchronous channels (one in each direction).
For convenience, the protocols are represented as communicating finite-state
machines~\cite{BZ83}, \ie with a labeled transition system-like notation where
``?'' represents inputs, ``!'' represents outputs, and the initial state is
indicated by an incoming arrow.
Consider $\exS$ and $\exG$ first.
Session type $\exS$ is the abstraction of the spacecraft, which may send a
finite but unspecified number of telemetries $\exTM$ (abstractly modeling a
\texttt{for} loop), followed by a message $\exOVER$.
In the second phase, the spacecraft receives a number of telecommands ($\exTC$),
followed by a message $\exDONE$.
%
%
In principle, the ground station should behave as the type $\exG$ which is the
dual of $\exS$ (denoted by $\dual{\exS}$) where outputs become inputs and
vice-versa.
  However, since the flyby window may be short, it makes sense to implement the
  ground station so that it communicates with the satellite 
  by anticipating the output of the commands.\GZ{Ho tolto "full duplex" perché
  sopra diciamo che ci sono due canali nelle due direzioni} That is, the ground
  station follows the type
$\exGR$, which sends telecommands before receiving telemetries.
In this way $\exGR$ and $\exS$ interact in a symmetric manner:
they first send all of their messages and then consume the messages 
sent from the other partner. No communication error can occur, and the 
communication protocol can always terminate successfully, with empty queues.
In fact $\exGR$ and $\exS$ also form a correct composition in the sense
that every message that is sent by one participant is eventually
received by the other.
The session subtyping presented by Bravetti, Lange and
Zavattaro~\cite{BravettiLangeZavattaro24}, which is proved to be a sound
characterization of fair asynchronous session refiment, makes it possible to
actually \emph{prove} that $\exGR$ refines $\exG$.

Let us now consider, in Figure~\ref{fig:runex-types}, a more complex scenario
concerning a processing server $PS$ and its client $C$. Consider $\exPS$ and
$\exC$ first.
Session type $\exPS$ is the abstraction of a batch processing server and $\exC$
that of a client. Note that these types are respectively isomorphic to $\exG$
and $\exS$ of Figure~\ref{fig:oldex-types}: now the client $\exC$ sends generic
requests $\TagReq$ (instead of specific telemetry data $\exTM$) and, when it
decides to stop sending via $\TagStop$, it keeps waiting for responses
$\TagResp$ (instead of telecommand data $\exTC$) until it receives $\TagStop$.
As in the previous scenario, the client $\exC$ and the batch processing server
$\exPS$ (which is its dual) obviously form a correct composition. Also here it
is possible to consider a more efficient version of the processing server, \ie a
stream processing server $\exPSR$ which immediately (and asynchronously) sends
the response to each request it receives. In this scenario, it may be natural to
send responses one at a time after each request.
In fact also $\exPSR$ and $\exC$ form a correct composition. Actually, we have
that $\exPSR$ is a fair asynchronous session refinement of $\exPS$, meaning that
$\exPSR$ is a good replacement of $\exPS$ in any possible context. However, the
characterization provided by Bravetti, Lange and
Zavattaro~\cite{BravettiLangeZavattaro24} is unable to capture this specific
refinement. 


Technically speaking, the coinductive characterization of subtyping given by
Bravetti, Lange and Zavattaro \cite{BravettiLangeZavattaro24} is not complete
because it does not support \emph{output covariance}, that is the possibility
for a subtype to remove branches in output choices, corresponding to a reduced
internal nondeterminism.
This feature is needed to relate $\exPSR$ and $\exPS$ of
Figure~\ref{fig:runex-types} because the state $1$ (resp. $2$) of $\exPSR$, in
which the forced internal choice of emitting $\snd{\TagResp}$ (resp.
$\snd{\TagStop}$) is taken, has a unique corresponding state in $\exPS$, namely
state $1$, where two possible choices are present (either $\snd{\TagResp}$ or
$\snd{\TagStop}$ can be sent).

In this paper we close the open problem of finding a sound and complete
characterization of asynchronous fair refinement for session types. The
characterization we present is crucially based on a novel asynchronous semantics
for session types that: (i) expresses their asynchronous behavior without
explicit modeling of FIFO buffers (as in the original paper of Bravetti, Lange
and Zavattaro~\cite{BravettiLangeZavattaro24}), and (ii) allows us to adapt, to
the asynchronous case, the approach introduced by Ciccone and
Padovani~\cite{Padovani16,CicconePadovani22B} for providing complete
characterizations of \emph{fair session subtyping} in the \emph{synchronous}
case. Being complete, our new subtyping allows us to prove that the stream
processing server $\exPSR$ is a refinement of the batch processing server
$\exPS$.

\subparagraph*{Structure of the paper.}
In \cref{sec:preliminaries} we recall the notion of fair asynchronous refinement
and the buffer-based asynchronous semantics of session types given by Bravetti,
Lange and Zavattaro~\cite{BravettiLangeZavattaro24}.
In \cref{sec:session-types} we present our novel alternative asynchronous
semantics and prove that it is equivalent to the previous one, in the sense that
it characterizes the same notion of session composition correctness.
In \cref{sec:subtyping} we present a sound and complete characterization of
asynchronous fair refinement based on the semantics of \Cref{sec:session-types}.
We conclude in \cref{sec:conclusion} with a more detailed description of related
work and hints at further developments.
Because of space constraints, proofs and additional technical material have been
postponed in the appendices.

\newcommand{\conf}{s}
\newcommand{\parconf}{|}
\newcommand{\cnfg}[4]{[#1,#2]\parconf[#3,#4]}
\newcommand{\word}{\omega}
\newcommand{\Tbranchindex}[4]{\&\{{#1}_{#3}:{#2}_{#3}\}_{#3\in #4}}
\newcommand{\TbranchindexNoidx}[5]{\&\{{#1}_{#3}:{#2}, \ #5 \}_{ #4}}
\newcommand{\Tbranch}[2]{\Tbranchindex{#1}{#2}{i}{I}}
\newcommand{\Tbranchsingledec}[3]{\&^{#3}\{{#1}:{#2}\}}
\newcommand{\Tbranchsingle}[2]{\&\{{#1}:{#2}\}}
\newcommand{\Tbranchset}[3]{\&\{{#1}\!:\!{#2}\}_{#1\in #3}}
\newcommand{\TbranchK}[2]{k\Tbranchindex{#1}{#2}{i}{I}}
\newcommand{\TbranchindexM}[5]{\&\{{#1}_{#3}(#5_{#3}):{#2}_{#3}\}_{#3\in #4}}
\newcommand{\TbranchM}[3]{\TbranchindexM{#1}{#2}{i}{I}{#3}}
\newcommand{\Tbranchsimple}[2]{\&\{{#1}:{#2}\}}
\newcommand{\Tbra}[1]{\&\{#1\}}
\newcommand{\Tselectindex}[4]{\oplus\{{#1}_{#3}:{#2}_{#3}\}_{#3\in #4}}
\newcommand{\Tselect}[2]{\Tselectindex{#1}{#2}{i}{I}}
\newcommand{\Tselectsingle}[2]{\oplus\{{#1}:{#2}\}}
\newcommand{\Tselectset}[3]{\oplus\{{#1}:{#2}\}_{#1\in #3}}
\newcommand{\Tsel}[1]{\oplus\{ #1 \}}
\newcommand{\TselectindexNoidx}[5]{\oplus\{{#1}_{#3}:{#2}, \ #5 \}_{ #4}}
\newcommand{\TselectK}[2]{k\Tselectindex{#1}{#2}{i}{I}}
\newcommand{\TselectindexM}[5]{\oplus\{{#1}_{#3}\langle #5_{#3}\rangle:{#2}_{#3}\}_{#3\in #4}}
\newcommand{\TselectM}[3]{\TselectindexM{#1}{#2}{i}{I}{#3}}
\newcommand{\Tselectsimple}[2]{\oplus\{{#1}:{#2}\}}
\newcommand{\append}{\!\cdot\!}
\newcommand{\unf}[1]{\mathsf{unfold}(#1)}
\newcommand{\Tend}{\mathbf{end}}

\newcommand{\refine}{\subt}

\newcommand{\TagTask}{\Tag[task]}
\newcommand{\TagRes}{\Tag[res]}
\newcommand{\TagCommand}{\Tag[cmd]}
\newcommand{\TagData}{\Tag[data]}

\section{Preliminaries}
\label{sec:preliminaries}

We start by recalling the syntax and asynchronous semantics of session types as
well as the notion of fair refinement given by Bravetti, Lange and
Zavattaro~\cite{BravettiLangeZavattaro24}. In the definition of the syntax,
instead of using the $\oplus$, $\&$, and the $\mathit{rec}$ operators, we
consider an equivalent process algebraic notation with choice and prefix
\cite{Mil89} and a coinductive syntax to deal with possibly infinite session
types. Following Bravetti, Lange and
Zavattaro~\cite{BravettiLangeZavattaro24} we restrict to first-order session 
types, \ie~session types in which send/receive operations are used to 
exchange only messages of elementary \emph{singleton type}. Considering 
higher-order types, where send/receive operations can be used to
exchange also sessions,
would not modify significantly the proofs of our results. 

We assume the existence of a set of \emph{singleton types} ranged over by
$\TagA$, $\TagB$, $\dots$ that we use to label branches in session types. The
only value of type $\Tag$ is called \emph{tag} and is denoted by the same symbol
$\Tag$.
\emph{Pre-session types} are the possibly infinite trees coinductively generated
by the productions below:
\medskip
\[
    \textstyle
    \begin{array}{rr@{~}c@{~}l}
        \textbf{Polarity} & \pol & ::= & \inp \mid \out \\
        \textbf{Pre-session type} & S, T & ::= & \End \mid \sum_{i\in I} \pol\Tag_i.S_i
    \end{array}
\]
\medskip

The \emph{polarities} $\inp$ and $\out$ are used to distinguish \emph{input
actions} from \emph{output actions}.
A \emph{session type} is a pre-session type that satisfies the following
well-formedness conditions:
\begin{enumerate}
    \item\label{wf:branch} in every subtree of the form $\sum_{i\in I}
    \pol\Tag_i.S_i$, $I$ is a non-empty finite set and if $I$ contains more than
    one element, then the message types $\Tag_i$ are pairwise distinct tags;

    \item\label{wf:regular} the tree is \emph{regular}, namely it is made of
    finitely many distinct subtrees $\sum_{i\in I}
    \pol\Tag_i.S_i$. 
Courcelle~\cite{Courcelle83} established
    that every such tree can be expressed as the unique solution of a finite
    system of equations $\set{S_i = T_i}_{i\in I}$ where each metavariable $S_i$
    may occur (guarded) in the $T_j$'s;
\item\label{wf:end} every subtree $\sum_{i\in I}
    \pol\Tag_i.S_i$ contains a leaf of the form $\End$.
\end{enumerate}

A session type $\End$ describes a terminated session (\ie the corresponding
channel is no longer usable). 
%
A branching session type $\sum_{i\in I} \pol\Tag_i.S_i$ describes a channel used
for sending or receiving a message of type $\Tag_i$ and then according to $S_i$.
The well-formedness condition \eqref{wf:branch} above requires that message
types must be pairwise distinct tags if there is more than one branch.
Sometimes we write $\pol\Tag_1.S_1 + \cdots + \pol\Tag_n.S_n$ for $\sum_{i=1}^n
\pol\Tag_i.S_i$. Note that the polarity $\pol$ is the same in every branch, \ie
mixed choices are disallowed as usual.

The well-formedness condition \eqref{wf:regular} guarantees that we deal
only with session types representing trees consisting of finitely many
distinct subtrees as those that can be represented with a finite syntax 
and the $\mathit{rec}$ operator~\cite{Courcelle83}.

The well-formedness condition \eqref{wf:end} ensures that session types describe
protocols that can always terminate. This assumption is not usually present in
other session type syntax definitions where it is possible to also express
non-terminating protocols. As we will see in the following
(\Cref{def:compliance}), non-terminating protocols are not inhabited in our
theory. In particular, we consider a notion of correct type composition such
that, whatever sequence of interactions is performed, such sequence can always
be extended to reach successful session termination where both types
become $\End$. In conclusion, ruling out non-terminating protocols in the syntax
of session types does not affect the family of protocols we are interested in
modeling and at the same time simplifies the technical development.

It is worth to mention that Bravetti, Lange and
Zavattaro~\cite{BravettiLangeZavattaro24} do not consider the well formedness
condition \eqref{wf:end}, hence allowing for the specification
of non-terminating types. In the comparison with the related literature in \Cref{sec:conclusion}
we discuss the impact of the absence of this condition on the definition
of the fair asynchronous subtyping reported in that paper.

We write $\co{S}$ for the \emph{dual} of $S$, which is the session type
corecursively defined by the equations
\[
    \textstyle
    \co{\End} = \End
    \qquad
    \co{\sum_{i\in I} \pol\Tag_i.S_i} = \sum_{i\in I} \co\pol\Tag_i.\co{S_i}
\]
where $\co\pol$ is the dual of the polarity $\pol$ so that $\co\inp = \out$ and
$\co\out = \inp$. As usual, duality affects the direction of messages but not
message types.

We now define the transition system that we use to formalize the notion of
correct session type composition and the induced notion of
refinement~\cite{BravettiLangeZavattaro24}.

The transition system makes use of explicit FIFO queues of messages that have
been sent asynchronously. A \emph{configuration} is a term
$\cnfg{S}{\word_S}{T}{\word_T}$ where $S$ and $T$ are session types equipped
with two queues $\word_S$ and $\word_T$ of incoming messages. We write
$\epsilon$ for the empty queue and we use $\conf$, $\conf'$, etc. to range over
configurations.

\begin{definition}[Transition Relation \cite{BravettiLangeZavattaro24}]
\label{def:transrel}
The transition relation $\rightarrow$ over configurations 
is the minimal relation satisfying the rules below (plus symmetric ones,
omitted):
\begin{enumerate}
\item \label{it:trans-send}
if $j \in I$ then
$\cnfg{\sum_{i\in I} !\Tag_i.S_i}{\word_S}{T}{\word_T} \rightarrow
\cnfg{S_j}{\word_S}{T}{\word_T \Tag_j}$;
\item \label{it:trans-rcv}
if $j \in I$ then
$\cnfg{\sum_{i\in I} ?\Tag_i.S_i}{\Tag_j \word_S}{T}{\word_T} \rightarrow
\cnfg{S_j}{\word_S}{T}{\word_T}$. 
\end{enumerate}
We write $\rightarrow^*$ for the reflexive and transitive closure of
the $\rightarrow$ relation.
\end{definition}
%

\begin{example}\label{ex:diff_semantics}
Consider types 
$S=\out\TagA.\inp\TagA.\End + \out\TagB.\inp\TagB.\End$
and
$T=\out\TagA.\inp\TagA.\End$.
The configuration $\cnfg{S}{\epsilon}{T}{\epsilon}$
has the sequence of transitions
$\cnfg{S}{\epsilon}{T}{\epsilon} \! \rightarrow 
 \cnfg{\inp\TagA.\End}{\epsilon}{T}{\TagA} \! \rightarrow
 \cnfg{\inp\TagA.\End}{\TagA}{\inp\TagA.\End}{\TagA}$ $\rightarrow
 \cnfg{\inp\TagA.\End}{\TagA}{\End}{\epsilon} \rightarrow
 \cnfg{\End}{\epsilon}{\End}{\epsilon}
$,
which terminates in a configuration with both types equal to $\End$ 
and with both message queues empty.
It also has, e.g., another computation 
$\cnfg{S}{\epsilon}{T}{\epsilon} \rightarrow
 \cnfg{S}{\TagA}{\inp\TagA.\End}{\epsilon} \rightarrow
 \cnfg{\inp\TagB.\End}{\TagA}{\inp\TagA.\End}{\TagB}$
which terminates in a configuration with both types different from $\End$ 
and with both message queues not empty.
\eoe
\end{example}

Intuitively, the composition of the two types $S$ and $T$ in
\cref{ex:diff_semantics} is \emph{incorrect} because it can lead to a deadlocked
configuration $\cnfg{\inp\TagB.\End}{\TagA}{\inp\TagA.\End}{\TagB}$ discussed
therein. Following Bravetti, Lange and Zavattaro~\cite{BravettiLangeZavattaro24}
we formalize the correct composition of two types as a \emph{compliance}
relation.


\begin{definition}[Compliance \cite{BravettiLangeZavattaro24}]
\label{def:compliance} \label{def:compatibility} Given a configuration $\conf$
we say that it is a correct composition if, whenever $\conf\rightarrow^*\conf'$,
then also $ \conf' \rightarrow^\ast \cnfg{\End}{\epsilon}{\End}{\epsilon}$.
\noindent
Two session types $S$ and $T$ are \emph{compliant} if
$\cnfg{S}{\epsilon}{T}{\epsilon}$ is a correct composition.
\end{definition}

We can now formally state that the types $S$ and $T$ of \cref{ex:diff_semantics}
are not \emph{compliant} because of the sequence of transitions discussed at the
end of \cref{ex:diff_semantics}: $\cnfg{S}{\epsilon}{T}{\epsilon} \rightarrow
\cnfg{S}{\TagA}{\inp\TagA.\End}{\epsilon} \rightarrow
\cnfg{\inp\TagB.\End}{\TagA}{\inp\TagA.\End}{\TagB}$ leading to a configuration
different from the successfully terminated computation
$\cnfg{\End}{\epsilon}{\End}{\epsilon}$, and without outgoing transitions.

Compliance induces a semantic notion of type refinement, as follows.

\begin{definition}[Refinement \cite{BravettiLangeZavattaro24}]\label{def:refine}
  A session type $S$ refines $T$, written $S \refine T$, if, for every
  $R$ s.t. $T$ and $R$ are compliant, then $S$ and $R$ are also
  compliant.
\end{definition}

In words, this definition says that a process behaving as $T$ can be ``safely''
replaced by a process behaving as $S$ when $S$ is a refinement of $T$. Indeed,
the peer process, which is assumed to behave according to some session type $R$
such that $R$ and $T$ are compliant, will still interact correctly after the
substitution has taken place.
We quote ``safely'' above since we want to stress that the notion of session
correctness being preserved here is not merely a \emph{safety property}, but
rather a combination of a safety property (if the interaction gets stuck, then
it has successfully terminated) and a liveness property (the interaction can
always successfully terminate).

Because of the universal quantification on the type $R$, \Cref{def:refine}
provides few insights about what it means for $S$ to be a refinement of $T$. For
this reason, it is important to provide alternative (but equivalent)
characterizations of type refinement. Bravetti, Lange and
Zavattaro~\cite{BravettiLangeZavattaro24} present a characterization
of refinement defined coinductively \emph{à la session subtyping} \cite{GayHole05},
hence called fair asynchronous subtyping,
which can be used to prove some interesting and non trivial
relations as those discussed in \Cref{sec:introduction}.

\begin{example}[Ground station and satellite communication \cite{BravettiLangeZavattaro24}]
\label{ex:satellite}
We now present the session types describing the possible
behaviours of the ground station in the satelllite communication 
example presented in \Cref{sec:introduction}.
We use type $T$ to represent the communication behaviour
of the automaton $\exG$ in \Cref{fig:oldex-types}:
$T = \inp\TagTM.T +
\inp\TagOVER.T'$ where $T' = \out\TagTC.T' + \out\TagDONE.\End$.
We use $S$ for the automaton $\exGR$:
$S =
\out\TagTC.S + \out\TagDONE.S'$ where $S' = \inp\TagTM.S' + \inp\TagOVER.\End$.
Type $T$   
specifies the initially expected behaviour of the ground station 
as a complementary protocol w.r.t. to the one,  $\exS$, followed by 
the spacecraft. Type $S$ specifies an implementation where the 
emission of the telecommands is anticipated w.r.t. the consumption 
of the received telemetries. The subtyping of \cite{BravettiLangeZavattaro24}
allows to prove that $S \subt T$, hence we can conclude that the 
protocol with the anticipation of the telemetries  
correctly implements (refines) the expected communication behaviour of the
ground station. 
\eoe
\end {example}

The coinductive subtyping in~\cite{BravettiLangeZavattaro24} 
is a sound but not complete characterization of fair asynchronous refinement. 
In particular, it is uncapable of establishing relations $S \subt T$ where $S$
describes a more deterministic protocol than $T$ (output covariance). We have seen an instance of
this case in \cref{sec:introduction} when discussing the stream and batch
processing servers. Liveness-preserving refinements are difficult to
characterize because they are intrinsically non-local: the removal of an output
action in a region of a session type may compromise the successful termination
of the protocol along a branch that is ``far away'' from the location where the
output was removed. In our case this phenomenon is further aggravated by the
presence of asynchrony, which allows output actions to be anticipated and
therefore to be moved around.

In order to provide a complete characterization of refinement, we use an
alternative, queue-less semantics of asynchronous session types that we
introduce in the next section.

\section{An Alternative Asynchronous Semantics for Session Types}
\label{sec:session-types}

The first step towards our characterization of fair refinement is the definition
of a novel asynchronous semantics for session types that does not make use of
explicit queues. This alternative semantics will allow us to adapt the complete
characterization of fair synchronous subtyping for session
types~\cite{Padovani16,CicconePadovani22B} to the asynchronous case. 


%

We now introduce a \emph{labelled transition system} to describe the sequences
of input/output actions that are allowed by a session type. A \emph{label} is a
pair made of a \emph{polarity} and a \emph{tag} $\Tag$:
\[
    \textbf{Label}\quad
    \ActionA, \ActionB ::= \pol\Tag
    \qquad\qquad
\]

We will refer to the dual of a label $\Action$ with the notation $\co\Action$,
where $\co{\pol\Tag} \eqdef \co\pol\Tag$.

\newcommand{\ChannelTrans}{\rulename{l-chn}\xspace}
\newcommand{\TagTrans}{\rulename{l-tag}\xspace}
\newcommand{\ChannelTransAsync}{\rulename{l-chn-async}\xspace}
\newcommand{\TagTransAsync}{\rulename{l-tag-async}\xspace}
\newcommand{\SyncTrans}{\rulename{l-sync}\xspace}
\newcommand{\AsyncTrans}{\rulename{l-async}\xspace}

The labelled transition system (LTS) of asychronous session types is
\emph{coinductively} defined thus:
\begin{mathpar}
    \inferrule[\SyncTrans]{~}{
        \textstyle
        \sum_{i\in I} \pol\Tag_i.S_i \lred{\pol\Tag_k} S_k
    }
    ~k\in I
    \and
    \inferrule[\AsyncTrans]{
        \forall i\in I: S_i \lred{\inp\Tag} T_i
    }{
        \textstyle
        \sum_{i\in I} \out\Tag_i.S_i \lred{\inp\Tag} \sum_{i\in I} \out\Tag_i.T_i
    }
\end{mathpar}

The rule \SyncTrans simply expresses the fact that a session type allows the
input/output action described by its topmost prefix, as expected.
The rule \AsyncTrans is concerned with asynchrony and lifts input actions that
are enabled deep in a session type, provided that such actions are only prefixed
by output actions.
The rationale for such ``asynchronous transitions''
is that the topmost outputs allowed by a session type may have
been performed asynchronously, so that the underlying input is enabled even if
the output messages have not been consumed yet.
For example, the session type $\out\TagA.\inp\TagB.S$ may evolve either
as
\begin{equation}
    \label{eq:diamond}
    \out\TagA.\inp\TagB.S \lred{\out\TagA} \inp\TagB.S \lred{\inp\TagB} S
    \text{\qquad or as\qquad}
    \out\TagA.\inp\TagB.S \lred{\inp\TagB} \out\TagA.S \lred{\out\TagA} S
\end{equation}

The first transition sequence is the expected one and describes an evolution in
which the order of actions is consistent with the syntactic structure of the
session type. The second transition sequence is peculiar to the asynchronous
setting and describes an evolution in which the input $\inp\TagB$ \emph{is
performed before the output $\out\TagA$}.
Asynchronous transitions seem to clash with the very nature of session types,
whose purpose is to impose an order on the actions performed by a process on a
channel. To better understand them, we find it useful to appeal to the usual
interpretation of labelled transition systems as descriptions of those actions
that trigger interactions with the environment: a process that behaves according
to the session type $\out\TagA.\inp\TagB.S$ is still required to send $\TagA$
before it starts waiting for $\TagB$. However, it will be capable of receiving
$\TagB$ before $\TagA$ has been consumed by the party with which it is
interacting, because $\TagA$ may have been queued. In this sense, we see the
queue as an entity associated with the process and not really part of the
environment in which the process executes.\LP{Questo testo è molto simile a
quello dell'articolo ECOOP.}\GZ{Secondo me possiamo lasciare così}

It is useful to introduce some additional notation related to actions and
transitions of session types.
We let $\ActionsA$ and $\ActionsB$ range over strings of labels and use
$\varepsilon$ to denote the empty string.
We write $\xlred{\Action_1\cdots\Action_n}$ for the composition
$\lred{\Action_1}\cdots\lred{\Action_n}$, we write $S \lred\Actions$ if $S
\lred\Actions T$ for some $T$ and we write $S \nlred\Actions$ if not $S
\lred\Actions$.

Two subtle aspects of the LTS deserve further comments.
First of all, asynchronous transitions are possible only provided that they are
enabled \emph{along every branch} of a session type. For example, if $S =
\out\TagA.\inp\TagC.S_1 + \out\TagB.(\inp\TagC.S_2 + \inp\TagD.S_3)$, then we
have
\[
    S \lred{\inp\TagC} \out\TagA.S_1 + \out\TagB.S_2
    \text{\qquad and\qquad}
    S \nlred{\inp\TagD}
    \text{\qquad and\qquad}
    S \xlred{\out\TagB\inp\TagD} S_3
\]

The $\inp\TagC$-labelled transition is enabled because that input action is
available regardless of the message (either $\TagA$ or $\TagB$) that has been
sent. On the contrary, the $\inp\TagD$-labelled transition is initially
disabled because the input of $\TagD$ is possible only if $\TagB$ has been
sent. So, the peer interacting with a process that follows the session type $S$
\emph{must} necessarily consume either $\TagA$ or $\TagB$ in order to understand
whether or not it can send $\TagD$.

Finally, we have to motivate the \emph{coinductive} definition of the LTS. This
aspect is related to both asynchrony and fairness. Consider a session type $S$
such that $S = \out\TagA.S + \out\TagB.\inp\TagC.T$. The question is whether or
not the $\inp\TagC$-labelled transition should be enabled from $S$. In
principle, $S$ allows the output of an infinite sequence of $\TagA$ messages and
so the $\inp\TagC$-labelled transition should be disabled. In this work,
however, we make the assumption that this behavior is unrealistic or
\emph{unfair}, therefore we want the $\inp\TagC$-labelled transition to be
enabled.
From a technical standpoint, the only derivation that allows us to express an
$\inp\TagC$-labelled transition for $S$ is the following one
\[
    \begin{prooftree}
        \[
            \mathstrut\smash\vdots
            \justifies
            S \lred{\inp\TagC} S'
            \using\AsyncTrans
        \]
        \[
            \justifies
            \inp\TagC.T \lred{\inp\TagC} T
            \using\SyncTrans
        \]
        \justifies
        S \lred{\inp\TagC} S'
        \using\AsyncTrans
    \end{prooftree}
\]
where $S' = \out\TagA.S' + \out\TagB.T$. This infinite derivation is legal only
if the LTS is coinductively defined.

The coinductive definition of the LTS may cast doubts on the significance of the
labels of transitions, in the sense that some (input) transitions could be
coinductively derivable for a given session type $S$ even if they do not
correspond to actions that are actually described within $S$.
For example, for the pre-session type $S = \out\Tag.S$ it would be possible to
derive \emph{every} transition of the form $S \lred{\inp\TagB} T$ by using
infinitely many applications of \AsyncTrans.
The following result rules out this possibility. The key element of the proof is
the fact that, because of the well-formedness condition \eqref{wf:end}, every
subtree of a session type contains a leaf of the form $\End$.

\begin{restatable}{proposition}{propcoindlts}
    \label{prop:coind-lts}
    If\/ $S \lred\Action$, then there exists $T$ and $\Actions$ made of output
    actions only such that $S \lred\Actions T \lred\Action$ and the last
    transition is derived by \SyncTrans.
\end{restatable}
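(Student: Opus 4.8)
The plan is to reason by cases on the polarity of $\Action$, and in the more delicate input case to extract a finite output-only path from the terminating branch guaranteed by well-formedness condition \eqref{wf:end}. The first observation is that \AsyncTrans can only ever produce \emph{input} labels, since its conclusion is always labelled $\inp\Tag$. Hence, if $\Action$ is an output action, then the transition $S \lred{\Action}$ can only have been derived by \SyncTrans, and the statement holds trivially by taking $\Actions = \varepsilon$ and $T = S$.

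So suppose $\Action = \inp\Tag$ is an input action. Since $S \lred{\Action}$ we have $S \neq \End$, so by condition \eqref{wf:end} there is a finite path $S = U_0 \to U_1 \to \cdots \to U_n = \End$ along the syntactic tree of $S$, where each step $U_j \to U_{j+1}$ follows one prefix of the branching $U_j$ (and each $U_0,\dots,U_{n-1}$ is a proper branching, not $\End$). I would then walk along this path and track which nodes still enable $\Action$. The driving tool is an \emph{inversion} of \AsyncTrans: whenever a node $U_j = \sum_{i\in I}\out\Tag_i.U'_i$ is an output-sum with $U_j \lred{\Action}$, the transition cannot be \SyncTrans (that would yield an output label), so it must be \AsyncTrans, and therefore $U'_i \lred{\Action}$ for \emph{every} $i\in I$; in particular the successor $U_{j+1}$ selected by the path satisfies $U_{j+1} \lred{\Action}$.

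Using this, I would argue by a simple finite induction that, as long as $U_0,\dots,U_j$ are all output-sums, each successor $U_{j+1}$ still enables $\Action$. The crucial step, and the main obstacle, is to show that an \emph{input}-sum is reached strictly before $\End$. Indeed, if all of $U_0,\dots,U_{n-1}$ were output-sums, then by the inversion argument $U_{n-1} \lred{\Action}$ would force its successor $U_n = \End$ to satisfy $\End \lred{\Action}$, which is impossible since $\End$ has no outgoing transitions. Hence there is a least index $m < n$ with $U_m$ an input-sum, and by the induction $U_m \lred{\Action}$; being an input-sum, this transition must be by \SyncTrans. Finally, since $U_0,\dots,U_{m-1}$ are output-sums, every edge $U_j \to U_{j+1}$ for $j < m$ is labelled by an output and is itself a \SyncTrans step, so collecting these labels into a string $\Actions$ of output actions and setting $T = U_m$ yields $S \lred{\Actions} T \lred{\Action}$ with the last transition derived by \SyncTrans, as required. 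The whole argument deliberately sidesteps the fact that the coinductive derivation of $S \lred{\Action}$ may be infinite: the only finiteness it relies on is that of the terminating path supplied by \eqref{wf:end}.
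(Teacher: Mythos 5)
Your proof is correct and follows essentially the same route as the paper's: both arguments hinge on the finite path to an $\End$ leaf guaranteed by well-formedness condition \eqref{wf:end}, and both exploit the fact that \AsyncTrans requires the input transition to be enabled on \emph{every} branch, hence on the branch lying along that terminating path. The only difference is presentational: the paper phrases this as an induction on the length of the shortest root-to-leaf path with a case analysis on the last rule of the derivation, whereas you unroll the same induction into an explicit finite walk with an inversion step at each output-sum.
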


After looking at the transition sequences in \eqref{eq:diamond}, the reader may
also wonder whether the LTS always satisfies the diamond property when a session
type simultaneously enables both input and output actions. Crucially, this is
the case.

\begin{proposition}
    \label{prop:diamond}
    If\/ $S \lred{\inp\TagA} S'$ and $S \lred{\out\TagB} S''$,
    then there exists $T$ such that $S' \lred{\out\TagB} T$ and $S''
    \lred{\inp\TagA} T$.
\end{proposition}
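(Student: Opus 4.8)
The plan is to reason by a short case analysis on the last rule used to derive each of the two hypotheses, exploiting the fact that \SyncTrans and \AsyncTrans tightly constrain the shape of $S$. The crucial observation is that \AsyncTrans produces \emph{only} input-labelled transitions, so the output transition $S \lred{\out\TagB} S''$ must be derived by \SyncTrans. This already forces $S$ to be an output choice, say $S = \sum_{i\in I} \out\Tag_i.S_i$, with $\Tag_k = \TagB$ and $S'' = S_k$ for some $k \in I$.

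Given this shape of $S$, I would next argue that the input transition $S \lred{\inp\TagA} S'$ cannot come from \SyncTrans, since the label produced by \SyncTrans applied to an output choice necessarily carries the output polarity, whereas $\inp\TagA$ is an input. Hence $S \lred{\inp\TagA} S'$ must be derived by \AsyncTrans, and I would read off its premise directly: $S_i \lred{\inp\TagA} T_i$ for every $i \in I$, with $S' = \sum_{i\in I} \out\Tag_i.T_i$.

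With both decompositions in hand, the common reduct is forced and the proof reduces to a direct computation: take $T = T_k$. On one leg, applying \SyncTrans to $S' = \sum_{i\in I} \out\Tag_i.T_i$ at index $k$ gives $S' \lred{\out\Tag_k} T_k$, that is $S' \lred{\out\TagB} T$. On the other leg, the premise of \AsyncTrans at index $k$ already supplies $S_k \lred{\inp\TagA} T_k$, which is exactly $S'' \lred{\inp\TagA} T$ since $S'' = S_k$. This closes both sides of the diamond.

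There is no genuine obstacle here once the shape of $S$ is pinned down: the common reduct is literally the $k$-th component $T_k$ of the async-transition target, so no separate induction or coinduction on derivations is needed. The one point I would be careful about is the appeal to \AsyncTrans, whose derivation may be infinite because the LTS is defined coinductively; I would stress that its (productive) premise $\forall i \in I.\, S_i \lred{\inp\TagA} T_i$ is nonetheless available and can simply be instantiated at $i = k$.
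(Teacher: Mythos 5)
Your proof is correct: inverting the output transition forces $S$ to be an output choice whose $\out\TagB$-step comes from \SyncTrans, inverting the input transition as an instance of \AsyncTrans exposes the family $S_i \lred{\inp\TagA} T_i$, and taking $T \eqdef T_k$ closes both legs of the diamond. The paper states this proposition without an explicit proof, but your argument is exactly the shape-of-$S$ case analysis it carries out for the analogous \Cref{prop:stronger-diamond} in the appendix (your situation is the ``symmetric'' case omitted there), and your remark that coinductive inversion of \AsyncTrans still yields its premise is the right point to be careful about.
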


Having established these basic facts about the LTS of session types, we introduce
some more notation.
If $\Actions = \Action_1\cdots\Action_n$, we write $\Actions.S$ for
$\Action_1\dots\Action_n.S$. Note that $\varepsilon.S = S$.
Then, we define a notion of \emph{partial derivative} for session types that
resembles Brzozowski's derivative for regular expressions~\cite{Brzozowski64}.
In particular, if $S \lred\Actions T$ we say that $T$ is the derivative of $S$
after $\Actions$ and we write $S(\Actions)$ for $T$. Note that $S(\Actions)$ is
uniquely defined because of the well-formedness condition~\eqref{wf:branch}.
We define the \emph{inputs} and \emph{outputs} of a session type as $\inputs(S)
\eqdef \set{ \Tag \mid S \lred{\inp\Tag} }$ and $\outputs(S)
\eqdef \set{ \Tag \mid S \lred{\out\Tag} }$.
Note that $\inputs(\End) = \outputs(\End) = \emptyset$.
Occasionally we use $\outputs(\cdot)$ also as a predicate so that $\outputs(S)$
holds if and only if $\outputs(S) \ne \emptyset$.
Finally, the set of \emph{traces} of a session type is defined as $\tr(S) \eqdef
\set{ \Actions \mid S \lred\Actions \End }$.

\begin{example}
    \label{ex:stream-batch-types}
    Let us formalize the protocols of the stream and batch processing servers we
    have sketched in \Cref{sec:introduction}. Consider the session types
    \[
        S = \inp\TagReq.\out\TagResp.S + \inp\TagStop.\out\TagStop.\End
        \text{\qquad and\qquad}
        \begin{array}{r@{~}l}
            T = & \inp\TagReq.T + \inp\TagStop.T' \\
            T' = & \out\TagResp.T' + \out\TagStop.\End \\
        \end{array}
    \]
    and observe that $\tr(T) = \set{
    (\inp\TagReq)^m\inp\TagStop(\out\TagResp)^n\out\TagStop \mid m,n\in\Nat }$.
    Note also that $S \xlred{(\inp\TagReq)^n\inp\TagStop}
    \out\TagResp^n.\out\TagStop.\End$ for every $n\in\Nat$.
    Therefore, $S$ allows all the sequences of input actions allowed by $T$,
    after which it performs a subset of the sequences of output actions allowed
    by $T$. At the same time, $S$ allows the anticipation of output actions
    allowed by $T$.
    %
    \eoe
\end{example}

We now define a notion of correctness that adapts, to the new semantics, the
notion of correct composition used in \cref{def:compliance}.
A \emph{session composition} is a pair $\session{S}{T}$ of session types.
Session compositions reduce according to the following rule:
\begin{equation}
  \inferrule{
    S \lred{\co\Action} S'
    \\
    T \lred\Action T'
  }{
    \session{S}{T} \red \session{S'}{T'}
  }
  ~
  \begin{array}{l}
    \outputs(S) \subseteq \inputs(T)
    \\
    \outputs(T) \subseteq \inputs(S)
  \end{array}
\label{eq:red}
\end{equation}

Notice that we overload $\rightarrow$ and $\rightarrow^*$, which were previously
used to denote the transitions over session configurations in
\cref{def:transrel}. Their actual meaning is made clear by the context.
Moreover, note that a session composition is \emph{stuck} (\ie it does not
reduce) if one of the two session types in the composition enables an output for
which the corresponding input is disabled in the other session type. 

\begin{example}\label{ex:diff_semantics_2} Consider the types
$S=\out\TagA.\inp\TagA.\End + \out\TagB.\inp\TagB.\End$ and
$T=\out\TagA.\inp\TagA.\End$ already discussed in \cref{ex:diff_semantics}. We
have observed that their configuration $\cnfg{S}{\epsilon}{T}{\epsilon}$ can
perform transitions according to the relation in \cref{def:transrel}. On the
contrary, the session composition $\session{S}{T}$ is stuck because $\TagB \in
\outputs(S)$ while $\TagB \not\in \inputs(T)$, hence $\outputs(S) \not\subseteq
\inputs(T)$.
\eoe
\end{example}

The definition of correct session composition is the same as
\cref{def:compliance} adapted to the new asynchronous semantics 
of types. Intuitively, in a correct session composition $\session{S}{T}$, 
the only state that is allowed to be stuck is the final one, where both 
session types have reduced to
$\End$ meaning that the session has successfully terminated. Moreover, every
intermediate state $\session{S'}{T'}$ that is reachable from $\session{S}{T}$
must itself be on a path that leads to successful termination of the session. 

\begin{definition}
  \label{def:correctness}
  We say that $\session{S}{T}$ is \emph{correct}, notation $\correct{S}{T}$, if
  $\session{S}{T} \wred \session{S'}{T'}$ implies that
  $\session{S'}{T'} \wred \session{\End}{\End}$.
\end{definition}

A simple example of correct session composition is
$\session{\out\TagA.\inp\TagB.\End}{\inp\TagA.\out\TagB.\End}$, which
admits only one reduction sequence
\[
    \session{\out\TagA.\inp\TagB.\End}{\inp\TagA.\out\TagB.\End}
    \red
    \session{\inp\TagB.\End}{\out\TagB.\End}
    \red
    \session{\End}{\End}
\]
where every intermediate state can reduce further towards successful
termination. Note that a $\inp\TagB$-labelled transition is immediately enabled
by the session type on the left hand side of the initial composition, but the
matching $\out\TagB$-labelled transition becomes enabled only after the first
reduction.
More interestingly, also the composition
$\session{\out\TagA.\inp\TagB.\End}{\out\TagB.\inp\TagA.\End}$ is
correct. In this case the composition may evolve non-deterministically in two
ways, depending on which output message is consumed first:

\begin{center}
    \begin{tikzpicture}[thick]
        \node (A)  {$\session{\out\TagA.\inp\TagB.\End}{\out\TagB.\inp\TagA.\End}$};
        \node (BL) [below left=0.5ex of A] {$\session{\inp\TagB.\End}{\out\TagB.\End}$};
        \node (BR) [below right=0.5ex of A] {$\session{\out\TagA.\End}{\inp\TagA.\End}$};
        \node (C)  [below=2em of A] {$\session{\End}{\End}$};
        \draw[->] (A) -- (BL);
        \draw[->] (A) -- (BR);
        \draw[->] (BL) -- (C);
        \draw[->] (BR) -- (C);
    \end{tikzpicture}
\end{center}

In synchronous theories of (binary) session types duality implies correctness,
to the point that in most cases session correctness is expressed in terms of
duality. This does not hold for the notion of correctness given by Bravetti,
Lange and Zavattaro~\cite{BravettiLangeZavattaro24} because they do not consider
the well-formedness condition \eqref{wf:end}. In fact, every type without $\End$
cannot be correctly composed with any other type, including its dual. We now
show that by restricting to types satisfying also the well-formedness condition
\eqref{wf:end} we recover this property which also plays a key role in the
alternative characterization of subtyping discussed in the next section.

\begin{proposition}
    \label{prop:duality-correctness}
    $\correct{\co{S}}S$ holds for every session type $S$.
\end{proposition}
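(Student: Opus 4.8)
The plan is to exhibit a set of configurations that witnesses correctness directly. Consider the family $\mathcal{C} = \set{ \session{\co{R}}{R} \mid R \text{ a session type} }$ of perfectly \emph{dual} compositions. I would establish two facts about $\mathcal{C}$: (i) it is closed under reduction, \ie $\session{\co{R}}{R} \red \session{U}{V}$ implies $\session{U}{V} \in \mathcal{C}$; and (ii) every element of $\mathcal{C}$ reaches termination, \ie $\session{\co{R}}{R} \wred \session\End\End$. Granting these, correctness of $\session{\co{S}}{S}$ is immediate: any $\session{\co{S}}{S} \wred \session{S'}{T'}$ lands in $\mathcal{C}$ by (i) and induction on the number of steps, whence $\session{S'}{T'} \wred \session\End\End$ by (ii). Two auxiliary observations support both parts. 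First, duality commutes with synchronous transitions: if $R \lred{\pol\Tag_k} R'$ is derived by \SyncTrans, then $\co{R} \lred{\co\pol\Tag_k} \co{R'}$ is derived by \SyncTrans as well, which is immediate from the corecursive definition of $\co{R}$. Second, the side conditions of the reduction rule are always met by a dual pair, since $\outputs(R) \subseteq \inputs(\co{R})$ for every $R$: if $R$ is an output choice then $\outputs(R)$ is exactly the set of its topmost tags, each of which labels a matching input of the input choice $\co{R}$ via \SyncTrans; otherwise $\outputs(R) = \emptyset$.

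The crux is the closure property (i), and the key insight is that \emph{asynchronous transitions never fire in a dual composition}. I would argue this by cases on $R$. If $R = \End$, then $\co{R} = \End$ and $\session\End\End$ is stuck, so closure holds vacuously. If $R = \sum_{i\in I}\out\Tag_i.R_i$ is an output choice, then $\co{R} = \sum_{i\in I}\inp\Tag_i.\co{R_i}$ is an input choice; a reduction synchronizes a label $\Action$ offered by the right component with $\co\Action$ on the left. The input choice $\co{R}$ has no topmost outputs and so cannot use \AsyncTrans, hence its only labels are the $\inp\Tag_k$ from \SyncTrans; this forces $\Action = \out\Tag_k$, a \SyncTrans step on the right as well, yielding $\session{\co{R_k}}{R_k} \in \mathcal{C}$. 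Crucially, although $R$ may also enable asynchronous inputs $\inp\Tag$ via \AsyncTrans, matching them would require the left component to offer the output $\out\Tag$, which an input choice cannot do; so such transitions are never used. The case in which $R$ is an input choice is symmetric. This shows that every reduct of $\session{\co{R}}{R}$ is again a dual pair, in fact $\session{\co{R_k}}{R_k}$ with $R_k$ an immediate subtree of $R$.

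For the termination property (ii), I would use the well-formedness condition~\eqref{wf:end}: the (possibly infinite but regular) tree of $R$ contains a leaf $\End$ at finite depth, so there is a finite path $R = R^{(0)} \lred{\pol_1\Tag_1} R^{(1)} \lred{\pol_2\Tag_2} \cdots \lred{\pol_n\Tag_n} R^{(n)} = \End$ that follows the topmost prefixes and therefore consists only of \SyncTrans steps. Each such step lifts to a reduction $\session{\co{R^{(j-1)}}}{R^{(j-1)}} \red \session{\co{R^{(j)}}}{R^{(j)}}$: the right component performs $\pol_j\Tag_j$, the left component performs the dual $\co\pol_j\Tag_j$ by the duality-commutation observation, and the side conditions hold by the second auxiliary observation. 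Composing these reductions gives $\session{\co{R}}{R} \wred \session\End\End$, as required.

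The main obstacle is precisely the closure argument of the second paragraph: one must notice that in a \emph{dual} composition exactly one of the two components is an input choice and hence offers no output, which is what prevents any anticipated (asynchronous) input from ever being matched. This is what makes the behavior of $\session{\co{R}}{R}$ degenerate into purely synchronous, lock-step communication, despite the asynchronous LTS. Once this is established, both the invariance of $\mathcal{C}$ and the reachability of $\session\End\End$ follow routinely.
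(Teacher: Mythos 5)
Your proof is correct. The paper states this proposition without giving a proof, so there is no official argument to diverge from; your strategy of showing that the set of dual compositions $\session{\co{R}}{R}$ is closed under reduction and that each such composition can reach $\session{\End}{\End}$ — hinging on the key observation that exactly one side of a dual pair is an input choice, hence offers no outputs, so no anticipated (\AsyncTrans) input can ever be matched and the composition evolves purely by \SyncTrans steps into another dual pair — is the natural intended argument, and every step checks out against the LTS and the side conditions of the reduction rule.
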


As discussed in \cref{ex:diff_semantics_2}, the presence of FIFO message queues
in \cref{def:transrel} allows types starting with outputs to emit messages and
store them in the queues without performing any check about the possibility for
the receiver to consume these messages. On the contrary, the new reduction
semantics of session composition \eqref{eq:red} checks that all outputs can be
consumed by the partner. Despite this difference, the two semantics can be
proved ``equivalent''. More specifically, we will prove the following
correspondence result: two session types $S$ and $T$ are \emph{compliant}
(according to \cref{def:compatibility}) if and only if $\correct{S}{T}$
(according to \cref{def:correctness}).

In order to state the correspondence result, it is convenient to introduce some
additional notation allowing us to consider the configuration queues as
sequences of input actions to be executed in order to consume the buffered
messages. More precisely, given the queue $\word = \Tag_1 \Tag_2 \dots \Tag_n$,
we write $?\word$ for the corresponding sequence $?\Tag_1 ?\Tag_2 \dots ?\Tag_n$
of input actions. We also use the notation $S \xlred{?\word} S'$ as a shortcut
for $S \xlred{?\Tag_1 ?\Tag_2 \dots ?\Tag_n} S'$. Notice that the queue $\word$
could be empty, i.e. $\word=\epsilon$, in which case $S \xlred{?\word} S$.

In \cref{ex:diff_semantics} we have seen a computation defined according to
\cref{def:transrel} that cannot be mimicked by our reductions defined by
rule~(\ref{eq:red}). We now prove that the vice versa actually holds. More
precisely we identify a way to relate session compositions to session
configurations, and we prove that each reduction of a session compositions can
be mimicked by a corresponding sequence of transitions 
of configurations.
A session composition $\session{S_1}{T_1}$ corresponds to several configurations
$\cnfg{S}{\word_{S}}{T}{\word_{T}}$ such that $S \xlred{?\word_{S}} S_1$ and $T
\xlred{?\word_{T}} T_1$. That is, $S_1$ can be thought of as the residual of $S$
after all the messages in the queue $\word_{S}$ of incoming messages have been
consumed. Similarly for $T_1$ and $T$.\LP{Per caso è una coupled simulation?}
\GZ{Non lo so.. ma per la submission lascerei stare.. pensiamoci dopo}


\begin{example}\label{ex:corresponding}
Consider  
$S = \out\TagA.\inp\TagB.\End$ and $T = \inp\TagA.\out\TagB.\End$. We have
$\session{S}{T} \rightarrow \session{\inp\TagB.\End}{\out\TagB.\End}$. Similarly,
we have the following configuration transition $\cnfg{S}{\epsilon}{T}{\epsilon}
\rightarrow \cnfg{\inp\TagB.\End}{\epsilon}{T}{\TagA}$. Configuration  $\cnfg{\inp\TagB.\End}{\epsilon}{T}{\TagA}$
is one of those corresponding to
$\session{\inp\TagB.\End}{\out\TagB.\End}$ because of the buffered message
$\Tag$ and the type $T$ such that $T \xlred{?\Tag} \out\TagB.\End$. 
Now we have the
reduction $\session{\inp\TagB.\End}{\out\TagB.\End} \rightarrow
\session{\End}{\End}$.
Configuration $\cnfg{\inp\TagB.\End}{\epsilon}{T}{\TagA}$ can mimick this reduction with the two following transitions
$\cnfg{\inp\TagB.\End}{\epsilon}{T}{\TagA} \rightarrow
\cnfg{\inp\TagB.\End}{\epsilon}{\out\TagB.\End}{\epsilon} \rightarrow
\cnfg{\inp\TagB.\End}{\TagB}{\End}{\epsilon}$. The reached configuration
corresponds to $\session{\End}{\End}$ because of the buffered message $\TagB$
and the type $\inp\TagB.\End$ such that $\inp\TagB.\End \xlred{?\TagB}
\End$.\LP{Questo esempio va descritto meglio.}
\MB{Anch'io pensavo lo stesso, ma forse basta semplicemente scambiare la frase che comincia con ``Now we have the reduction...'' con la frase successiva ``The configuration....''.}
\eoe
\end{example}

\begin{restatable}{lemma}{lemmaSessionsToConfig}
\label{lem:sessions_to_config}
Consider the configuration $\cnfg{S}{\word_{S}}{T}{\word_{T}}$ with $S
\xlred{?\word_{S}} S_1$ and $T \xlred{?\word_{T}} T_1$. If $\session{S_1}{T_1}
\rightarrow \session{S_1'}{T_1'}$ then $\cnfg{S}{\word_{S}}{T}{\word_{T}}
\rightarrow^* \cnfg{S'}{\word_{S'}}{T'}{\word_{T'}}$ with $S'
\xlred{?\word_{S'}} S_1'$ and $T' \xlred{?\word_{T'}} T_1'$.
\end{restatable}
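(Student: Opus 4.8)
The plan is to reduce to a canonical case by symmetry and then argue by induction on the length of the sender's queue. First I would exploit the symmetry of rule~\eqref{eq:red} (and of the rules of \Cref{def:transrel}) to assume, without loss of generality, that the reduction $\session{S_1}{T_1} \rightarrow \session{S_1'}{T_1'}$ is witnessed by an output of $T_1$ that is consumed by $S_1$: there is a tag, written $\Tag_k$, with $T_1 \lred{\out\Tag_k} T_1'$ and $S_1 \lred{\inp\Tag_k} S_1'$. The dual situation, where $S_1$ performs the output, is handled identically using the symmetric configuration rules. Since output labels can only be produced by \SyncTrans (rule \AsyncTrans emits input labels exclusively), $T_1$ must be output-topmost, say $T_1 = \sum_{i\in I} \out\Tag_i.\tilde U_i$ with $k \in I$ and $T_1' = \tilde U_k$. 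I would then prove the claim by induction on $\sizeof{\word_T}$, distinguishing the shape of the sender's current state $T$; note that $T \neq \End$, since otherwise $\word_T = \varepsilon$ would force $T = T_1 = \End$, contradicting $T_1 \lred{\out\Tag_k}$, so $T$ is either output- or input-topmost.

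If $T = \sum_{i\in I} \out\Tag_i.U_i$ is output-topmost, the crucial observation is that an input transition out of an output-topmost type can only be derived by \AsyncTrans, which preserves the top-level output choice. Hence every transition in $T \xlred{?\word_T} T_1$ is asynchronous, so $T$ carries the same output branches as $T_1$, with $U_i \xlred{?\word_T} \tilde U_i$ for all $i \in I$ and in particular $\outputs(T) = \outputs(T_1) \ni \Tag_k$. The configuration can therefore anticipate the emission of $\Tag_k$ while leaving its queue untouched, $\cnfg{S}{\word_S}{T}{\word_T} \rightarrow \cnfg{S}{\word_S\Tag_k}{U_k}{\word_T}$, and it remains to re-establish the correspondence: $S \xlred{?(\word_S\Tag_k)} S_1'$ follows from $S \xlred{?\word_S} S_1 \lred{\inp\Tag_k} S_1'$, while $U_k \xlred{?\word_T} \tilde U_k = T_1'$ is one of the asynchronous residuals just obtained. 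If instead $T = \sum_{j\in J} \inp\Tag_j.V_j$ is input-topmost, then $\word_T \neq \varepsilon$ (otherwise $T = T_1$ would be input-topmost), so I write $\word_T = \Tag_h \word_T''$; the first transition of $T \xlred{?\word_T} T_1$ starts from an input-topmost type and is thus derived by \SyncTrans, giving $h \in J$ and $T \lred{\inp\Tag_h} V_h \xlred{?\word_T''} T_1$. The configuration consumes the corresponding buffered message, $\cnfg{S}{\word_S}{T}{\Tag_h\word_T''} \rightarrow \cnfg{S}{\word_S}{V_h}{\word_T''}$, reaching a configuration that still corresponds to the same composition $\session{S_1}{T_1}$ but with the shorter queue $\word_T''$; the induction hypothesis then supplies the required transition sequence to the target.

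The step I expect to be the main obstacle is the observation underlying the output-topmost case: that from an output-topmost state every pending input is necessarily asynchronous and leaves the available outputs unchanged. This is precisely what lets the sender anticipate the emission of $\Tag_k$ without first draining its queue $\word_T$, and it is what makes the queue-less reductions of~\eqref{eq:red} simulable by the queue-based semantics of \Cref{def:transrel}. Once this is in place, the remaining work --- re-checking both correspondence conditions $S' \xlred{?\word_{S'}} S_1'$ and $T' \xlred{?\word_{T'}} T_1'$ after each configuration step, and verifying that the measure $\sizeof{\word_T}$ strictly decreases in the input-topmost case while the mimicked composition stays fixed --- is routine bookkeeping.
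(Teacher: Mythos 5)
Your proposal is correct and follows essentially the same route as the paper's proof: both hinge on the observation that the party performing the output must reach an output-topmost state after consuming only an initial, input-guarded prefix of its buffered messages (since \AsyncTrans preserves the top-level output branches), so the configuration drains exactly that prefix and then emits the tag. The only differences are cosmetic --- you fix the sender to be the right component where the paper fixes the left, and you peel off buffered messages one at a time by induction on the queue length where the paper exhibits the whole prefix in a single step.
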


We now investigate the possibility for the reduction relation defined by rule
(\ref{eq:red}) to mimick sequences of transitions of corresponding
configurations. This is not true in general, as shown in
\cref{ex:diff_semantics_2}. 
However, we can prove this result under the assumption
that the initial session compositions, or the initial configurations, are
correct.\LP{Ma non è correct in un caso e compliant nell'altro?}
\GZ{le configurazioni sono correct, i tipi dentro le configurazioni
sono compliant ;-)}
 These two cases
are considered in the next two lemmas, the first of which states that a correct
session composition $\session{S}{T}$ can mimick all computations of the
configuration $\cnfg{S}{\epsilon}{T}{\epsilon}$ and the reached
compositions/configurations are related by the  
correspondence relation discussed above and used in \cref{lem:sessions_to_config}.

\begin{restatable}{lemma}{lemmaConfigToSessionsP}
\label{lem:config_to_sessions_P}
Let $\session{S}{T}$ be a correct session composition. If
$\cnfg{S}{\epsilon}{T}{\epsilon} \rightarrow^*
\cnfg{S'}{\word_{S'}}{T'}{\word_{T'}}$ then $\session{S}{T} \rightarrow^*
\session{S_1}{T_1}$ with $S' \xlred{?\word_{S'}} S_1$ and $T'
\xlred{?\word_{T'}} T_1$.
\end{restatable}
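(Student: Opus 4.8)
The plan is to proceed by induction on the length $n$ of the configuration reduction $\cnfg{S}{\epsilon}{T}{\epsilon} \rightarrow^* \cnfg{S'}{\word_{S'}}{T'}{\word_{T'}}$, maintaining the correspondence $S' \xlred{?\word_{S'}} S_1$ and $T' \xlred{?\word_{T'}} T_1$ as the invariant. When $n = 0$ the reached configuration is $\cnfg{S}{\epsilon}{T}{\epsilon}$ itself, and I take $S_1 = S$, $T_1 = T$ with the empty session reduction, since $S \xlred{?\epsilon} S$ and $T \xlred{?\epsilon} T$. For $n > 0$ I split the reduction as $\cnfg{S}{\epsilon}{T}{\epsilon} \rightarrow^* \cnfg{S'}{\word_{S'}}{T'}{\word_{T'}} \rightarrow \cnfg{S''}{\word_{S''}}{T''}{\word_{T''}}$, apply the induction hypothesis to the prefix to obtain $\session{S}{T} \rightarrow^* \session{S_1}{T_1}$ with $S' \xlred{?\word_{S'}} S_1$ and $T' \xlred{?\word_{T'}} T_1$, and then extend this run by at most one session reduction according to the shape of the last configuration step. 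By symmetry it suffices to consider the two kinds of step taken by the left-hand component.

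Consider first a \emph{receive} step, so $S' = \sum_{i\in I}\inp\Tag_i.S'_i$, $\word_{S'} = \Tag_j\word'$ with $j\in I$, and the configuration moves to $\cnfg{S'_j}{\word'}{T'}{\word_{T'}}$. Because $S'$ is input-prefixed, its only $\inp\Tag_j$-transition is the one given by \SyncTrans, namely $S' \lred{\inp\Tag_j} S'_j$; hence, by uniqueness of derivatives (well-formedness \eqref{wf:branch}), the run $S' \xlred{?\word_{S'}} S_1$ factors as $S' \lred{\inp\Tag_j} S'_j \xlred{?\word'} S_1$. Thus $S'_j \xlred{?\word'} S_1$ and $T' \xlred{?\word_{T'}} T_1$ already hold: I keep $\session{S_1}{T_1}$ unchanged and add no reduction, since this step only consumes a buffered message that the queue-less semantics had already abstracted away.

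The interesting case is a \emph{send} step, where $S' = \sum_{i\in I}\out\Tag_i.S'_i$, $j\in I$, and the configuration moves to $\cnfg{S'_j}{\word_{S'}}{T'}{\word_{T'}\Tag_j}$. The structural fact I rely on is that feeding inputs into an output-prefixed type via \AsyncTrans preserves its topmost output structure: from $S' \xlred{?\word_{S'}} S_1$ a short induction on $|\word_{S'}|$ (using that an input label out of an output-prefixed type can only come from \AsyncTrans) yields $S_1 = \sum_{i\in I}\out\Tag_i.S_{1,i}$ with $S'_i \xlred{?\word_{S'}} S_{1,i}$ for every $i$; in particular $S_1 \lred{\out\Tag_j} S_{1,j}$ and $S'_j \xlred{?\word_{S'}} S_{1,j}$. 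Now I invoke correctness. The composition $\session{S_1}{T_1}$ is reachable from the correct $\session{S}{T}$, and it differs from $\session{\End}{\End}$ because $S_1$ is a non-empty output branching; by \cref{def:correctness} it must reduce towards $\session{\End}{\End}$, hence it is not stuck, so the side conditions $\outputs(S_1)\subseteq\inputs(T_1)$ and $\outputs(T_1)\subseteq\inputs(S_1)$ of rule \eqref{eq:red} hold. Since $\Tag_j\in\outputs(S_1)$ we obtain $\Tag_j\in\inputs(T_1)$, i.e.\ $T_1 \lred{\inp\Tag_j} T_1'$ for some $T_1'$, and therefore $\session{S_1}{T_1} \red \session{S_{1,j}}{T_1'}$. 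This is the extending reduction, and the invariant is restored: $S'_j \xlred{?\word_{S'}} S_{1,j}$ as shown, while $T' \xlred{?\word_{T'}} T_1 \lred{\inp\Tag_j} T_1'$, that is $T' \xlred{?(\word_{T'}\Tag_j)} T_1'$.

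I expect the send case to be the main obstacle, and specifically the appeal to correctness: it is exactly the hypothesis that rules out the ``store a message with no receiver'' behaviour allowed by the buffered semantics (cf.\ \cref{ex:diff_semantics_2}), forcing the (global) side conditions of rule \eqref{eq:red} to hold at every reachable non-final composition and thereby letting the queue-less semantics replay the buffering step as a genuine synchronisation that consumes $\Tag_j$. The second, more routine, ingredient is the structure-preservation property of \AsyncTrans, which guarantees that the output branch $j$ chosen by the configuration is still selectable in the corresponding $S_1$ after all pending inputs have been absorbed.
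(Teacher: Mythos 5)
Your proof is correct and follows essentially the same route as the paper's: induction on the length of the configuration run, a no-op in the queue-less semantics for receive steps, and for send steps the observation that \AsyncTrans preserves the topmost output structure of $S_1$ together with an appeal to correctness of the reachable composition $\session{S_1}{T_1}$ to obtain the matching input on $T_1$. Your version merely spells out more explicitly the two points the paper states informally (the factoring of $S' \xlred{?\word_{S'}} S_1$ through the output prefix, and why non-stuckness yields $\Tag_j\in\inputs(T_1)$).
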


A similar correspondence result holds also for initial correct configurations.

\begin{restatable}{lemma}{lemmaConfigToSessionsB}
\label{lem:config_to_sessions_B}
Let $\cnfg{S}{\epsilon}{T}{\epsilon}$ be a correct configuration. Consider the
transition sequence $\cnfg{S}{\epsilon}{T}{\epsilon} \rightarrow^*
\cnfg{S'}{\word_{S'}}{T'}{\word_{T'}}$ and two types $S_1$ and $T_1$ s.t. $S'
\xlred{?\word_{S'}} S_1$ and $T' \xlred{?\word_{T'}} T_1$. If
$\cnfg{S'}{\word_{S'}}{T'}{\word_{T'}} \rightarrow
\cnfg{S''}{\word_{S''}}{T''}{\word_{T''}}$ then $\session{S_1}{T_1}
\rightarrow^* \session{S_1'}{T_1'}$ with $S'' \xlred{?\word_{S''}} S_1'$ and
$T'' \xlred{?\word_{T''}} T_1'$.
\end{restatable}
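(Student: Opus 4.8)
The plan is to fix the single configuration step $\cnfg{S'}{\word_{S'}}{T'}{\word_{T'}} \red \cnfg{S''}{\word_{S''}}{T''}{\word_{T''}}$ and argue by cases on the rule of \cref{def:transrel} that produces it; up to the omitted symmetric rules, either $S'$ performs an input or $S'$ performs an output. Before the case analysis I would isolate the one substantial ingredient, a \emph{queue-consumption} sub-lemma: if $\cnfg{\hat S}{\hat\word_S}{\hat T}{\hat\word_T}$ is correct then $\hat T \xlred{\inp\hat\word_T}$ (and symmetrically $\hat S \xlred{\inp\hat\word_S}$), that is, the whole pending queue can be replayed as a single input run in the LTS. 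Granting it, the input case costs \emph{zero} reductions: here $S' = \sum_{i\in I}\inp\Tag_i.S'_i$, the queue is $\word_{S'} = \Tag_j\word_{S''}$ with $j\in I$, and $S''=S'_j$, $T''=T'$, $\word_{T''}=\word_{T'}$. Since $S'$ is input-headed, the derivation of $S'\xlred{\inp\Tag_j\inp\word_{S''}}S_1$ must begin with \SyncTrans, so $S''=S'(\inp\Tag_j)$ and $S''\xlred{\inp\word_{S''}}S_1$; thus $\session{S_1}{T_1}$ already corresponds to the target configuration and we take $S_1'=S_1$, $T_1'=T_1$.

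In the output case I expect \emph{one} reduction. Now $S'=\sum_{i\in I}\out\Tag_i.S'_i$, $S''=S'_j$, $\word_{S''}=\word_{S'}$, $T''=T'$ and $\word_{T''}=\word_{T'}\Tag_j$. Because $S'$ is output-headed, every letter of $\inp\word_{S'}$ can be consumed only through \AsyncTrans, so $S_1=\sum_{i\in I}\out\Tag_i.S_{1,i}$ with $S'_i\xlred{\inp\word_{S'}}S_{1,i}$; in particular $\outputs(S_1)=\set{\Tag_i\mid i\in I}$ and $S_1\lred{\out\Tag_j}S_{1,j}$ with $S''\xlred{\inp\word_{S''}}S_{1,j}$. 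Setting $S_1'=S_{1,j}$ and $T_1'=T_1(\inp\Tag_j)$, the correspondence $S''\xlred{\inp\word_{S''}}S_1'$ holds by construction, and $T''\xlred{\inp\word_{T''}}T_1'$ will follow from $T'\xlred{\inp\word_{T'}}T_1\lred{\inp\Tag_j}T_1(\inp\Tag_j)$ as soon as we know $T_1\lred{\inp\Tag_j}$. It thus remains to fire $\session{S_1}{T_1}\red\session{S_1'}{T_1'}$, which by \eqref{eq:red} requires $\Tag_j\in\inputs(T_1)$ and the two side conditions $\outputs(S_1)\subseteq\inputs(T_1)$ and $\outputs(T_1)\subseteq\inputs(S_1)$.

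All three facts reduce to the sub-lemma applied after a suitable emission. For $\Tag_i\in\outputs(S_1)=\outputs(S')$, emitting $\Tag_i$ from $S'$ yields the reachable, hence correct, configuration $\cnfg{S'_i}{\word_{S'}}{T'}{\word_{T'}\Tag_i}$, where $\Tag_i$ sits immediately behind $\word_{T'}$; the sub-lemma gives $T'\xlred{\inp\word_{T'}\inp\Tag_i}$, whence $\Tag_i\in\inputs(T_1)$. Ranging over $i\in I$ proves $\outputs(S_1)\subseteq\inputs(T_1)$, and $i=j$ gives $\Tag_j\in\inputs(T_1)$. For the converse inclusion, given $\Tag''\in\outputs(T_1)$ I would let $T'$ consume, by emission-free \SyncTrans steps, the prefix $\word_1$ of $\word_{T'}$ reaching the first output-headed residual $U=T'(\inp\word_1)$ (if none exists then $\outputs(T_1)=\emptyset$ and the inclusion is vacuous); since \AsyncTrans preserves top-level outputs, $\outputs(T_1)=\outputs(U)$, so $U$ can emit $\Tag''$, which lands immediately behind $\word_{S'}$. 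The sub-lemma applied to the resulting correct configuration gives $S'\xlred{\inp\word_{S'}\inp\Tag''}$, i.e. $\Tag''\in\inputs(S_1)$, settling $\outputs(T_1)\subseteq\inputs(S_1)$ and allowing the reduction to fire.

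The main obstacle is the sub-lemma, which must reconcile the \emph{existential} treatment of outputs in the queue semantics---where an output-headed type commits to one branch---with the \emph{universal} premise of \AsyncTrans, demanding that an input be enabled along \emph{every} branch. I would prove it by induction on $\sizeof{\hat\word_T}$. The input-headed case is routine: correctness forces the queue head to be read by a \SyncTrans step, shrinking the queue. The output-headed case is delicate, since the queue does not shrink and one must show that \emph{every} branch $\hat T_k$ can still replay $\hat\word_T$; this is exactly where correctness is indispensable, as committing to a branch from which the head message can no longer be read would yield a run deadlocking with a non-empty queue, contradicting reachability of $\cnfg{\End}{\epsilon}{\End}{\epsilon}$. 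The residual difficulty is the well-foundedness of the nested induction here: I would rely on well-formedness condition \eqref{wf:end}, which rules out the infinite output loops (such as $\hat T=\out\TagA.\hat T$) that alone could block consumption, to extract the decreasing measure---equivalently, the output-only prefix before a synchronising input, which \cref{prop:coind-lts} guarantees to be finite.
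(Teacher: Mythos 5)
Your case split (input step of $S'$ mimicked by zero reductions, output step by one reduction) and your handling of the output case coincide with the paper's proof in structure, and you are in fact \emph{more} careful than the paper on one point: you explicitly discharge the side conditions $\outputs(S_1)\subseteq\inputs(T_1)$ and $\outputs(T_1)\subseteq\inputs(S_1)$ of rule \eqref{eq:red}, whereas the paper only establishes $T_1\xlred{\inp\Tag}T_1'$ and silently treats that as sufficient for the reduction to fire. The real divergence is in how the crucial fact ``correctness forces the queued message to be consumable'' is obtained. The paper proves it by contradiction: if $T_1\nlred{\inp\Tag}$, then there is a \emph{finite} output-only path from $T_1$ to a type that is $\End$ or input-headed without an $\inp\Tag$ branch (this is the contrapositive reading of the coinductive LTS, i.e.\ the complement of $\lred{\inp\Tag}$ is an inductively generated set), and following that path in the configuration semantics yields a reachable configuration that can never consume $\Tag$, contradicting correctness of $\cnfg{S}{\epsilon}{T}{\epsilon}$. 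You instead route everything through a positive queue-consumption sub-lemma proved by induction.

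That sub-lemma is where your plan has a genuine gap. In the output-headed case of its proof the queue length does not decrease, and your proposed secondary measure --- ``the output-only prefix before a synchronising input, which \cref{prop:coind-lts} guarantees to be finite'' --- does not exist in general: \cref{prop:coind-lts} guarantees that \emph{some} finite output prefix reaches a \SyncTrans, but \AsyncTrans requires the input to be enabled along \emph{every} output branch, and condition \eqref{wf:end} does not exclude infinite output-only branches. For instance, with $\hat T=\out\TagA.\hat T+\out\TagB.\inp\TagC.\End$ and queue $\TagC$ the configuration is correct and $\hat T\lred{\inp\TagC}$ does hold, but only via an \emph{infinite} derivation; no induction on output depth can build it. To repair the sub-lemma you must either construct the transition coinductively (exhibit the candidate relation of all pairs arising from correct configurations with the relevant message at the head of the consumable part of the queue, and show it is closed under \SyncTrans/\AsyncTrans, using correctness for closure in the output-headed case), or abandon the positive induction and argue by contradiction exactly as the paper does, extracting the finite witness path from the \emph{failure} of the coinductive transition rather than a finite bound from its success. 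With that repair the rest of your argument goes through.
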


We can now conclude with the main result of this section,
that is the correspondence between compliance (\cref{def:compliance}) and correctness (\cref{def:correctness}).

\begin{restatable}{theorem}{theoRedTrans}
\label{th:red_trans}
We have that $S$ and $T$ are compliant (\cref{def:compliance})
iff $\correct{S}{T}$
(\cref{def:correctness}).
\end{restatable}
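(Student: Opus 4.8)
The plan is to prove the two implications separately, in each direction transporting a computation from one semantics to the other by iterating the correspondence lemmas of this section, and then closing the argument either by forcing the terminal configuration or by draining the residual queues. Throughout I would use the trivial correspondence $S \xlred{?\epsilon} S$ (so $\session{S}{T}$ corresponds to $\cnfg{S}{\epsilon}{T}{\epsilon}$) together with the fact that a correspondence to $\session{\End}{\End}$ forces both session types to be exactly $\End$: since $\End$ has no outgoing transitions, $\End \xlred{?\word} U$ holds only for $\word = \epsilon$ and $U = \End$.

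For the direction from compliance to correctness, I would assume $\cnfg{S}{\epsilon}{T}{\epsilon}$ is a correct configuration and take any $\session{S}{T} \wred \session{S'}{T'}$. Iterating \Cref{lem:sessions_to_config} along this reduction (an induction on its length that re-establishes the correspondence invariant at each step) yields a configuration computation $\cnfg{S}{\epsilon}{T}{\epsilon} \rightarrow^* \cnfg{\hat S}{\word_{\hat S}}{\hat T}{\word_{\hat T}}$ with $\hat S \xlred{?\word_{\hat S}} S'$ and $\hat T \xlred{?\word_{\hat T}} T'$. Correctness of the initial configuration then gives $\cnfg{\hat S}{\word_{\hat S}}{\hat T}{\word_{\hat T}} \rightarrow^* \cnfg{\End}{\epsilon}{\End}{\epsilon}$, and transporting this computation back by iterating \Cref{lem:config_to_sessions_B} (whose hypotheses hold, since the initial configuration is correct and $\cnfg{\hat S}{\word_{\hat S}}{\hat T}{\word_{\hat T}}$ is reachable from it, with the correspondence above to $\session{S'}{T'}$) produces $\session{S'}{T'} \wred \session{S''}{T''}$ whose endpoint corresponds to $\cnfg{\End}{\epsilon}{\End}{\epsilon}$; hence $S'' = T'' = \End$ and $\correct{S}{T}$ follows.

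For the converse, from correctness to compliance, I would assume $\correct{S}{T}$ and take any $\cnfg{S}{\epsilon}{T}{\epsilon} \rightarrow^* \cnfg{S'}{\word_{S'}}{T'}{\word_{T'}}$. Since $\session{S}{T}$ is correct, \Cref{lem:config_to_sessions_P} applies and gives $\session{S}{T} \wred \session{S_1}{T_1}$ with $S' \xlred{?\word_{S'}} S_1$ and $T' \xlred{?\word_{T'}} T_1$; then $\correct{S}{T}$ yields $\session{S_1}{T_1} \wred \session{\End}{\End}$. Iterating \Cref{lem:sessions_to_config} along this last reduction, but now starting from $\cnfg{S'}{\word_{S'}}{T'}{\word_{T'}}$, produces a computation to some $\cnfg{\hat S}{\word_{\hat S}}{\hat T}{\word_{\hat T}}$ with $\hat S \xlred{?\word_{\hat S}} \End$ and $\hat T \xlred{?\word_{\hat T}} \End$, after which it only remains to drain the residual queues down to $\cnfg{\End}{\epsilon}{\End}{\epsilon}$.

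I expect this last queue-draining step to be the main obstacle, since it is the only part not delivered directly by the lemmas. Concretely, I would show by induction on $|\word_{\hat S}|$ that whenever $\hat S \xlred{?\word_{\hat S}} \End$ the configuration can consume $\hat S$'s own incoming queue, so that $\cnfg{\hat S}{\word_{\hat S}}{\hat T}{\word_{\hat T}} \rightarrow^* \cnfg{\End}{\epsilon}{\hat T}{\word_{\hat T}}$, and symmetrically $\cnfg{\End}{\epsilon}{\hat T}{\word_{\hat T}} \rightarrow^* \cnfg{\End}{\epsilon}{\End}{\epsilon}$. The crux is that no intermediate type along $\hat S \xlred{?\word_{\hat S}} \End$ can exhibit a top-level output: by \AsyncTrans an input transition out of $\sum_i \out\Tag_i.S_i$ preserves those very same output prefixes, so they would persist forever and $\End$ could never be reached through input labels alone. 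Hence every input transition is derived by \SyncTrans from a top-level input branching whose head matches the front of the queue, which is exactly the shape consumed by the configuration receive rule of \Cref{def:transrel}; each such transition is therefore mirrored by one queue-consuming configuration step and the induction closes. Chaining the two drainings yields $\cnfg{\End}{\epsilon}{\End}{\epsilon}$, as compliance requires, and everything else reduces to bookkeeping over the correspondence lemmas.
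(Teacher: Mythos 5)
Your proposal is correct and follows essentially the same route as the paper's proof: both directions are handled by iterating \Cref{lem:sessions_to_config}, \Cref{lem:config_to_sessions_P} and \Cref{lem:config_to_sessions_B} to transport computations between the two semantics, closing with the observation that $\End$ admits no input transitions in one direction and with queue draining in the other. Your only addition is to spell out the queue-draining step (that \AsyncTrans would preserve top-level outputs forever, so every type along $\hat S \xlred{?\word_{\hat S}} \End$ must be an input sum matched by the configuration receive rule), which the paper merely asserts; this is a correct justification of that implicit step.
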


\section{Fair Asynchronous Session Subtyping}
\label{sec:subtyping}

In this section we present the main contribution of the paper, namely a sound
and complete characterization of the refinement relation defined by Bravetti,
Lange and Zavattaro~\cite{BravettiLangeZavattaro24} recalled in
\cref{def:refine}. In light of \cref{th:red_trans}, such refinement can be
alternatively defined using the new labelled transition system for asynchronous
session types defined in \Cref{sec:session-types} and the notion of correctness
in \cref{def:correctness}:
\begin{quote}
  \text{$S \subt T$ if and only if $\correct{R}{T}$ implies $\correct{R}{S}$ for every $R$}
\end{quote}

This definition corresponds to the definition of a subtyping relation for
session following \emph{Liskov's substitution principle}~\cite{LiskovWing94},
where the property to be preserved is session correctness. 

We start the characterization of $\subt$ by introducing the following coiductive
relation which, as we will see later, turns out to be an overapproximation of
$\subt$.

\begin{definition}
  \label{def:csubt}
  We say that $\srel$ is an \emph{asynchronous subtyping relation} if $(S, T)
  \in \srel$ implies:\LP{Tra correctness, compliance, refinement che poi diventa
  fair asynchronous session subtyping, asynchronous subtyping relation, c'è da
  perdersi. Secondo me ci fanno un mazzo così sulla terminologia.}
  \begin{enumerate}
    \item\label{csubt-end} if $T = \End$, then $S = \End$;
    \item\label{csubt-inp} if $T \lred{\inp\Tag} T'$, then $S
    \lred{\inp\Tag} S'$ and $(S', T') \in \srel$;
    \item\label{csubt-out} if $\outputs(T)$, then $\outputs(S)$ and $S
    \lred{\out\Tag} S'$ implies $T \lred{\out\Tag} T'$ and $(S',
    T') \in \srel$.
  \end{enumerate}
\end{definition}

The clauses of \Cref{def:csubt} specify some expected requirements for a session
subtyping relation: a terminated session type $\End$ is in relation with
itself only; every input action in a supertype $T$ should be matched by the same
input action allowed in the subtype $S$ and the corresponding continuations
should still be related by subtyping; dually, every output action allowed by a
subtype $S$ should be matched by an output action allowed by the supertype $T$
and the corresponding continuations should still be related by subtyping.

Interestingly, these clauses are essentially those found in analogous
characterizations of \emph{synchronous subtyping for session
types}~\cite{GayHole05}, modulo the different orientation of $\subt$ due to our
viewpoint based on the substitution of processes rather than on the substitution
of channels.\footnote{The interested reader may refer to Gay~\cite{Gay16} for a
comparison of the two viewpoints.}
However, there are a couple of quirks that separate \Cref{def:csubt} from
sibling definitions.
First of all, recall that transitions like $T \lred{\inp\Tag} T'$ and $S
\lred{\inp\Tag} S'$ may concern ``deep'' input actions enabled by $T$
and $S$, even when $T$ and $S$ begin with output actions. Hence, clause
\eqref{csubt-inp} may allow pairs of session types to be related even if they do
not start with the same kind of actions. A simple instance of this fact is given
by the session types $\out\TagA.\inp\TagB.S$ and $\inp\TagB.\out\TagA.S$
discussed earlier, for which we have $\out\TagA.\inp\TagB.S \lred{\inp\TagB}
\out\TagA.S$ and $\inp\TagB.\out\TagA.S \lred{\inp\TagB} \out\TagA.S$.
Another novelty is that the clauses \eqref{csubt-inp} and \eqref{csubt-out} are
no longer mutually exclusive, since it may happen that $\outputs(T)$ holds for a
$T$ that also allows (deep) input transitions. For example, any asynchronous
subtyping relation that includes the pair $(\out\TagA.\inp\TagB.S,
\out\TagA.\inp\TagB.S)$ must also include the pair $(\inp\TagB.S, \inp\TagB.S)$
because of clause \eqref{csubt-out} as well as the pair $(\out\TagA.S,
\out\TagA.S)$ because of clause~\eqref{csubt-inp}.

\begin{example}
  \label{ex:oracle-batch-types}
  As a first example of a non trivial asynchronous subtyping relation
  we consider the session types $S$ and $T$ presented in \Cref{ex:satellite}
  which formalizes the two possible communication protocols for the
  ground station discussed in \Cref{sec:introduction}.
  We report here the definitions of the types for reader's convenience:
  $S = \out\TagTC.S + \out\TagDONE.S'$ where $S' =
  \inp\TagTM.S' + \inp\TagOVER.\End$ 
  and
  $T =  \inp\TagTM.T + \inp\TagOVER.T'$ where $T' = \out\TagTC.T' + 
  \out\TagDONE.\End$.  
  %
  Both $S$ and $T$ allow sending an arbitrary number of commmands followed by a
  single $\TagDONE$. Both $S$ and $T$ allow receiving an arbitrary number of
  telemetries followed by a single $\TagOVER$. The difference is that in $T$ the
  commands can only be sent after all the telemetries have been received, whereas
  in $S$ the commands can be sent at any time, even before the first telemetry
  has been received.
  To see that $S$ and $T$ can be related by an asynchronous subtyping, observe
  that we have
  \[
    \begin{prooftree}
      \[
        \vdots
        \justifies
        S \xlred{\inp\TagTM} S
      \]
      \[
        \justifies
        S' \xlred{\inp\TagTM} S'
        \using\SyncTrans
      \]
      \justifies
      S \xlred{\inp\TagTM} S
      \using\AsyncTrans
    \end{prooftree}
    \text{\qquad and\qquad}
    \begin{prooftree}
      \[
        \vdots
        \justifies
        S \xlred{\inp\TagOVER} T'
      \]
      \[
        \justifies
        S' \xlred{\inp\TagOVER} \End
        \using\SyncTrans
      \]
      \justifies
      S \xlred{\inp\TagOVER} T'
      \using\AsyncTrans
    \end{prooftree}
  \]
  therefore $\srel \eqdef \set{ (S, T), (T', T'), (\End, \End) }$ is an
  asynchronous subtyping relation.
  %
  At the same time, no asynchronous subtyping relation includes the pair $(T,S)$
  since it violates the clause \eqref{csubt-out}: $\outputs(S)$ holds whereas
  $\outputs(T)$ does not. Indeed, the session type $\co{S}$ relies on those
  early outputs performed by $S$ in order to make progress and
  $\session{\co{S}}{T}$ is stuck.
  \eoe
\end{example}

\begin{example}
  \label{ex:stream-batch-subt}
    Consider again the types $S = \inp\TagReq.\out\TagResp.S +
    \inp\TagStop.\out\TagStop.\End$ and $T = \inp\TagReq.T + \inp\TagStop.T'$
    where $T' = \out\TagResp.T' + \out\TagStop.\End$ respectively modeling the
    protocols of the stream and batch processing servers defined in
    \Cref{ex:stream-batch-types}. To show that $S$ and $T$ are related by an
    asynchronous subtyping relation it is sufficient to show that the relation
    \[
      \srel \eqdef \set{(\out\TagResp^n.S,T), 
      (\out\TagResp^n.\out\TagStop.\End,T') \mid n \in \Nat } \cup \set{ (\End, \End) }
    \]
    is an asycnhronous subtyping because $(S,T) \in \srel$.
    Pairs $(\out\TagResp^n.S,T)$ are necessary to deal with the transition $T
    \xlred{\inp\TagReq} T$ which is matched by $\out\TagResp^n.S
    \xlred{\inp\TagReq} \out\TagResp^{n+1}.S$.
    The pairs $(\out\TagResp^n.\out\TagStop.\End,T')$ account for the transition
    $T \xlred{\inp\TagStop} T'$ which is matched by $\out\TagResp^n.S
    \xlred{\inp\TagReq} \out\TagResp^n.\out\TagStop.\End$.
    Notice that in pairs of the form $(\out\TagResp^{n+1}.\out\TagStop.\End,T')$
    the first type has a transition $\out\TagResp^{n+1}.\out\TagStop.\End
    \xlred{\out\TagResp} \out\TagResp^n.\out\TagStop.\End$ matched by $T'
    \xlred{\out\TagResp} T'$.
    Finally, $\out\TagStop.\End \xlred{\out\TagStop} \End$ is matched by $T'
    \xlred{\out\TagStop} \End$.
    \eoe
\end{example}

We now present a first result relating $\subt$ and asynchronous subtyping. We
have that $\subt$ satisfies the clauses of \Cref{def:csubt} hence it is an
asynchronous sbtyping relation.

\begin{restatable}{theorem}{lemmaSemanticToCoinductive}
    \label{lem:semantic-to-coinductive}
    ${\subt}$ is an asynchronous subtyping relation.
\end{restatable}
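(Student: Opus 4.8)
I would unfold the semantic definition $S \subt T$ (i.e.\ $\correct{R}{T}$ implies $\correct{R}{S}$ for every $R$) and verify the three clauses of \Cref{def:csubt} one at a time, each time exhibiting a suitable \emph{tester} $R$ that is correct with $T$ (so that $\correct{R}{S}$ comes for free from $S \subt T$) and from which the required information about $S$ can be read off. Two elementary facts drive every case. First, correctness is preserved by reduction: if $\correct{A}{B}$ and $\session{A}{B} \red \session{A'}{B'}$, then $\correct{A'}{B'}$, since every state reachable from $\session{A'}{B'}$ is reachable from $\session{A}{B}$. Second, a correct composition that differs from $\session{\End}{\End}$ cannot be stuck, so it admits at least one reduction, which forces \emph{both} side conditions of rule~\eqref{eq:red} to hold at that state; reading off an inclusion such as $\outputs(A) \subseteq \inputs(B)$ from a correct $\session{A}{B}$ with $A \ne \End$ is the main device I use to extract structure from $S$. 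I also lean on \Cref{prop:duality-correctness}, which certifies the correctness of testers built from duals.

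The technical core is a single auxiliary lemma about \emph{input testers}: if $U \lred{\inp\Tag} U'$ and $\correct{R'}{U'}$, then $\correct{\out\Tag.R'}{U}$. I would prove it coinductively, showing that the family $\set{ \session{\out\Tag.R'}{U} \mid U \lred{\inp\Tag} U',\ \correct{R'}{U'} }$ only reaches states from which $\session{\End}{\End}$ is still reachable. The crux is the interplay with asynchrony: if $\outputs(U) \ne \emptyset$ then, since outputs arise only from \SyncTrans, $U = \sum_i \out\Tag_i.U_i$ and the transition is an \AsyncTrans step with $U' = \sum_i \out\Tag_i.U_i'$ and $U_i \lred{\inp\Tag} U_i'$; a reduction in which $U$ emits some $\Tag_i$ (consumed by the deep input of the tester) leads to $\session{\out\Tag.R'(\inp\Tag_i)}{U_i}$, which has exactly the same shape, because $U_i \lred{\inp\Tag} U_i'$ and $\correct{R'(\inp\Tag_i)}{U_i'}$ follows from $\correct{R'}{U'}$ by preservation under reduction. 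The key point is that the tester always keeps $\out\Tag$ at the top and $\outputs(U) \subseteq \inputs(R')$ holds throughout, so from every reachable state one can instead fire that output against the partner's $\inp\Tag$, reaching a composition $\session{R'(\cdots)}{U'(\cdots)}$ that is already known to be correct and hence reaches $\session{\End}{\End}$.

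With these ingredients the three clauses follow. For \eqref{csubt-end}, take $R = \End$: from $\correct{\End}{\End}$ and $T = \End$ we get $\correct{\End}{S}$, and since $\session{\End}{S}$ is stuck it must equal $\session{\End}{\End}$, whence $S = \End$. For \eqref{csubt-out}, assume $\outputs(T)$, so $T = \sum_i \out\Tag_i.T_i$, and take the tester $\co{T}$: correctness of $\session{\co{T}}{S}$ rules out $\outputs(S) = \emptyset$ (else the composition is stuck and differs from $\session{\End}{\End}$) and forces $\outputs(S) \subseteq \inputs(\co{T}) = \outputs(T)$, giving both $\outputs(S)$ and a matching $T \lred{\out\Tag} T'$ for every $S \lred{\out\Tag} S'$; to obtain $S(\out\Tag) \subt T(\out\Tag)$, given any $R'$ with $\correct{R'}{T(\out\Tag)}$ I use the finite tester $\inp\Tag.R' + \sum_{i \ne k} \inp\Tag_i.\co{T_i}$ (with $\Tag = \Tag_k$), which is correct with $T$ by \Cref{prop:duality-correctness} on the off branches and by assumption on the $k$-th one, and then fire the output $\Tag$ of $S$. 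For \eqref{csubt-inp}, assume $T \lred{\inp\Tag} T'$; the input-tester lemma gives $\correct{\out\Tag.\co{T'}}{T}$, hence $\correct{\out\Tag.\co{T'}}{S}$, and since this tester is not $\End$ and has $\Tag$ as its only pending output, non-stuckness forces $\Tag \in \inputs(S)$, i.e.\ $S \lred{\inp\Tag} S'$ with $S' = S(\inp\Tag)$; for $S' \subt T'$, given $R'$ with $\correct{R'}{T'}$ the lemma yields $\correct{\out\Tag.R'}{T}$, hence $\correct{\out\Tag.R'}{S}$, and firing $\out\Tag$ against $S \lred{\inp\Tag} S'$ gives $\correct{R'}{S'}$.

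I expect the main obstacle to be the input clause \eqref{csubt-inp}, and specifically the input-tester lemma, because the deep input transitions introduced by \AsyncTrans mean the tester $\out\Tag.R'$ must stay correct with $U$ while $U$ first emits an \emph{a priori unbounded} sequence of outputs before $\inp\Tag$ is actually consumed. Getting the coinductive invariant right, and checking that the pending $\out\Tag$ can always be fired to fall back onto a composition already known to be correct, is where the care is needed.
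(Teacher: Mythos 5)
Your proposal is correct and follows essentially the same route as the paper: each clause of \Cref{def:csubt} is verified by exhibiting a tester correct with $T$ (namely $\End$ for clause 1, $\out\Tag.R'$ for clause 2, and an input-guarded sum of continuations --- the paper's $\sum_{i\in I}\inp\Tag_i.R_i$, which coincides with your $\inp\Tag.R' + \sum_{i\ne k}\inp\Tag_i.\co{T_i}$ --- for clause 3) and reading off the required structure of $S$ from non-stuckness and preservation of correctness under reduction. Your explicit coinductive input-tester lemma is in fact more careful than the paper, which justifies $\correct{\out\Tag.R'}{T}$ by claiming the composition has a unique transition --- a claim that fails when $T$ begins with outputs and the $\inp\Tag$-transition is a deep one --- so the extra care you flag there is warranted rather than redundant.
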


As we have anticipated, the largest asynchronous subtyping relation contains
pairs of types that are not in subtyping relation, as illustrated by the next
example.

\begin{example}
  \label{ex:csubt-not-subt}
  Consider two variants of the batch processing server whose responses may be
  positive ($\TagYes$) or negative ($\TagNo$). Their behavior is described by
  the session types $S = \inp\TagReq.\out\TagNo.S +
  \inp\TagStop.\out\TagStop.\End$ and $T = \inp\TagReq.(\out\TagYes.T +
  \out\TagNo.T) + \inp\TagStop.\out\TagStop.\End$. The server behaving as
  $S$ always responds $\TagNo$. The server behaving as $T$ may respond in either
  way.
  Let $T^0 \eqdef T$ and $T^{n+1} \eqdef \out\TagYes.T^n + \out\TagNo.T^n$.
  It is relatively easy to see that $\srel \eqdef \set{ (\out\TagNo^n.S, T^n)
  \mid n\in\Nat } \cup \set{ (\out\TagStop.\End, \out\TagStop.\End),
  (\End, \End) }$ is an asynchronous subtyping relation, and yet $S\!
  \not\subt T$.\LP{Controllare questa $\srel$ mortale.}
  Indeed, consider the session type $R \eqdef \out\TagReq.R'$ where $R' =
  \inp\TagYes.\out\TagStop.\End + \inp\TagNo.R$. Then $\correct{R}{T}$ holds
  but $\correct{R}{S}$ does not. Basically, (a process behaving as) $R$ insists
  on sending requests until it receives a positive response. At that point, it
  is satisfied and quits the interaction. However, only (a process behaving as)
  $T$ is willing to send a positive response, whereas (a process behaving as)
  $S$ is not.
  \eoe
\end{example}

In general, a purely coinductive characterization based on the clauses of
\Cref{def:csubt} allows us to capture the \emph{safety-preserving properties} of
subtyping, those concerning the admissibility of interactions, because they are
supported by an invariant argument. However, the very same clauses do not say
anything about the \emph{liveness-preserving properties} of subtyping, those
concerning the reachability of successful termination, which must be supported
by a well-foundedness argument.
In order to find a sound characterization of subtyping, we have to resort to
\emph{bounded coinduction}~\cite{AnconaDagninoZucca17,Dagnino19}, a particular
case of coinductive definition where we consider the largest relation that
satisfies the clauses in \Cref{def:csubt} and that is also \emph{included in an
inductively defined relation} that preserves the reachability of successful
termination. We dub this inductive relation \emph{convergence} and we denote it
by $\conv$.

Before looking at the formal definition of $\conv$, let us try to acquire a
rough understanding of convergence by recalling \Cref{ex:csubt-not-subt}, in
which we have identified two session types $S$ and $T$ that satisfy the clauses
of \Cref{def:csubt} but are not related by subtyping. In that example, we can
see that the traces that lead $S$ to termination are a subset of the traces that
lead $T$ to termination. This is a general property that holds every time $S$
describes a more deterministic (or ``less demanding'') behavior compared to $T$.
However, in the specific case of \Cref{ex:csubt-not-subt}, the traces that lead
$S$ to termination are \emph{too few}, to the point that there exists a
potential partner (described by $R$ in the example) that terminates solely
relying on those traces of $T$ that have disappeared in $S$.
Contrast $S$ and $T$ those with $S' = \out\TagA.\out\TagA.S' +
\out\TagB.\End$ and $T' = \out\TagA.T' + \out\TagB.\End$. Also in this
case $S'$ is more deterministic than $T'$ and some traces that lead $T'$ to
termination have disappeared in $S'$. Specifically, every trace of the form
$\ActionsA = \out\TagA^{2n+1}\out\TagB$ is allowed by $T'$ but not by $S'$.
However, it is also the case that for each of these traces we can find another
trace $\out\TagA^{2n}\out\TagB$ that shares a common prefix with $\ActionsA$ and
that leads both $S'$ and $T'$ to termination \emph{after an output action}
$\out\TagB$. Since the behavior described by $T'$ is always able to autonomously
veer the interaction towards termination after \emph{any number} of $\TagA$
messages, any session type $R$ that can be correctly combined with $T'$ must be
prepared to receive this $\TagB$ message at any time, meaning that termination
is preserved also if $R$ is combined with $S'$.

Let us now define $\conv$ formally.

\begin{definition}
  \label{def:conv}
  \emph{Convergence} is the relation $\conv$ inductively defined by the rule:
  \[
    \inferrule{
      \forall\ActionsA\in\tr(T)\setminus\tr(S):
      \exists\ActionsB\leq\ActionsA, \TagA:
      S(\ActionsB\out\TagA) \conv T(\ActionsB\out\TagA)
    }{
      S \conv T
    }
  \]
\end{definition}

Intuitively, a derivation for the relation $S \conv T$ measures the
``difference'' between $S$ and $T$ in terms of allowed traces.
When $\tr(T) \subseteq \tr(S)$ there is essentially no difference between $S$
and $T$, so the rule that defines $\conv$ has no premises and turns into an
axiom.
When $\tr(T) \not\subseteq \tr(S)$, then for every trace $\ActionsA$ of actions
allowed by $T$ but not by $S$ there is a prefix $\ActionsB \leq \ActionsA$
shared by both $S$ and $T$ and followed by an output action $\out\MessageType$
that leads to a pair of session types $S(\ActionsB\out\MessageType)$ and
$T(\ActionsB\out\MessageType)$ that are ``slightly less different'' from each
other in terms of allowed traces. Indeed, the derivation for
$S(\ActionsB\out\MessageType) \conv T(\ActionsB\out\MessageType)$ must be
strictly smaller than that for $S \conv T$, or else $S \conv T$ would not be
inductively derivable.

Note that $\conv$ is trivially reflexive, but apart from that it is generally
difficult to understand when two session types are related by convergence
because of the \emph{non-local} flavor of the relation. However, it is easy to
see that any two session types related by an asynchronous subtyping where at
least one of them is finite are also related by convergence.

\begin{proposition}
  \label{prop:finite-conv}
  If\/ $\srel$ is an asynchronous subtyping relation such that $(S, T) \in
  \srel$ and at least one among $S$ and $T$ is finite, then $S \conv T$.
\end{proposition}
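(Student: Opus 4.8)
The plan is to prove the statement by well-founded induction on the size of the finite component. Write $\sizeof{S}$ for the number of nodes of a finite session type $S$ (with $\sizeof{\End}=1$ and $\sizeof{\sum_{i\in I}\pol\Tag_i.S_i}=1+\sum_{i\in I}\sizeof{S_i}$), and take the measure of the pair $(S,T)$ to be $\sizeof{S}$ when $S$ is finite and $\sizeof{T}$ otherwise (in which case $T$ is finite). A preliminary observation I would establish is that derivatives of a finite type are finite and that every \emph{non-empty} transition sequence strictly decreases the size: by a simultaneous induction on the LTS derivations, a single application of \SyncTrans replaces a type by a proper subtree, while \AsyncTrans replaces each branch continuation $S_i$ by a strictly smaller $S_i(\inp\Tag)$, so in both cases the size drops. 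This guarantees that the measure strictly decreases whenever we follow a non-empty path in the tracked (finite) type.

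The core of the argument is, for a fixed trace $\ActionsA\in\tr(T)\setminus\tr(S)$, the construction of a prefix $\ActionsB\leq\ActionsA$ and a tag $\TagA$ such that $\ActionsB\out\TagA$ is a common, non-empty path preserving membership in $\srel$. I would build $\ActionsB$ incrementally, following $\ActionsA$ in both $S$ and $T$ while maintaining the invariant $(S(\ActionsB),T(\ActionsB))\in\srel$, starting from $\ActionsB=\varepsilon$. At an input step $\inp\Tag$, clause~\eqref{csubt-inp} lets $S$ mirror the move and preserves the invariant; at an output step $\out\Tag$, I distinguish whether $S(\ActionsB)$ offers that same output (then clause~\eqref{csubt-out} preserves the invariant and I continue) or not. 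Since $\ActionsA\in\tr(T)$ but $\ActionsA\notin\tr(S)$, the construction cannot follow all of $\ActionsA$: otherwise I would reach $(S(\ActionsA),T(\ActionsA))\in\srel$ with $T(\ActionsA)=\End$, whence $S(\ActionsA)=\End$ by clause~\eqref{csubt-end}, contradicting $\ActionsA\notin\tr(S)$. Hence the construction halts at an output step where $S(\ActionsB)$ lacks the required output; crucially, $\outputs(T(\ActionsB))$ holds, so clause~\eqref{csubt-out} forces $\outputs(S(\ActionsB))$ and provides some $\TagA\in\outputs(S(\ActionsB))$. Applying clause~\eqref{csubt-out} once more to this output yields $T(\ActionsB)\lred{\out\TagA}T(\ActionsB\out\TagA)$ together with $(S(\ActionsB\out\TagA),T(\ActionsB\out\TagA))\in\srel$.

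With the descent in hand I would close the induction. The pair $(S(\ActionsB\out\TagA),T(\ActionsB\out\TagA))$ lies in $\srel$, its tracked component is still finite and, since $\ActionsB\out\TagA$ is non-empty, its measure is strictly smaller; the induction hypothesis therefore gives $S(\ActionsB\out\TagA)\conv T(\ActionsB\out\TagA)$. As $\ActionsA$ was an arbitrary element of $\tr(T)\setminus\tr(S)$, this furnishes exactly the premises required by the rule of \Cref{def:conv}, so $S\conv T$ follows. When $T$ is finite the set $\tr(T)$ is finite and the resulting derivation is finitely branching; when only $S$ is finite the derivation may branch infinitely, but each branch strictly decreases $\sizeof{S}$, so the derivation still has finite height and is a legitimate inductive derivation.

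I expect the main obstacle to be the descent construction, and in particular the careful use of output covariance (clause~\eqref{csubt-out}): that clause is used twice, first to guarantee that $S(\ActionsB)$ offers \emph{some} output once $T(\ActionsB)$ does, and then to guarantee that this output is matched by $T$ and preserves $\srel$. A secondary delicate point is justifying that the possibly infinitely branching tree produced when only $S$ is finite is a well-founded derivation of the inductively defined relation $\conv$; this rests on the fact that the measure is a natural number that strictly decreases along every path.
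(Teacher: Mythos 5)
Your proof is correct and follows the same route as the paper, which simply states ``a simple induction on either $S$ or $T$, depending on which one is finite''; your measure on the finite component, the strict size decrease along non-empty transition sequences, and the descent via clauses \eqref{csubt-inp}, \eqref{csubt-out} and \eqref{csubt-end} are exactly the details that one-liner leaves implicit. The remarks on the possibly infinitely branching but finite-height derivation of $\conv$ are a welcome extra precaution, not a deviation.
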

\begin{proof}
  A simple induction on either $S$ or $T$, depending on which one is finite.
\end{proof}



When both $S$ and $T$ are infinite, understanding whether $S \conv T$ holds or
not may require some non-trivial reasoning on their traces. Let us work out a
few examples.\LP{Ricontrollare questi esempi con 400 occhi.}

\begin{example}
  Let $S = \out\Tag[a].\out\Tag[a].S + \out\Tag[b].\End$ and $T =
  \out\Tag[a].T + \out\Tag[b].\End$.
  To prove that $S \conv T$, consider $\ActionsA \in \tr(T) \setminus \tr(S)$.
  It must be the case that $\ActionsA = \out\Tag[a]^{2n+1}\out\Tag[b]$ for some
  $n$.
  Take $\ActionsB \eqdef \out\Tag[a]^{2n}$. Now $S(\ActionsB\out\Tag[b]) =
  T(\ActionsB\out\Tag[b]) = \End$, hence $S(\ActionsB\out\Tag[b]) \conv
  T(\ActionsB\out\Tag[b])$ by reflexivity of $\conv$.
  \eoe
\end{example}

\begin{example}
  Consider again $S = \inp\TagReq.\out\TagNo.S +
  \inp\TagStop.\out\TagStop.\End$ and $T = \inp\TagReq.(\out\TagYes.T +
  \out\TagNo.T) \branch \inp\TagStop.\out\TagStop.\End$ from
  \Cref{ex:csubt-not-subt}.
  Since we conjecture $S \not\conv T$ we can focus on a particular $\ActionsA
  \in \tr(T) \setminus \tr(S)$, namely $\ActionsA =
  \inp\TagReq\out\TagYes\ActionsA' \in \tr(T) \setminus \tr(S)$.
  Note that $\outputs(S)$ does not hold, so the only prefix of $\ActionsA$ that
  we may reasonably consider is $\ActionsB \eqdef \inp\TagReq$ and the only
  output action that $S$ may perform after such prefix is $\out\TagNo$.
  But then $S(\ActionsB\out\TagNo) = S$ and $T(\ActionsB\out\TagNo) = T$, hence
  it is not possible to build a well-founded derivation for $S \conv T$.
  We conclude $S \not\conv T$.
  \eoe
\end{example}

\begin{example}
  \label{ex:stream-batch-conv}
  Let us prove that the session types $S$ and $T$ in
  \Cref{ex:stream-batch-types} satisfy the relation $S \conv T$.
  Consider $\ActionsA \in \tr(T) \setminus \tr(S)$. It must be the case that
  $\ActionsA = \inp\TagReq^m\inp\TagStop\out\TagResp^n\out\TagStop$ for some
  $m,n\in\Nat$ with $m \ne n$.
  If $m < n$, then we can take $\ActionsB \eqdef
  \inp\TagReq^m\inp\TagStop\out\TagResp^m \leq \ActionsA$ and now
  $S(\ActionsB\out\TagStop) = T(\ActionsB\out\TagStop) = \End$ and we
  conclude by reflexivity of $\conv$.
  If $m > n$, then we can take $\ActionsB \eqdef
  \inp\TagReq^m\inp\TagStop\out\TagResp^n \leq \ActionsA$ and now
  $S(\ActionsB\out\TagResp) = \out\TagResp^{m-n-1}.\out\TagStop.\End \conv T'
  = T(\ActionsB\out\TagResp)$ using \Cref{prop:finite-conv}
  (because the asynchnronous subtyping relation $\srel$ of 
  \Cref{ex:stream-batch-subt} contains all pairs 
  $(\out\TagResp^n.\out\TagStop.\End,T')$, for every $n$,
  and the type  and the types $\out\TagResp^{m-n-1}.\out\TagStop.\End$
  is finite).
  \eoe
\end{example}

We are finally ready to state our main result
which confirms that $\conv$ indeed provides the
inductively defined space within which we can characterize $\subt$ as the
largest asynchronous subtyping relation. Note that the characterization is both
sound and complete.

\begin{theorem}
  \label{thm:subt}
  $\subt$ is the largest asynchronous subtyping relation included in $\conv$.
\end{theorem}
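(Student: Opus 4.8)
The plan is to unfold ``largest asynchronous subtyping relation included in $\conv$'' into three obligations: (1)~$\subt$ is an asynchronous subtyping relation; (2)~$\subt \subseteq \conv$; and (3)~every asynchronous subtyping relation $\srel \subseteq \conv$ satisfies $\srel \subseteq {\subt}$. Obligation~(1) is exactly \Cref{lem:semantic-to-coinductive}, so nothing new is needed there, and obligations~(2) and~(3) are the completeness and soundness halves of the characterization, which I would prove separately.

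For soundness~(3), fix an asynchronous subtyping relation $\srel \subseteq \conv$, a pair $(S,T)\in\srel$, and an arbitrary $R$ with $\correct{R}{T}$; the goal is $\correct{R}{S}$. I would first prove a safety simulation lemma: using only the clauses of \Cref{def:csubt} (and \Cref{prop:diamond} to reconcile the asynchronous reorderings of outputs and deep inputs), every reduction $\session{R}{S'} \red \session{R'}{S''}$ is mirrored by $\session{R}{T'} \wred \session{R'}{T''}$ with $(S'',T'')\in\srel$, and the side conditions of rule~\eqref{eq:red} on the $S$-side follow from those on the $T$-side because $\inputs(T')\subseteq\inputs(S')$ and $\outputs(S')\subseteq\outputs(T')$. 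Since correctness is preserved along reductions, this maintains the invariant that every reachable $\session{R'}{S'}$ comes paired with a correct $\session{R'}{T'}$ and $(S',T')\in\srel\subseteq\conv$. I would then establish liveness by induction on the derivation of $S'\conv T'$: picking a trace $\ActionsA\in\tr(T')$ followed by $T'$ along a terminating reduction of $\session{R'}{T'}$, either $\ActionsA\in\tr(S')$ and the same trace drives $\session{R'}{S'}$ to $\session{\End}{\End}$, or $\ActionsA\in\tr(T')\setminus\tr(S')$ and \Cref{def:conv} supplies a shared prefix $\ActionsB$ and a tag $\TagA$ leading to the strictly smaller pair $S'(\ActionsB\out\TagA)\conv T'(\ActionsB\out\TagA)$; since $\out\TagA$ is an output that $R$ must be able to absorb (by correctness of the $T'$-partner, which forces $\TagA\in\inputs(R'')$ at the reached state), I can drive $\session{R'}{S'}$ through $\ActionsB\out\TagA$ and invoke the induction hypothesis.

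For completeness~(2) I would argue the contrapositive, $S \not\conv T \Rightarrow S\not\subt T$, by synthesizing a distinguishing partner $R$ with $\correct{R}{T}$ but $\lnot\correct{R}{S}$. The tester $R$ cannot simply be $\co{T}$: \Cref{ex:csubt-not-subt} exhibits a case where $\correct{\co T}{S}$ still holds even though $S\not\subt T$, so $R$ must be tailored to the failure of the inductive derivation of $S\conv T$. From that failure I would extract a trace $\ActionsA\in\tr(T)\setminus\tr(S)$ together with an infinitely descending family of residual obligations (the least-fixpoint derivation never bottoms out), witnessing that $S$ can neither follow $\ActionsA$ nor escape via an output whose residual converges. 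The partner $R$ is then built --- generalizing the $R$ of \Cref{ex:csubt-not-subt} --- to play the duals of these traces, so that $\session{R}{T}$ can always reach $\session{\End}{\End}$ (because $T$ retains the needed behaviour), whereas $\session{R}{S}$ is forced into a region with no path to $\session{\End}{\End}$.

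I expect the tester synthesis in obligation~(2) to be the main obstacle: it is the genuinely new, non-local part of the argument (precisely the feature missing from the sound-but-incomplete characterization of prior work), and proving that the constructed $R$ is simultaneously correct with $T$ and incorrect with $S$ requires carefully exploiting the well-foundedness failure of $\conv$ together with the asynchronous anticipation of outputs. The conceptual core of the soundness direction --- interleaving a coinductive safety invariant with an inductive liveness argument on $\conv$-derivations --- is delicate as well, but its shape is dictated by the bounded-coinduction template, whereas the completeness tester has to be invented.
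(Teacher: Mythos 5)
Your decomposition into three obligations is exactly the paper's: (1) is precisely \Cref{lem:semantic-to-coinductive}, (3) is the soundness half proved via a one-step simulation invariant plus an induction on $\conv$-derivations for liveness, and (2) is the completeness half proved by synthesizing a discriminating tester. There is, however, a concrete flaw in your safety simulation for obligation (3). You claim that every reduction $\session{R}{S'} \red \session{R'}{S''}$ is mirrored by $\session{R}{T'} \wred \session{R'}{T''}$. This fails in the crucial case where $R \lred{\inp\TagA} R'$ and $S' \lred{\out\TagA} S''$ but $T'$ begins with inputs: the LTS lifts only \emph{inputs} past outputs (rule \AsyncTrans), never outputs past inputs, so $T' \nlred{\out\TagA}$ and the composition $\session{R}{T'}$ admits no reduction sequence (of any length) reaching a state whose left component is $R(\inp\TagA)$. \Cref{prop:diamond} cannot repair this, since it commutes an input with an output inside a single type rather than an anticipated output of the subtype with the input prefix of the supertype. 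The paper closes exactly this hole by introducing an auxiliary relation $\wlred{\out\TagA}$ (an output reachable under a prefix of inputs, which is \emph{not} an LTS transition) and proving three dedicated lemmas: that such a $T''$ with $T' \wlred{\out\TagA} T''$ exists whenever $S' \lred{\out\TagA}$, that $(S'',T'')$ remains in the candidate relation \emph{and in $\conv$} (a nontrivial argument on traces), and that $\correct{R'}{T''}$ holds even though $\session{R}{T'}$ never actually reduces there. This is the technical core of the soundness direction and is missing from your sketch, replaced by a false intermediate claim.

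On completeness, you correctly observe that $\co{T}$ is not a sufficient tester and correctly identify the tester synthesis as the main obstacle, but you leave it unconstructed. The paper's discriminator is a corecursive definition guided \emph{simultaneously} by the hypothesis $S \subt T$ (needed for well-definedness: it guarantees that the relevant input derivatives of $S$ exist and that $S$'s outputs form a subset of $T$'s) and by $S \not\conv T$ (used to select, at each input branch of $T$, a continuation whose residual still fails to converge, and to guarantee that every $\End$ leaf of the tester sits behind an input corresponding to an output that $T$ offers but $S$ never performs). In particular the argument is a contradiction from the conjunction of $S \subt T$ and $S \not\conv T$, not a bare contrapositive from $S \not\conv T$ alone; without the subtyping hypothesis the tester need not even be well defined. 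Your ``infinitely descending family of residual obligations'' is the right intuition for why the construction works, but as written it is a plan rather than a proof.
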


\begin{example}
  We are finally able to show that $S$ and $T$ in \Cref{ex:stream-batch-types}
  are such that $S \subt T$. In \Cref{ex:stream-batch-subt} we have considered
  the following asynchronous subtyping relation:
  \[
    \srel \eqdef \set{(\out\TagResp^n.S,T), 
    (\out\TagResp^n.\out\TagStop.\End,T') \mid n \in \Nat } \cup \set{ (\End, \End) }
  \]

  We now prove that all the pairs in $\srel$ are also included in $\conv$,
  confirming that $S \subt T$ by \Cref{thm:subt}.
  In \Cref{ex:stream-batch-conv} we have already shown that $S \conv T$. We can
  use similar arguments to show that also the pairs $(\out\TagResp^n.S,T)$ are
  such that  $\out\TagResp^n.S \conv T$. Consider $\ActionsA \in \tr(T)
  \setminus \tr(\out\TagResp^n.S)$. It must be the case that $\ActionsA =
  \inp\TagReq^m\inp\TagStop\out\TagResp^l\out\TagStop$ for some $m,l\in\Nat$
  with $m+n \ne l$.
  If $m+n < l$, then we can take $\ActionsB \eqdef
  \inp\TagReq^m\inp\TagStop\out\TagResp^{m+n} \leq \ActionsA$ and now
  $S(\ActionsB\out\TagStop) = T(\ActionsB\out\TagStop) = \End$ and we
  conclude by reflexivity of $\conv$.
  If $m+n > l$, then we can take $\ActionsB \eqdef
  \inp\TagReq^m\inp\TagStop\out\TagResp^l \leq \ActionsA$ and now
  $S(\ActionsB\out\TagResp) = \out\TagResp^{m+n-l-1}.\out\TagStop.\End \conv T'
  = T(\ActionsB\out\TagResp)$ using \Cref{prop:finite-conv}.
  The pairs $(\out\TagResp^n.\out\TagStop.\End,T')$, $(\out\TagStop.\End,T')$,
  and $(\End, \End)$ all contain at least one finite type, hence they are
  included in $\conv$ by \Cref{prop:finite-conv}.
  \eoe
\end{example}

\section{Related Work and Concluding Remarks}
\label{sec:conclusion}

Gay and Hole~\cite{GayHole05} introduced the first notion of subtyping for
\emph{synchronous} session types. This subtyping supports variance on both
inputs and outputs so that a subtype can have more external nondeterminism (by
adding branches in input choices) and less internal nondeterminism (by removing
branches in output choices). Padovani~\cite{Padovani16} studied a notion of fair
subtyping for \emph{synchronous} multi-party session types that preserves a
notion of correctness similar to our \Cref{def:compliance}. One key difference
between Padovani's fair subtyping and Gay-Hole subtyping is that the variance of
outputs must be limited in such a way that an output branch can be pruned only
if it is not necessary to reach successful termination. Ciccone and
Padovani~\cite{Padovani16,CicconePadovani22A,CicconePadovani22B} have presented
sound and complete charcaterizations of fair subtyping in the synchronous case.
These characterizations all combine a coinductively defined relation and an
inductively defined relation (called ``convergence''), which respectively
capture the safety-preserving and the liveness-preserving aspects of subtyping.

Subtyping relations for asynchronous session types have been defined by Mostrous
and Yoshida~\cite{MostrousY15}, Chen \emph{et
al.}~\cite{ChenDezaniScalasYoshida17} and Bravetti, Lange and
Zavattaro~\cite{BravettiLangeZavattaro24}. In the asynchronous case a subtype
can anticipate output actions w.r.t. its supertype because anticipated outputs
can be stored in the communication queues without altering the overall
interaction behaviour. The first proposal for asynchronous
subtyping~\cite{MostrousY15} allows a subtype to execute loops of anticipated
outputs, thus delaying input actions indefinitely. This could leave orphan
messages in the buffers. The subsequent work of Chen~\emph{et
al.}~\cite{ChenDezaniScalasYoshida17} imposes restrictions on output
anticipation so as to avoid orphan messages. Bravetti, Lange and
Zavattaro~\cite{BravettiLangeZavattaro24} have defined the first fair
(liveness-preserving) asynchronous session subtyping along the lines of
\Cref{def:refine}. However, they only provided a sound (but not complete)
coinductive characterization.

In this paper we present the first complete characterization of the fair
asynchronous subtyping relation defined by Bravetti, Lange and
Zavattaro~\cite{BravettiLangeZavattaro24} using the techniques introduced by
Ciccone and Padovani in the synchronous
case~\cite{Padovani16,CicconePadovani22A,CicconePadovani22B}. In particular, we
complement a coinductively defined subtyping relation with a notion of
convergence. To do so, we leverage a novel ``asynchronous'' operational
semantics of session types in which a type can perform an input transition even
if such input is prefixed by outputs, under the assumption that such input is
present along \emph{every} initial sequence of outputs.

Another fair asynchronous subtyping relation that makes use of a similar
operational semantics has been recently defined by Padovani and
Zavattaro~\cite{padovani2025}. Unlike the relation defined by Bravetti, Lange
and Zavattaro~\cite{BravettiLangeZavattaro24} and characterized in this paper,
the subtyping relation of Padovani and Zavattaro~\cite{padovani2025} preserves a
liveness property that is strictly weaker than successful session termination.
For this reason, their operational semantics is completely symmetric with
respect to the treatment of input and output actions and no inductively-defined
convergence relation is necessary in order to obtain a sound and complete
characterization of subtyping.





A notable difference between the paper of Bravetti, Lange and
Zavattaro~\cite{BravettiLangeZavattaro24} and our own is the fact that we focus
on those session types describing protocols that can always terminate, whereas
Bravetti, Lange and Zavattaro make no such assumption. This choice affects the
properties of the subtyping relation. In particular, Bravetti, Lange and
Zavattaro~\cite{BravettiLangeZavattaro24} show that a supertype can have
additional input branches provided that such branches are \emph{uncontrollable}.
A session type is uncontrollable if there are no session types that can be
correctly composed with it. Our well-formedness condition \eqref{wf:end} in the
definition of types, guarantees that all session types are controllable. We
argue that the characterization presented in this paper can be easily extended
to the more general case where uncontrollable session types are allowed,
following the approach considered by Padovani~\cite{Padovani16}. Basically,
types that do not satisfy the well-formedness condition \eqref{wf:end} can be
\emph{normalized} by pruning away the uncontrollable subtrees. The normalization
yields session types that are semantically equivalent to the original ones and
that satisfy the condition \eqref{wf:end}. At that point, our characterization
can be used to establish whether or not they are related by subtyping.

We envision two lines of development of this work.
One line is concerned with the study of algorithmic versions of our notions of
correct composition, subtyping, and convergence. As for all the known subtypings
for asychronous sessions, these notions turn out to be undecidable
(\Cref{sec:proofs-undecidability}). In the literature, sound (but not complete)
algorithmic characterizations have been investigated for different variants of
asynchronous session subtyping \cite{BravettiCLYZ21,BocchiKM24,
BravettiLangeZavattaro24}. We plan to investigate the possibility to adapt these
approaches to our new formalization of fair asynchronous subtyping, possibly
taking advantage of the novel operational semantics for asynchronous session
types.
Another line of future work concerns the use of fair asynchronous subtyping for
defining type systems ensuring successful session termination of asynchronously
communicating processes. Such type systems have been studied for
\emph{synchronous} communication by Ciccone, Dagnino and
Padovani~\cite{CicconePadovani22A,CicconeDagninoPadovani24}. A first proposal of
such type system for \emph{asynchronous} communication has been recently given
by Padovani and Zavattaro~\cite{padovani2025}. However, as we have pointed out
above, the subtyping relation used in their type system does not preserve
successful session termination. As a consequence, their type system requires a
substantial amount of additional annotations in types and in typing judgements
in order to enforce successful termination. It may be the case that relying on a
subtyping relation that provides stronger guarantees, like the one characterized
in the present paper, could simplify the type system and possibly enlarge the
family of typeable processes.

\bibliographystyle{plainurl}
\bibliography{main,biblio}

\appendix
\section{Supplement to Section~\ref{sec:session-types}}
\label{sec:proofs-session-types}

\propcoindlts*
\begin{proof}
    Recall that every subtree of a session type contains a leaf of the form
    $\End$.
    Let $n$ be the length of the shortest path from the root of $S$ to one of
    its leaves. We proceed by induction on $n$ and by cases on the last rule
    used to derive $S \lred\Action T$.
    
    \proofcase{\SyncTrans}
    We conclude immediately by taking $T \eqdef S$ and $\Actions \eqdef
    \varepsilon$.
    

    \proofcase{\AsyncTrans}
    Then $S = \sum_{i\in I} \out\MessageType_i.S_i$ and $S_i \lred\Action$ for every
    $i\in I$.
    The shortest path from the root of $S$ to one of its leaves must go through
    one of the branches of $S$, say the one with index $k\in I$. Moreover, the
    shortest path from $S_k$ to one of its leaves must have length $n - 1$.
    Using the induction hypothesis we deduce that there exist a $T$ and a string
    $\ActionsB$ made of output labels only such that $S_k \lred\ActionsB T
    \lred\Action$ where the last transition is derived by \SyncTrans.
    We conclude by taking $\ActionsA \eqdef \out\MessageType_k\ActionsB$ and observing
    that $S \lred{\out\MessageType_k} S_k \lred\ActionsB T$.
\end{proof}

\lemmaSessionsToConfig*
\begin{proof}
If $\session{S_1}{T_1} \rightarrow \session{S_1'}{T_1'}$ then
$S_1 \xlred{\alpha} S_1'$ and $T_1 \xlred{\overline\alpha} T_1'$.
We consider the case in which $\alpha =\, !\Tag$
and $\overline\alpha =\, ?\Tag$ (the symmetric case is similar).
If $S_1 \xlred{!\Tag} S_1'$ then $S_1$ starts with outputs.
As $S \xlred{?\word_{S}} S_1$
we have that $S$ possibly starts with inputs but after a prefix
of the inputs in $?\word_{S}$ it will reach a type starting with 
the same outputs of $S_1$. Let $?\word$ be such prefix, and let
$?\word_{S} = ?\word?\word_{S'}$. Let $S''$ be the 
type reached by $S$ after executing the inputs $?\word$.
We have that 
$\cnfg{S}{\word_{S}}{T}{\word_{T}} \rightarrow^*
    \cnfg{S''}{\word_{S'}}{T}{\word_{T}}$.
As $S''$ contains the same initial outputs of $S_1$, we have 
$S'' \xlred{!\Tag} S'$, hence
$\cnfg{S''}{\word_{S'}}{T}{\word_{T}} \rightarrow 
    \cnfg{S'}{\word_{S'}}{T}{\word_{T}\Tag}$.    
We have $S' \xlred{?\word_{S'}} S_1'$ because
$S'' \xlred{?\word_{S'}} S_1$ and $S'$ (resp. $S_1'$)
is the continuations of $S''$ (resp. $S_1$) after
the initial output $!\Tag$. We also have that 
$T' \xlred{?\word_{T}?\Tag} T_1'$ because
$T \xlred{?\word_{T}} T_1$ and $T_1 \xlred{?\Tag} T_1'$.
\end{proof}

\lemmaConfigToSessionsP*
\begin{proof}
The proof is by induction on the length of the
sequence of transitions $\cnfg{S}{\epsilon}{T}{\epsilon} \rightarrow^*   
 \cnfg{S'}{\word_{S'}}{T'}{\word_{T'}}$.

The base case is trivial, because $S'=S$, $T'=T$ and 
$\session{S}{T}$ performs no transition.

Now assume the Lemma holds and 
$\cnfg{S'}{\word_{S'}}{T'}{\word_{T'}} \rightarrow  
 \cnfg{S_1'}{\word_{S_1'}}{T_1'}{\word_{T_1'}}$.
There are four possible cases: $S'$ performs an input,
$S'$ performs an output, $T'$ performs an input, and
$T'$ performs an output.
We consider only the first two cases because the last two
cases are similar symmetric cases.

In the first case we have $\word_{S'} = \Tag\word_{S''}$
and
$\cnfg{S'}{\Tag\word_{S''}}{T'}{\word_{T'}} \rightarrow  
 \cnfg{S''}{\word_{S''}}{T'}{\word_{T'}}$
where $S''$ is the continuation of $S'$ after the input $?\Tag$.
By the induction hypothesis we have $S' \xlred{?\word_{S'}} S_1$.
As $\word_{S'} = \Tag\word_{S''}$, we have
$S' \xlred{?\Tag?\word_{S''}} S_1$ and also
$S'' \xlred{?\word_{S''}} S_1$ because 
$S''$ is the continuation of $S'$ after the input $?\Tag$.
We can conclude that $\session{S_1}{T_1}$ can mimick the
transition simply by performing no reduction (i.e. $\session{S_1}{T_1} \rightarrow^* \session{S_1}{T_1}$)
because $S'' \xlred{?\word_{S''}} S_1$ and, by induction hypothesis,
$T' \xlred{?\word_{T'}} T_1$.

In the second case we have 
$\cnfg{S'}{\word_{S'}}{T'}{\word_{T'}} \rightarrow  
 \cnfg{S''}{\word_{S'}}{T'}{\word_{T'}\Tag}$
where $S''$ is the continuation of $S'$ after the output $!\Tag$.
As $S'$ starts with outputs (including $!\Tag$),
also $S_1$ starts with the same outputs because, by induction hypothesis, 
$S' \xlred{?\word_{S'}} S_1$. Hence $S_1 \xlred{!\Tag} S_1'$.
We have also that $S'' \xlred{?\word_{S'}} S_1'$
because $S''$ (resp $S_1'$) is the continuation of $S'$ (resp. $S_1$)
after the output $!\Tag$.
By induction hypothesis we have $\session{S}{T} \rightarrow^* \session{S_1}{T_1}$. As $\session{S}{T}$ is correct, $\session{S_1}{T_1}$ cannot be stuck. Hence $T_1 \xlred{?\Tag} T_1'$.
By induction hypothesis we have $T' \xlred{?\word_{T'}} T_1$,
hence also $T' \xlred{?\word_{T'}?\Tag} T_1'$.
We can conclude that $\session{S_1}{T_1}$ can mimick the
transition by performing the reduction $\session{S_1}{T_1} \rightarrow
\session{S_1'}{T_1'}$, with 
$S'' \xlred{?\word_{S'}} S_1'$ and
$T' \xlred{?\word_{T'}?\Tag} T_1'$.
\end{proof}

\lemmaConfigToSessionsB*
\begin{proof}
Consider
$\cnfg{S'}{\word_{S'}}{T'}{\word_{T'}} \rightarrow  
 \cnfg{S''}{\word_{S''}}{T''}{\word_{T''}}$.
There are four possible cases: $S'$ performs an input,
$S'$ performs an output, $T'$ performs an input, and
$T'$ performs an output.
We consider only the first two cases because the last two
cases are similar symmetric cases.

In the first case we have $\word_{S'} = \Tag\word_{S''}$
and
$\cnfg{S'}{\Tag\word_{S''}}{T'}{\word_{T'}} \rightarrow  
 \cnfg{S''}{\word_{S''}}{T'}{\word_{T'}}$
where $S''$ is the continuation of $S'$ after the input $?\Tag$.
By hypothesis we have $S' \xlred{?\word_{S'}} S_1$.
As $\word_{S'} = \Tag\word_{S''}$, we have
$S' \xlred{?\Tag?\word_{S''}} S_1$ and also
$S'' \xlred{?\word_{S''}} S_1$ because 
$S''$ is the continuation of $S'$ after the input $?\Tag$.
We can conclude that $\session{S_1}{T_1}$ can mimick the
transition simply by performing no reduction (i.e. $\session{S_1}{T_1} \rightarrow^* \session{S_1}{T_1}$)
because $S'' \xlred{?\word_{S''}} S_1$ and, by hypothesis,
$T' \xlred{?\word_{T'}} T_1$.

In the second case we have 
$\cnfg{S'}{\word_{S'}}{T'}{\word_{T'}} \rightarrow  
 \cnfg{S''}{\word_{S'}}{T'}{\word_{T'}\Tag}$
where $S''$ is the continuation of $S'$ after the output $!\Tag$.
As $S'$ starts with outputs (including $!\Tag$),
also $S_1$ starts with the same outputs because 
$S' \xlred{?\word_{S'}} S_1$. Hence $S_1 \xlred{!\Tag} S_1'$.
We have also that $S'' \xlred{?\word_{S'}} S_1'$
because $S''$ (resp $S_1'$) is the continuation of $S'$ (resp. $S_1$)
after the output $!\Tag$.
It remains to show that $T_1$ can perform the
complementary input $T_1 \xlred{?\Tag} T_1'$.
This is sufficient to close the case because this implies
that $\session{S_1}{T_1}$ can mimick the
transition by performing the reduction $\session{S_1}{T_1} \rightarrow
\session{S_1'}{T_1'}$
with $S'' \xlred{?\word_{S'}} S_1'$ and 
$T' \xlred{?\word_{T'}?\Tag} T_1'$ simply because, by hypothesis, 
$T' \xlred{?\word_{T'}} T_1$ and $T_1 \xlred{?\Tag} T_1'$.
We now prove $T_1 \xlred{?\Tag} T_1'$, by contradiction.
Assume this does not hold. This means that $T_1$ has a sequence 
of outputs which terminates in a type which is either $\End$
or a type starting with inputs that do not include $?\Tag$.
We have that 
$\cnfg{S}{\epsilon}{T}{\epsilon} \rightarrow^*  
 \cnfg{S''}{\word_{S'}}{T'}{\word_{T'}\Tag}$ with
$T' \xlred{?\word_{T'}} T_1$.
This computation can continue with transitions inferred only
by $T'$ which becomes $T_1$ by performing the inputs in 
$?\word_{T'}$ plus outputs possibly prefixing some of these 
inputs. 
In this way we obtain a computation
$\cnfg{S}{\epsilon}{T}{\epsilon} \rightarrow^*  
 \cnfg{S''}{\word_{S''}}{T_1}{\Tag}$. We continue the computation
by letting $T_1$ to execute the sequence 
of outputs which terminates in a type which is either $\End$
or a type starting with inputs that do not include $?\Tag$.
Let $T_1''$ be such type. 
In this way we obtain a computation
$\cnfg{S}{\epsilon}{T}{\epsilon} \rightarrow^*  
 \cnfg{S''}{\word_{S''}}{T_1''}{\Tag}$.
This computation cannot be extended to reach
$\cnfg{\End}{\epsilon}{\End}{\epsilon}$
because $T_1''$ cannot execute any action thus it
cannot consume $\Tag$.
This contradicts the assumption about
$\cnfg{S}{\epsilon}{T}{\epsilon}$ being a correct configuration.
\end{proof}

\theoRedTrans*
\begin{proof}
We start with the only-if direction.
Let $S$ and $T$ be two compliant types.
Then $\cnfg{S}{\epsilon}{T}{\epsilon}$ is a correct configuration.
Consider now a computation 
$\session{S}{T} \rightarrow^* \session{S_1}{T_1}$.
By repeated application of \cref{lem:sessions_to_config}
(one application for each transition)
we have that $\cnfg{S}{\epsilon}{T}{\epsilon}
\rightarrow^* \cnfg{S'}{\word_{S'}}{T'}{\word_{T'}}$
with
$S' \xlred{?\word_{S'}} S_1$ and $T' \xlred{?\word_{T'}} T_1$.
Being $\cnfg{S}{\epsilon}{T}{\epsilon}$ a correct configuration,
we have that $\cnfg{S'}{\word_{S'}}{T'}{\word_{T'}}
\rightarrow^* \cnfg{\End}{\epsilon}{\End}{\epsilon}$.
By repeated application of \cref{lem:config_to_sessions_B}
(one application for each transition),
we have that $\session{S_1}{T_1} \rightarrow^* \session{\End}{\End}$
(because $\End \xlred{\epsilon} \End$). Hence, by \cref{def:correctness}
we conclude $\correct{S}{T}$.

We now move to the if direction.
Let $\correct{S}{T}$.
Consider now a computation 
$\cnfg{S}{\epsilon}{T}{\epsilon} \rightarrow^*
 \cnfg{S'}{\word_{S'}}{T'}{\word_{T'}}$.
By \cref{lem:config_to_sessions_P}
we have that 
$\session{S}{T} \rightarrow^*
 \session{S_1}{T_1}$
with
$S' \xlred{?\word_{S'}} S_1$ and $T' \xlred{?\word_{T'}} T_1$.
Being $\correct{S}{T}$,
we have that $\session{S_1}{T_1} \rightarrow^*
 \session{\End}{\End}$.
By repeated application of \cref{lem:sessions_to_config}
(one application for each transition),
we have that 
$\cnfg{S'}{\word_{S'}}{T'}{\word_{T'}} \rightarrow^*
 \cnfg{S''}{\word_{S''}}{T''}{\word_{T''}}$
with
$S'' \xlred{?\word_{S''}} \End$ and $T'' \xlred{?\word_{T''}} \End$. 
These last two properties imply that $S''$ (resp. $T''$)
is composed by a sequence of inputs that can consume the
messages in the queue $\word_{S''}$ (resp. $\word_{T''}$).
Hence we have that 
$\cnfg{S''}{\word_{S''}}{T''}{\word_{T''}} \rightarrow^*
 \cnfg{\End}{\epsilon}{\End}{\epsilon}$.
By \cref{def:compliance},
we conclude that $\cnfg{S}{\epsilon}{T}{\epsilon}$ is a correct configuration hence $S$ and $T$ are compliant.
\end{proof}

\section{Supplement to \Cref{sec:subtyping}}
\label{sec:proofs-subtyping}

\subsection{Auxiliary Properties of Session Types and Session Correctness}

\begin{proposition}
    \label{prop:input-after-output}
    If $S \lred{\inp\TagA} T$, then $S(\out\TagB) \lred{\inp\TagA} T(\out\TagB)$.
\end{proposition}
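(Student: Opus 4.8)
The plan is to recognize this statement as essentially a repackaging of the diamond property (\Cref{prop:diamond}) phrased in terms of partial derivatives. First I would observe that the notation $S(\out\TagB)$ is only meaningful when $S \lred{\out\TagB}$, so the statement carries the implicit hypothesis $\TagB \in \outputs(S)$. Together with the given transition $S \lred{\inp\TagA} T$, this supplies exactly the two premises required by \Cref{prop:diamond}, namely an input transition and an output transition departing from the same session type $S$.

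Applying the diamond property with $S' \eqdef T$ and $S'' \eqdef S(\out\TagB)$ yields a session type $U$ such that $T \lred{\out\TagB} U$ and $S(\out\TagB) \lred{\inp\TagA} U$. It then remains to identify $U$ with the derivative $T(\out\TagB)$. This is immediate from uniqueness of derivatives: by the well-formedness condition~\eqref{wf:branch} the type reached from $T$ after the action $\out\TagB$ is uniquely determined, so $U = T(\out\TagB)$ by the very definition of the partial derivative. Substituting this equality into $S(\out\TagB) \lred{\inp\TagA} U$ gives $S(\out\TagB) \lred{\inp\TagA} T(\out\TagB)$, as required.

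The argument is short because the heavy lifting is already done by \Cref{prop:diamond}; the only points requiring care are making the implicit output hypothesis $\TagB \in \outputs(S)$ explicit and invoking uniqueness of derivatives to pin down the closing vertex of the diamond. I therefore do not expect a genuine obstacle here. One could alternatively attempt a direct proof by induction on the derivation of $S \lred{\inp\TagA} T$, distinguishing whether the final rule is \SyncTrans or \AsyncTrans, but such a proof would largely recapitulate the inductive reasoning already used to establish \Cref{prop:diamond}, so routing the argument through the diamond property is both cleaner and preferable.
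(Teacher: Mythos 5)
Your proof is correct, but it routes through a different key result than the paper does. The paper argues directly from the structure of the LTS: since $S(\out\TagB)$ is defined, $S$ must be of the form $\sum_{i\in I}\out\TagB_i.S_i$ with $\TagB=\TagB_k$ for some $k\in I$, so the hypothesis $S\lred{\inp\TagA}T$ can only have been derived by \AsyncTrans, which forces $T=\sum_{i\in I}\out\TagB_i.T_i$ with $S_i\lred{\inp\TagA}T_i$ for every $i\in I$; reading off the $k$-th branch gives $S(\out\TagB)=S_k\lred{\inp\TagA}T_k=T(\out\TagB)$. You instead instantiate \Cref{prop:diamond} with the two transitions $S\lred{\inp\TagA}T$ and $S\lred{\out\TagB}S(\out\TagB)$ and then identify the closing vertex $U$ with $T(\out\TagB)$ via uniqueness of derivatives (well-formedness condition~\eqref{wf:branch}, plus the fact that output labels can only be produced by \SyncTrans). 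This is sound, and your handling of the two delicate points --- making the implicit hypothesis $\TagB\in\outputs(S)$ explicit and justifying $U=T(\out\TagB)$ --- matches the paper's own remarks. The trade-off is that \Cref{prop:diamond} is stated in the paper without proof, and its proof is essentially the same one-step unfolding of \AsyncTrans that the paper's direct argument performs here; so the paper's version is more self-contained, while yours is shorter and cleaner modulo that dependency. Since \Cref{prop:diamond} appears before this proposition, there is no circularity in relying on it.
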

\begin{proof}
    It must be the case that $S = \sum_{i\in I} \out\TagB_i.S_i$ and $\TagB =
    \TagB_k$ and $S(\out\TagB) = S_k$ for some $k\in I$, 
    or else $S(\out\TagB)$ would be undefined.
    From the hypothesis $S \lred{\inp\TagA} T$ we deduce that there is a family of
    $T_i$ such that $S_i \lred{\inp\TagA} T_i$ for every $i\in I$ and $T =
    \sum_{i\in I} \out\Tag_i.T_i$.
    We conclude $S(\out\TagB) = S_k \lred{\inp\TagA} T_k = T(\out\TagB)$.
\end{proof}

\begin{proposition}
    \label{prop:outputs-after-input}
    If $S \lred{\inp\TagA} T$, then either $\outputs(S) = \emptyset$ or
    $\outputs(S) = \outputs(T)$.
\end{proposition}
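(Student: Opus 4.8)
The plan is to perform a case analysis on the topmost structure of $S$, which coincides with the root rule of the (possibly infinite) derivation witnessing $S \lred{\inp\TagA} T$. Inversion is legitimate here even though the LTS is defined coinductively: every pair in the relation arises because the conclusion of one of the two rules matches and its premises hold, so we may reason by cases on the shape of that root application. Since $\End$ enables no transitions, the hypothesis $S \lred{\inp\TagA} T$ forces $S \neq \End$, hence $S = \sum_{i\in I}\pol\Tag_i.S_i$ for a single uniform polarity $\pol$ (mixed choices being disallowed). Because the transition label is an input $\inp\TagA$, exactly two derivation shapes are possible at the root, and they are selected precisely by the value of $\pol$.

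First I would treat the case $\pol = \inp$, in which the transition can only be derived by \SyncTrans. Here $S = \sum_{i\in I}\inp\Tag_i.S_i$ begins with input prefixes, so no output transition is enabled (output transitions arise solely from \SyncTrans applied to an output choice, since \AsyncTrans produces input labels only). Consequently $\outputs(S) = \emptyset$, which establishes the first disjunct without any inspection of $T$. Next I would treat the case $\pol = \out$, where the only applicable root rule is \AsyncTrans. By inversion of that rule we obtain $S = \sum_{i\in I}\out\Tag_i.S_i$ together with $T = \sum_{i\in I}\out\Tag_i.T_i$ and $S_i \lred{\inp\TagA} T_i$ for every $i \in I$. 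The decisive point is that \AsyncTrans leaves the outer output prefixes $\out\Tag_i$ untouched, rewriting only the continuations. Applying \SyncTrans to each branch of $S$ and of $T$ then gives $\outputs(S) = \set{\Tag_i \mid i\in I} = \outputs(T)$, which establishes the second disjunct.

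I do not anticipate a real obstacle. The entire argument is local to the root of the derivation, so the coinductive character of the LTS --- which permits the branch derivations $S_i \lred{\inp\TagA} T_i$ to be infinite --- never needs to be unfolded. The single point that deserves mild care is the inversion of \AsyncTrans in the output-choice case, where one must note that the rule preserves the top-level output tags while transforming only the continuations; this is exactly the structural observation already exploited in the proof of \Cref{prop:input-after-output}, so it can be invoked in the same spirit.
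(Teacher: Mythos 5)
Your proof is correct and follows essentially the same route as the paper's: a case analysis on the polarity of the topmost choice of $S$, concluding $\outputs(S)=\emptyset$ in the input case and, in the output case, inverting \AsyncTrans to see that $T$ retains exactly the same top-level output tags as $S$. The extra remarks on the legitimacy of inversion for the coinductive LTS are sound but not needed beyond what the paper already does implicitly.
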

\begin{proof}
    We distinguish two cases depending on the shape of $S$.
    If $S = \sum_{i\in I} \inp\TagA_i.S_i$, then we conclude $\outputs(S) =
    \emptyset$.
    If $S = \sum_{i\in I} \out\TagA_i.S_i$, then from the hypothesis $S
    \lred{\inp\TagA} T$ we deduce that there exists a family of $T_i$ such that
    $S_i \lred{\inp\TagA} T_i$ for every $i\in I$ and $T = \sum_{i\in I}
    \out\TagA_i.T_i$.
    We conclude $\outputs(S) = \set{\TagA_i}_{i\in I} = \outputs(T)$.
\end{proof}


\begin{proposition}
    \label{prop:correctness}
    $\correct{R}{S}$ iff for every $\ActionsA$, $R'$ and $S'$ such that $R
    \lred{\co\ActionsA} R'$ and $S \lred\Actions S'$ we have $\outputs(R')
    \subseteq \inputs(S')$ and $\outputs(S') \subseteq \inputs(R')$ and there
    exists $\ActionsB$ such that $R' \lred{\co\ActionsB} \End$ and $S'
    \lred\ActionsB \End$.
\end{proposition}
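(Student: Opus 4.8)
The plan is to bridge the label-based LTS appearing in the statement with the reduction semantics of session compositions underlying \cref{def:correctness}. The key observation is that the side condition of rule~\eqref{eq:red}, namely $\outputs(S) \subseteq \inputs(T)$ together with $\outputs(T) \subseteq \inputs(S)$, constrains only the \emph{source} of a reduction and does not mention the fired label; hence, as soon as this side condition holds at a configuration, \emph{any} pair of complementary transitions enabled there yields a genuine reduction. As a preliminary step I would record, by a straightforward induction on the number of reduction steps, that $\session{R}{S} \wred \session{R'}{S'}$ implies the existence of a string $\Actions$ with $R \lred{\co\Actions} R'$ and $S \lred\Actions S'$, since every reduction step contributes exactly one pair of complementary labels.

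For the forward implication I would assume $\correct{R}{S}$ and fix $\Actions = \Action_1 \cdots \Action_n$, $R'$ and $S'$ with $R \lred{\co\Actions} R'$ and $S \lred\Actions S'$; let $R_i$ and $S_i$ be the intermediate derivatives, so that $R_{i-1} \lred{\co{\Action_i}} R_i$ and $S_{i-1} \lred{\Action_i} S_i$. By induction on $i$ I would show $\session{R}{S} \wred \session{R_i}{S_i}$. In the inductive step the configuration $\session{R_{i-1}}{S_{i-1}}$ is reachable and distinct from $\session{\End}{\End}$ (it still enables $\Action_i$ and $\co{\Action_i}$), so $\correct{R}{S}$ forces it to reduce; since reductions are derivable only by~\eqref{eq:red}, its side condition holds, and combining it with the available complementary pair $(\co{\Action_i}, \Action_i)$ produces the step $\session{R_{i-1}}{S_{i-1}} \red \session{R_i}{S_i}$. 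Having reached $\session{R'}{S'}$ by reductions, correctness gives $\session{R'}{S'} \wred \session{\End}{\End}$, from which the preliminary step extracts the required $\ActionsB$; the inclusions $\outputs(R') \subseteq \inputs(S')$ and $\outputs(S') \subseteq \inputs(R')$ follow either trivially when $R' = S' = \End$ or as the side condition of the first reduction towards $\session{\End}{\End}$.

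For the backward implication I would establish $\correct{R}{S}$ directly from \cref{def:correctness}. Given any $\session{R}{S} \wred \session{R'}{S'}$, the preliminary step yields a matching trace $\Actions$, and the hypothesis applied to $\Actions$ supplies $\ActionsB = \Action_1 \cdots \Action_m$ with $R' \lred{\co\ActionsB} \End$ and $S' \lred\ActionsB \End$. I would then turn $\ActionsB$ into a reduction sequence $\session{R'}{S'} \wred \session{\End}{\End}$ by the same mechanism as above: each intermediate state along $\ActionsB$ is reached from $\session{R}{S}$ by a matching trace (a prolongation of $\Actions$), so the hypothesis delivers the input/output inclusions there, which together with the complementary pair legitimise the next reduction. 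Since $\session{R'}{S'}$ ranges over all reachable configurations, this gives $\correct{R}{S}$.

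The step I expect to be the main obstacle is the forward implication, which must reconcile an LTS matching trace — whose asynchronous lifting of inputs deliberately ignores the orphan-message checks — with an honest reduction sequence. What unlocks it is precisely the label-independence of the side condition combined with the liveness embedded in correctness: together they guarantee that every reachable non-terminal configuration actually reduces, so the complementary transitions witnessed by the trace can always be exercised as reductions. It is worth noting that this argument relies on neither the diamond property (\cref{prop:diamond}) nor \cref{prop:coind-lts}.
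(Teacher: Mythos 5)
Your proof is correct and follows the only natural route: the paper itself dismisses this proposition as ``immediate from \cref{def:correctness}'', and what you have written is precisely the detailed unfolding of that immediacy --- the induction showing that, under correctness, every matching pair of LTS traces is realizable as a reduction sequence (because every reachable non-terminal composition must reduce, hence satisfies the label-independent side condition of the reduction rule), and conversely. Your observation that the side condition constrains only the source configuration, so any complementary pair of enabled transitions can be fired once the configuration reduces at all, is exactly the point that makes the equivalence go through.
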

\begin{proof}
    Immediate from \Cref{def:correctness}.
\end{proof}
  
\subsection{Auxiliary Relation for Outputs Prefixed by Inputs}

In the proof of soundness of the asynchronous session subtyping relation
we make use of the following relation $\wlred{\out\MessageType}$
representing the presence of output actions that are prefixed
by input actions. Let $\wlred{\out\MessageType}$ be the relation coinductively defined thus:
\begin{equation}
  \label{eq:deep-output}
  \inferrule[\rulename{do1}]{~}{
    S \wlred{\out\MessageType} T
  }
  ~S \lred{\out\MessageType} T
  \qquad
  \inferrule[\rulename{do2}]{
    S_i \wlred{\out\MessageType} T_i
  }{
    \sum_{i\in I} \inp\MessageType_i.S_i \wlred{\out\MessageType} \sum_{i\in I} \inp\MessageType_i.T_i
  }
\end{equation}

We now prove some properties related with this new relation $\wlred{\out\MessageType}$.

\begin{proposition}
    \label{prop:stronger-diamond}
    If $S \lred{\inp\MessageTypeS} S'$ and $S \wlred{\out\MessageTypeT} S''$,
    then there exists $T$ such that $S' \wlred{\out\MessageTypeT} T$ and $S''
    \lred{\inp\MessageTypeS} T$.
\end{proposition}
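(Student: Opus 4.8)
The plan is to prove \Cref{prop:stronger-diamond} by directly inverting the two hypothesised derivations and case-splitting on the topmost polarity of $S$. Since $S \lred{\inp\MessageTypeS} S'$, the type $S$ is not $\End$, and by the syntax of (pre-)session types — a branching type $\sum_{i\in I}\pol\Tag_i.S_i$ carries a single shared polarity $\pol$, so mixed choices are disallowed — it is therefore either a pure input choice $\sum_{i\in I}\inp\Tag_i.S_i$ or a pure output choice $\sum_{i\in I}\out\Tag_i.S_i$. I expect this to be a one-step argument in each case, with no recursion and no appeal to coinduction beyond inverting the root rule of $S \wlred{\out\MessageTypeT} S''$: the witness $T$ will be read off from a single premise of that root rule. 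The reason the diamond closes so cleanly is the asymmetry between the rules — \SyncTrans selects one branch, whereas \AsyncTrans and \rulename{do2} act uniformly on \emph{all} branches — so the two legs never need to agree on a branch index.

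First I would treat $S = \sum_{i\in I}\out\Tag_i.S_i$. Because \AsyncTrans is the only rule that can emit an input label from an output-headed type, the transition $S \lred{\inp\MessageTypeS} S'$ must be derived by \AsyncTrans, so $S_i \lred{\inp\MessageTypeS} T_i$ for every $i\in I$ and $S' = \sum_{i\in I}\out\Tag_i.T_i$. Dually, $S \wlred{\out\MessageTypeT} S''$ cannot use \rulename{do2} (which requires an input-headed type), so it uses \rulename{do1}, meaning $S \lred{\out\MessageTypeT} S''$; since an output-labelled transition can only be derived by \SyncTrans, this fixes $\MessageTypeT = \Tag_k$ and $S'' = S_k$ for some $k\in I$. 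I would then take $T \eqdef T_k$: the $k$-th premise of \AsyncTrans gives $S'' = S_k \lred{\inp\MessageTypeS} T_k$, while \SyncTrans on $S' = \sum_{i\in I}\out\Tag_i.T_i$ gives $S' \lred{\out\MessageTypeT} T_k$, whence $S' \wlred{\out\MessageTypeT} T_k$ by \rulename{do1}.

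The dual case $S = \sum_{i\in I}\inp\Tag_i.S_i$ is symmetric. Here $S \lred{\inp\MessageTypeS} S'$ can only be \SyncTrans, fixing $\MessageTypeS = \Tag_k$ and $S' = S_k$; and $S \wlred{\out\MessageTypeT} S''$ can only use \rulename{do2} (there is no topmost output for \rulename{do1}), giving $S_i \wlred{\out\MessageTypeT} S_i''$ for every $i$ and $S'' = \sum_{i\in I}\inp\Tag_i.S_i''$. Taking $T \eqdef S_k''$, the $k$-th premise of \rulename{do2} yields $S' = S_k \wlred{\out\MessageTypeT} S_k''$, while \SyncTrans on $S''$ yields $S'' \lred{\inp\MessageTypeS} S_k'' = T$. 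The only delicate point — and the nearest thing to an obstacle — is the bookkeeping that justifies the uniqueness of the applicable rule in each case; this rests entirely on the uniform polarity of choices and on the complementary triggering shapes of the four rules (\SyncTrans on the head prefix, \AsyncTrans and \rulename{do2} requiring output- and input-headed choices respectively, and output transitions arising only from \SyncTrans). Once that is in place the diamond closes immediately, so I do not anticipate any genuinely hard step.
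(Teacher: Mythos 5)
Your proof is correct and takes essentially the same route as the paper: a case split on the topmost polarity of $S$, inverting the unique applicable rule on each side and reading the witness off the $k$-th premise of the branch-uniform rule (\AsyncTrans or \rulename{do2}). The paper spells out only the input-headed case and declares the other symmetric, whereas you work both out explicitly, but the decomposition and the witnesses are identical.
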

\begin{proof}
    We reason by cases on the shape of $S$, noting that it cannot be of the form
    $\End$. We only discuss the case $S = \sum_{i\in I}
    \inp\MessageTypeS_i.S_i$, since the other is symmetric.
    Then $\MessageTypeS = \MessageTypeS_k$ and $S' = S_k$ for some $k\in I$.
    From the hypothesis $S \wlred{\out\MessageTypeT} S''$ and \AsyncTrans we
    deduce that there exists a family of $T_i$ such that $S_i
    \wlred{\out\MessageTypeT} T_i$ and $S'' = \sum_{i\in I}
    \inp\MessageTypeS_i.T_i$.
    We conclude by taking $T \eqdef T_k$ and observing that $S''
    \lred{\inp\MessageTypeS} T$.
\end{proof}

\begin{proposition}
    \label{prop:output-after-input}
    If $S \wlred{\out\MessageType} T$, then $S(\inp\MessageTypeT) \wlred{\out\MessageType}
    T(\inp\MessageTypeT)$.
\end{proposition}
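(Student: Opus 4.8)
The plan is to reason by a direct case analysis on the last rule used to derive $S \wlred{\out\MessageType} T$, exploiting the fact that the coinductively defined relation $\wlred{\out\MessageType}$ is a fixpoint of the rules \rulename{do1} and \rulename{do2}: every pair it contains is the conclusion of one of these rules applied to premises that again lie in the relation. As in the proof of the sibling \Cref{prop:input-after-output}, I read the statement under the implicit assumption that $S(\inp\MessageTypeT)$ is defined, i.e.\ that $S \lred{\inp\MessageTypeT}$; the goal is then to show that $T(\inp\MessageTypeT)$ is defined as well and that $S(\inp\MessageTypeT) \wlred{\out\MessageType} T(\inp\MessageTypeT)$. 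A pleasant feature of the argument is that no coinductive hypothesis is required: each case closes either by a single application of \rulename{do1} or directly by an inherited premise.

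In the case of \rulename{do1} we have $S \lred{\out\MessageType} T$. Since an output transition can only be produced by \SyncTrans (the rule \AsyncTrans lifts input labels exclusively), $S$ genuinely begins with outputs, say $S = \sum_{i\in I} \out\Tag_i.S_i$ with $\MessageType = \Tag_k$ and $T = S_k$. Because $S(\inp\MessageTypeT)$ is defined and $S$ has no topmost input, the transition $S \lred{\inp\MessageTypeT}$ must be derived by \AsyncTrans, yielding $S(\inp\MessageTypeT) = \sum_{i\in I} \out\Tag_i.S_i'$ with $S_i \lred{\inp\MessageTypeT} S_i'$ for every $i\in I$. In particular $T(\inp\MessageTypeT) = S_k'$, and \SyncTrans gives $S(\inp\MessageTypeT) \lred{\out\MessageType} S_k' = T(\inp\MessageTypeT)$, so \rulename{do1} lets us conclude. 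In the case of \rulename{do2} we have $S = \sum_{i\in I} \inp\Tag_i.S_i$ and $T = \sum_{i\in I} \inp\Tag_i.T_i$ with $S_i \wlred{\out\MessageType} T_i$ for every $i$. Here $S$ begins with inputs, so $S(\inp\MessageTypeT)$ being defined forces $\MessageTypeT = \Tag_k$ for some $k$ via \SyncTrans, whence $S(\inp\MessageTypeT) = S_k$ and $T(\inp\MessageTypeT) = T_k$; the required relation $S_k \wlred{\out\MessageType} T_k$ is exactly the premise for index $k$.

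The main obstacle --- really the only subtle point --- is the interaction between the two opposite liftings at play: $\wlred{\out\MessageType}$ pushes an output \emph{past leading inputs}, whereas the asynchronous derivative $S(\inp\MessageTypeT)$ may push the input $\inp\MessageTypeT$ \emph{past leading outputs} through \AsyncTrans. These two could a priori interfere, and the \rulename{do1} case is precisely where they meet. The key observation that makes them coherent is that an $\out\MessageType$-labelled transition is never an instance of \AsyncTrans, so under \rulename{do1} the type $S$ really does expose its outputs at the top; the \AsyncTrans step for $\inp\MessageTypeT$ then distributes uniformly across all those output branches, leaving the topmost $\out\MessageType$ intact and commuting cleanly with the derivative. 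This commutation is in fact an instance of \Cref{prop:input-after-output}, and once it is spelled out the remaining bookkeeping is routine.
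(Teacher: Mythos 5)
Your proof is correct and follows essentially the same route as the paper's: the paper splits on the shape of $S$ (input-prefixed vs.\ output-prefixed), which coincides exactly with your split on the last rule (\rulename{do2} vs.\ \rulename{do1}) since an input-prefixed type admits no output transition, and in each case both arguments close in one step without any coinductive hypothesis. The only cosmetic difference is that you justify the \rulename{do1} case by unfolding \AsyncTrans directly where the paper does the same, so there is nothing substantive to distinguish the two.
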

\begin{proof}
    We distinguish two cases depending on the shape of $S$.
    \begin{itemize}
      \item ($S = \sum_{i\in I} \inp\MessageType_i.S_i$)
        Then $\MessageTypeT = \MessageType_k$ for some $k\in I$.
        From the hypothesis $S \wlred{\out\MessageType} T$ we deduce that there exists a
        family of $T_i$ such that $S_i \wlred{\out\MessageType} T_i$ for every $i\in I$
        and $T = \sum_{i\in I} \inp\MessageType_i.T_i$.
        We conclude $S(\inp\MessageTypeT) = S_k \wlred{\out\MessageType} T_k = T(\inp\MessageTypeT)$.
      \item ($S = \sum_{i\in I} \out\MessageType_i.S_i$)
        Then $\MessageType = \MessageTypeT_k$ and $T = S_k$ for some $k\in I$.
        From the hypothesis that $S(\inp\MessageTypeT)$ is defined and
        \rulename{l-async} we deduce that there exists a family of $T_i$ such that
        $S_i \lred{\inp\MessageTypeT} T_i$ for every $i\in I$.
        We conclude $S(\inp\MessageTypeT) = \sum_{i\in I} \out\MessageType_i.T_i
        \wlred{\out\MessageType} T_k = S_k(\inp\MessageTypeT) = T(\inp\MessageTypeT)$.
        \qedhere
\end{itemize}
\end{proof}

\begin{proposition}
    \label{prop:inputs-after-output}
    If $S \wlred{\out\MessageType} T$, then $\inputs(S) \subseteq \inputs(T)$.
\end{proposition}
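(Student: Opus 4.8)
The plan is to prove the inclusion by a direct case analysis on the shape of $S$, following the same pattern as the proof of \Cref{prop:output-after-input} just above. The first observation is that the hypothesis $S \wlred{\out\MessageType} T$ already rules out $S = \End$, because neither \rulename{do1} (which would require $\End \lred{\out\MessageType}$) nor \rulename{do2} (which would require $\End$ to be an input-headed branching) can conclude a judgment with $S = \End$. Hence $S$ is a branching type and, since mixed choices are disallowed, its top-level actions are either all inputs or all outputs; these are the only two cases to treat.

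In the case $S = \sum_{i\in I}\inp\MessageType_i.S_i$ I would argue that the judgment $S \wlred{\out\MessageType} T$ can only have been obtained by \rulename{do2}: rule \rulename{do1} is excluded because it requires $S \lred{\out\MessageType}$, and an input-headed type has no output transition (an output transition can only come from \SyncTrans applied to a top-level output, while \AsyncTrans produces inputs only). Thus $T = \sum_{i\in I}\inp\MessageType_i.T_i$. Now an input-headed type enables exactly its top-level input tags, since \AsyncTrans is inapplicable to it; therefore $\inputs(S) = \set{\MessageType_i}_{i\in I} = \inputs(T)$, which gives the desired inclusion (indeed an equality) without any appeal to the premises $S_i \wlred{\out\MessageType} T_i$.

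In the case $S = \sum_{i\in I}\out\MessageType_i.S_i$ the situation is symmetric: \rulename{do2} is excluded because its conclusion is input-headed, so the judgment comes from \rulename{do1}, i.e.\ $S \lred{\out\MessageType} T$. This transition can only be an instance of \SyncTrans, so $\MessageType = \MessageType_k$ and $T = S_k$ for some $k\in I$. The key point is that, for an output-headed type, an input is enabled only through \AsyncTrans, whose premise demands that the input be available along every branch; hence $\inputs(S) = \bigcap_{i\in I}\inputs(S_i)$. Since this intersection is contained in $\inputs(S_k) = \inputs(T)$, the inclusion $\inputs(S) \subseteq \inputs(T)$ follows.

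I do not expect a genuine obstacle here, and notably the argument needs neither induction nor coinduction: a single unfolding of the two rules together with an inspection of the shape of $S$ suffices, because $\inputs(\cdot)$ is fully determined by the one-level structure of a type and the ``along every branch'' condition built into \AsyncTrans. The only step that deserves care is the identity $\inputs(S) = \bigcap_{i\in I}\inputs(S_i)$ for an output-headed $S$, which is precisely what makes dropping to a single branch $S_k$ safe (it can only enlarge the set of enabled inputs); this is the reading of \AsyncTrans stressed in the discussion following its definition.
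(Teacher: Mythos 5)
Your proof is correct and follows essentially the same route as the paper's: a case split on whether $S$ is input-headed (where \rulename{do2} forces $T$ to have the same top-level input tags, giving $\inputs(S)=\inputs(T)$) or output-headed (where \rulename{do1} gives $T=S_k$ and $\inputs(S)=\bigcap_{i\in I}\inputs(S_i)\subseteq\inputs(S_k)$). The extra justifications you supply (ruling out $S=\End$, identifying which rule applies in each case) are sound and merely make explicit what the paper leaves implicit.
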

\begin{proof}
    We distinguish two cases depending on the shape of $S$.
    \begin{itemize}
      \item ($S = \sum_{i\in I} \inp\MessageType_i.S_i$)
        From the hypothesis $S \wlred{\out\MessageType} T$ we deduce that there exists a
        family of $T_i$ such that $S_i \wlred{\out\MessageType} T_i$ for every $i\in I$
        and $T = \sum_{i\in I} \inp\MessageType_i.T_i$.
        We conclude $\inputs(S) = \inputs(T)$.
      \item ($S = \sum_{i\in I} \out\MessageType_i.S_i$)
        Then $\MessageType = \MessageType_k$ and $T = S_k$ for some $k\in I$.
        We conclude $\inputs(S) = \bigcap_{i\in I} \inputs(S_i) \subseteq
        \inputs(S_k) = \inputs(T)$.
        \qedhere
    \end{itemize}
\end{proof}
  
\begin{proposition}
    \label{prop:must-output}
    If $\correct{R}{S}$ and $S \wlred{\out\MessageType}$, then there exist $\MessageType_1,
    \dots, \MessageType_n$ and $\Actions$ such that $R
    \xlred{\out\MessageType_1\dots\out\MessageType_n\inp\MessageType\co\Actions} \End$ and $S
    \xlred{\inp\MessageType_1\dots\inp\MessageType_n\out\MessageType\Actions} \End$.
\end{proposition}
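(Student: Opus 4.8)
The plan is to establish the claim by well-founded induction on a measure provided by correctness, analysing the two rules that can justify $S \wlred{\out\MessageType}$ in \eqref{eq:deep-output}. As the measure I take $d(R,S)$, the length of a shortest string $\ActionsB$ with $R \lred{\co\ActionsB} \End$ and $S \lred\ActionsB \End$; such a string exists and is finite by \Cref{prop:correctness} instantiated at the empty string, which also yields $\outputs(R) \subseteq \inputs(S)$ and $\outputs(S) \subseteq \inputs(R)$. Observe first that $S \neq \End$, since neither \rulename{do1} nor \rulename{do2} applies to $\End$; hence $S$ either begins with outputs or with inputs, and these two shapes correspond exactly to the two rules.

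If $S$ begins with outputs, then $S \wlred{\out\MessageType}$ can only hold through \rulename{do1}, so $S \lred{\out\MessageType} S'$ and $\MessageType \in \outputs(S) \subseteq \inputs(R)$, giving $R \lred{\inp\MessageType} R'$. The side conditions of rule \eqref{eq:red} hold by \Cref{prop:correctness}, so $\session{R}{S} \red \session{R'}{S'}$, and correctness is preserved under reduction (immediate from \Cref{def:correctness}), whence $\correct{R'}{S'}$. A terminating string $\Actions$ for $\session{R'}{S'}$ obtained from \Cref{prop:correctness} then gives $R \xlred{\inp\MessageType\co\Actions} \End$ and $S \xlred{\out\MessageType\Actions} \End$, which is the statement with $n = 0$.

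If $S$ begins with inputs, say $S = \sum_{i\in I} \inp\MessageType_i.S_i$, then $S \wlred{\out\MessageType}$ holds through \rulename{do2}, so $S_i \wlred{\out\MessageType}$ for all $i \in I$. Here $\outputs(S) = \emptyset$, so $S$ can only receive; since $\correct{R}{S}$ and $S \neq \End$, the composition cannot be stuck, and because $S$ has no outputs every reduction has the shape $\session{R}{S} \red \session{R'}{S_k}$ with $R \lred{\out\MessageType_k} R'$, $S \lred{\inp\MessageType_k} S_k$ and $\MessageType_k \in \outputs(R) \subseteq \inputs(S)$. Choosing $\MessageType_k$ to be the first label of a shortest terminating string, we get $d(R', S_k) < d(R,S)$, together with $\correct{R'}{S_k}$ and $S_k \wlred{\out\MessageType}$. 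The induction hypothesis applied to $\session{R'}{S_k}$ provides the two traces for $R'$ and $S_k$, and prepending $\out\MessageType_k$ and $\inp\MessageType_k$ respectively yields the traces for $R$ and $S$, with the leading output block increased by one.

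The delicate point is the termination of this recursion. Because $\wlred{\out\MessageType}$ is defined coinductively, the derivation witnessing $S \wlred{\out\MessageType}$ may be infinite---an input branch can loop back to $S$ without ever passing through the output---so one cannot induct on that derivation, and a naive strategy of letting $R$ feed inputs to $S$ could diverge. The measure $d(R,S)$ circumvents this: in the \rulename{do2} case a correct, non-terminated composition whose right component is blocked on inputs is forced to consume exactly one message supplied by $R$, and selecting the message lying on a shortest terminating computation makes $d$ strictly decrease. This is precisely the place where the hypothesis $\correct{R}{S}$, and not merely the shape of $S$, carries the argument.
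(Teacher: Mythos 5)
Your proof is correct. It rests on the same two ingredients as the paper's own argument: the finite terminating trace guaranteed by $\correct{R}{S}$ (via \Cref{prop:correctness}), and the observation that $S \wlred{\out\MessageType}$ forces the output $\out\MessageType$ on $S$'s side, with the matching input $\inp\MessageType$ on $R$'s side, to become enabled after finitely many inputs. The difference is organizational: the paper extracts the input prefix $\inp\MessageType_1\cdots\inp\MessageType_n$ in one shot, by asserting that the terminating trace $\ActionsB$ itself decomposes as $\inp\MessageType_1\cdots\inp\MessageType_n\out\MessageType\ActionsB'$, whereas you build the prefix one message at a time by a well-founded recursion on the length of a shortest terminating trace, with a case analysis on the two rules \rulename{do1} and \rulename{do2}. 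Your formulation is, if anything, the more careful one: the paper's decomposition is not literally forced for an arbitrary terminating trace (once the input prefix is exhausted, the chosen trace may continue with an output different from $\MessageType$ when the derivative of $S$ offers several outputs), and what both proofs actually use is only that $\MessageType \in \outputs(S(\inp\MessageType_1\cdots\inp\MessageType_n)) \cap \inputs(R(\out\MessageType_1\cdots\out\MessageType_n))$, so that correctness of the resulting pair of derivatives supplies the tail $\Actions$; your recursion reroutes through $\out\MessageType$ at exactly that point. Your explicit termination argument via the measure $d(R,S)$ also makes precise why the coinductive derivation of $\wlred{\out\MessageType}$, on which one cannot induct, causes no harm --- a point the paper leaves implicit.
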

\begin{proof}
    From the hypothesis $\correct{R}{S}$ we deduce that there exists $\ActionsB$
    such that $R \lred{\co\ActionsB} \End$ and $S \lred\ActionsB \End$.
    From $S \wlred{\out\MessageType}$ we deduce that $\ActionsB =
    \inp\MessageType_1\cdots\inp\MessageType_n\out\MessageType\ActionsB'$ for some 
    $\MessageType_1$, \dots,
    $\MessageType_n$, $\MessageType$ and $\ActionsB'$.
    Also, $\MessageType \in \outputs(S(\inp\MessageType_1\cdots\inp\MessageType_n)) \cap
    \inputs(R(\out\MessageType_1\cdots\out\MessageType_n))$, therefore we deduce
    $\correct{R(\out\MessageType_1\cdots\out\MessageType_n\inp\MessageType)}
    {S(\inp\MessageType_1\cdots\inp\MessageType_n\out\MessageType)}$.
    We conclude that there exists $\Actions$ such that
    $R(\out\MessageType_1\cdots\out\MessageType_n\inp\MessageType) \lred{\co\Actions} \End$ and
    $S(\inp\MessageType_1\cdots\inp\MessageType_n\out\MessageType) \lred\Actions \End$.
\end{proof}
  
\subsection{Soundness of the Characterization of Subtyping}

\begin{lemma}
    \label{lem:deep-output}
    If\/ $\srel$ is an asynchronous subtyping relation and
    $(S,T) \in \srel$ and $S \lred{\out\MessageType}$, 
    then there exists $T'$ such that
    $T \wlred{\out\MessageType} T'$.
  \end{lemma}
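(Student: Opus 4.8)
The plan is to construct a (possibly infinite) derivation of $T \wlred{\out\MessageType} T'$ by coinduction, carrying along the invariant that each pair under consideration is a pair $(S',T') \in \srel$ with $S' \lred{\out\MessageType}$. First I note that an output transition can only be produced by \SyncTrans (the rule \AsyncTrans emits input labels only), so the hypothesis $S \lred{\out\MessageType}$ forces $S = \sum_{i\in I}\out\MessageType_i.S_i$ with $\MessageType = \MessageType_k$ for some $k \in I$; in particular $S$ begins with outputs and $S \neq \End$. The seed of the construction is the pair $(S,T)$, which satisfies the invariant. Now take any $(S',T') \in \srel$ with $S' \lred{\out\MessageType}$. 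Since $S' \neq \End$, clause \eqref{csubt-end} of \Cref{def:csubt} gives $T' \neq \End$, so $T'$ is a branching which --- mixed choices being disallowed --- begins either with outputs or with inputs. If $T'$ begins with outputs then $\outputs(T')$ holds, and clause \eqref{csubt-out} applied to $S' \lred{\out\MessageType} S''$ yields $T' \lred{\out\MessageType} T''$; hence $T' \wlred{\out\MessageType} T''$ by \rulename{do1} and this branch of the derivation is closed.

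If instead $T' = \sum_{j\in J}\inp\MessageTypeT_j.T'_j$ begins with inputs, then for each $j$ the immediate transition $T' \lred{\inp\MessageTypeT_j} T'_j$ is matched, through clause \eqref{csubt-inp}, by some transition $S' \lred{\inp\MessageTypeT_j} S'_j$ with $(S'_j,T'_j) \in \srel$. The crucial point is that the invariant is preserved on every continuation: since $S'$ begins with outputs, the transition $S' \lred{\inp\MessageTypeT_j} S'_j$ must be derived by \AsyncTrans, so $S'_j$ retains the same top-level outputs as $S'$ and in particular $S'_j \lred{\out\MessageType}$ (alternatively this follows from the diamond property of \Cref{prop:diamond} applied to $S' \lred{\inp\MessageTypeT_j} S'_j$ and $S' \lred{\out\MessageType}$). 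Thus each pair $(S'_j,T'_j)$ again satisfies the invariant, and an application of \rulename{do2} reduces the goal $T' \wlred{\out\MessageType}$ to the goals $T'_j \wlred{\out\MessageType}$ for all $j \in J$, each handled recursively in the same manner.

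Assembling these steps, I define the witness $T'$ by guarded corecursion (the recursive calls appearing in the \rulename{do2} case sit under input prefixes, hence are guarded) and let $\mathcal{R}$ be the family of goal pairs so produced; the case analysis above shows that $\mathcal{R}$ is backward closed under \rulename{do1} and \rulename{do2}, so that $\mathcal{R} \subseteq {\wlred{\out\MessageType}}$ and therefore $T \wlred{\out\MessageType} T'$. Two features make the argument go through. Every node of the derivation is either an output-branching type discharged by \rulename{do1} or an input-branching type expanded by \rulename{do2}, and no node is ever $\End$, because the invariant keeps $S' \neq \End$ and hence $T' \neq \End$; thus the derivation never gets stuck. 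The main obstacle --- and the reason an inductive argument is unavailable --- is that this derivation need not be finite: $T$ may impose arbitrarily long, even infinite, sequences of input prefixes before an output is reached. This is exactly why the relation $\wlred{\out\MessageType}$ is defined coinductively, which is what legitimizes the infinite derivation obtained here.
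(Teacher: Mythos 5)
Your proof is correct and follows essentially the same route as the paper: the paper also constructs the witness $T'$ by a corecursive definition (a function $\F(S,T)$ that returns $T_k$ when $T$ starts with outputs and recurses under input prefixes otherwise), and justifies $T \wlred{\out\MessageType} \F(S,T)$ by coinduction while maintaining exactly your invariant that the pair remains in $\srel$ with the subtype still enabling $\out\MessageType$. Your observation that the invariant is preserved because the input transition of an output-prefixed $S'$ must come from \AsyncTrans (or, equivalently, by \Cref{prop:diamond}) is the same key step the paper uses.
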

  \begin{proof}%
    Let $\F(S,T)$ be the session type corecursively defined by the following
    equations
    \[
      \F(S,T) =
      \begin{cases}
        T_k
        & \text{if $T = \sum_{i\in I} \out\MessageType_i.T_i$ and $\MessageType = \MessageType_k$ with $k\in I$}
        \\
        \Branch_{i\in I} \inp\MessageType_i.\F(S(\inp\MessageType_i), T_i) &
        \text{if $T = \sum_{i\in I} \inp\MessageType_i.T_i$}
      \end{cases}
    \]
    under the hypotheses that\/ $\srel$ is an asynchronous subtyping relation,
    $(S,T) \in \srel$ and $S \lred{\out\MessageType}$.
    Note that each corecursive application of $\F$ concerns session types that
    satisfy the same hypotheses.
    Now we take $T' \eqdef \F(S,T)$ and we show that $T \wlred{\out\MessageType} T'$.
  
    We apply the coinduction principle to show
    that, under the hypotheses $(S,T) \in \srel$ and $S \lred{\out\MessageType}$, 
    $T \wlred{\out\MessageType} \F(S,T)$ is the conclusion of one of the rules in
    \eqref{eq:deep-output} whose premises also concern session types that satisfy
    the same hypotheses.
    We reason by cases on the shape of $T$.
    \begin{itemize}
      \item $T = \sum_{i\in I} \out\MessageType_i.T_i$ where 
        $\MessageType = \MessageType_k$ with $k\in I$ then $\F(S,T) = T_k$.\\
        Now $T \lred{\out\MessageType} T_k$ and we conclude by observing that $T
        \wlred{\out\Tag} \F(S,T)$ is the conclusion of \rulename{do1} which has no
        premises.
      \item $T = \sum_{i\in I} \inp\MessageType_i.T_i$ then $\F(S,T) = \sum_{i\in
        I} \inp\MessageType_i.\F(S(\inp\MessageType_i),T_i)$.\\
        Now from the hypotheses $(S,T) \in \srel$ and $S \lred{\out\MessageType}$ we deduce
        $(S(\inp\MessageType_i), T_i) \in \srel$ and 
        $S(\inp\MessageType_i) \lred{\out\MessageType}$, for every $i\in I$.
        In the application of the coinductive principle, the latters are 
        the hypotheses related with the conclusions 
        $T_i \wlred{\out\MessageType} \F(S(\inp\MessageType_i), T_i)$, for every $i\in I$.
        We conclude by observing that the latters are the premises for an application
        of \rulename{do2} with conclusion $T \wlred{\out\MessageType} \F(S,T)$.
    \end{itemize}
\end{proof}

\begin{lemma}
    \label{lem:supertype_deep_output}
    Let\/ $\srel$ be the maximal asynchronous subtyping relation included in $\conv$.
    If $(S,T) \in \srel$ and $S \lred{\out\MessageType} S'$ and 
    $T \wlred{\out\MessageType} T'$, then $(S',T') \in \srel$.    
\end{lemma}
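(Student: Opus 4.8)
The plan is to exploit the maximality of $\srel$. Writing
\[
  D \eqdef \set{ (S(\out\MessageType), T') \mid (S,T)\in\srel,\ S\lred{\out\MessageType},\ T\wlred{\out\MessageType}T' },
\]
it suffices to prove that $\srel \cup D$ is an asynchronous subtyping relation included in $\conv$: by maximality we then get $\srel\cup D\subseteq\srel$, and in particular $(S',T')\in\srel$. Observe first that $S\lred{\out\MessageType}S'$ can only be derived by \SyncTrans (since \AsyncTrans produces input labels), so $S = \sum_{i\in I}\out\MessageType_i.S_i$ with $\MessageType = \MessageType_k$ and $S' = S_k$ for some $k\in I$; in particular $S$ starts with outputs.

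For the clauses of \Cref{def:csubt} I would argue by induction on the derivation of $T\wlred{\out\MessageType}T'$, the pairs of $\srel$ being handled trivially. In the base case \rulename{do1} we have $T\lred{\out\MessageType}T'$, hence $\outputs(T)$ holds; clause \eqref{csubt-out} applied to $(S,T)\in\srel$ together with $S\lred{\out\MessageType}S'$ yields $T\lred{\out\MessageType}T''$ with $(S',T'')\in\srel$, and $T'' = T'$ by determinacy of the LTS (condition \eqref{wf:branch}), so the pair already lies in $\srel$. In the inductive case \rulename{do2} we have $T = \sum_{i\in I}\inp\MessageType_i.T_i$ and $T' = \sum_{i\in I}\inp\MessageType_i.T_i'$ with $T_i\wlred{\out\MessageType}T_i'$, so $T'$ starts with inputs, $\outputs(T')=\emptyset$ and $T'\ne\End$: clauses \eqref{csubt-end} and \eqref{csubt-out} are vacuous. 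For clause \eqref{csubt-inp}, the only transitions of $T'$ are $T'\lred{\inp\MessageType_i}T_i'$; from $(S,T)\in\srel$ and $T\lred{\inp\MessageType_i}T_i$, clause \eqref{csubt-inp} gives $S\lred{\inp\MessageType_i}S(\inp\MessageType_i)$ with $(S(\inp\MessageType_i),T_i)\in\srel$. Applying the diamond property (\Cref{prop:diamond}) to $S\lred{\out\MessageType}S'$ and $S\lred{\inp\MessageType_i}S(\inp\MessageType_i)$ produces $U_i$ with $S'\lred{\inp\MessageType_i}U_i$ and $S(\inp\MessageType_i)\lred{\out\MessageType}U_i$; since $(S(\inp\MessageType_i),T_i)\in\srel$, $S(\inp\MessageType_i)\lred{\out\MessageType}$ and $T_i\wlred{\out\MessageType}T_i'$, the continuation $(U_i,T_i')$ belongs to $D$. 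This discharges \eqref{csubt-inp} with every continuation in $\srel\cup D$, so $\srel\cup D$ satisfies all the clauses.

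The remaining and most delicate obligation is $D\subseteq\conv$, i.e. $S'\conv T'$ whenever $S\conv T$ (which holds because $(S,T)\in\srel\subseteq\conv$), $S\lred{\out\MessageType}S'$ and $T\wlred{\out\MessageType}T'$. I would establish this as a separate claim by induction on the derivation of $S\conv T$. The key ingredients are the trace characterizations $\tr(S') = \set{\ActionsA\mid\out\MessageType\ActionsA\in\tr(S)}$, which is immediate, and $\tr(T') = \set{\ActionsB\gamma\mid \ActionsB \text{ made of inputs},\ \ActionsB\out\MessageType\gamma\in\tr(T)}$, which follows from \Cref{prop:output-after-input} by pulling $\out\MessageType$ back across the leading input prefix. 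Given $\ActionsA\in\tr(T')\setminus\tr(S')$, decomposed as $\ActionsA = \ActionsB\gamma$ with $\ActionsB\out\MessageType\gamma\in\tr(T)$, one first shows that the reinserted trace $\ActionsB\out\MessageType\gamma$ lies in $\tr(T)\setminus\tr(S)$: were it in $\tr(S)$, commuting $\out\MessageType$ past the input prefix $\ActionsB$ via \Cref{prop:input-after-output} and the diamond would give $\ActionsB\gamma = \ActionsA\in\tr(S')$, a contradiction. The convergence witness supplied for $\ActionsB\out\MessageType\gamma$ by the rule defining $S\conv T$ is then transported to a witness for $\ActionsA$ with respect to $S'\conv T'$, relocating the output $\out\MessageType$ with the same commutation properties and invoking the (strictly smaller) induction hypothesis on the resulting derivatives.

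I expect this final transport step to be the crux. The occurrence of $\out\MessageType$ that $\wlred{\out\MessageType}$ moves to the front must be tracked across the convergence witness, and the bookkeeping splits according to whether the witnessing prefix falls within the input segment $\ActionsB$ or beyond the relocated output. Each sub-case is closed by \Cref{prop:input-after-output}, \Cref{prop:output-after-input} and the diamond property, but reassembling them into a well-founded derivation of $S'\conv T'$ is where the real work lies; the rest of the argument is essentially the routine clause-checking sketched above.
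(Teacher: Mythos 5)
Your overall architecture is exactly the paper's: extend $\srel$ with the set $D$ of new pairs, appeal to maximality, check the clauses of \Cref{def:csubt} by case analysis on the last rule used to derive $T \wlred{\out\MessageType} T'$ (with \Cref{prop:diamond} closing the \rulename{do2} case by landing the continuations back in $D$), and separately prove $D \subseteq {\conv}$. The clause-checking part of your argument is correct and coincides with the paper's.

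The gap is in the part you yourself flag as the crux. Your plan for $D\subseteq{\conv}$ is an induction on the derivation of $S\conv T$ with a ``transport'' of its witnesses. Take an offending trace $\ActionsA=\ActionsB\gamma\in\tr(T')\setminus\tr(S')$ with $\ActionsB$ made of inputs and $\ActionsB\out\MessageType\gamma\in\tr(T)\setminus\tr(S)$, and let $\ActionsB_0\leq\ActionsB\out\MessageType\gamma$ and $\TagB$ be the witness provided by the rule deriving $S\conv T$. When $\ActionsB_0$ contains the output $\out\MessageType$ the transport is indeed immediate, and needs no induction hypothesis: the premise of $S\conv T$ already concerns derivatives of $S'$ and $T'$. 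But when $\ActionsB_0$ consists of inputs only, the witness tag $\TagB$ is existentially quantified and need not be $\MessageType$: if $\TagB\ne\MessageType$, the derivative $S(\ActionsB_0\out\TagB)$ lives on an output branch of $S$ different from the one $S'$ has committed to, so it is not a derivative of $S'$ at all, and neither the transported witness nor your induction hypothesis (which would require $S(\ActionsB_0\out\TagB)\lred{\out\MessageType}$, generally false) applies. Even in the favourable sub-case $\TagB=\MessageType$, the premise only yields $S'(\ActionsB_0)\conv T'(\ActionsB_0)$, which is not of the shape ``prefix followed by an output'' demanded by the rule defining $\conv$. The paper's way out is to abandon the $S\conv T$ witness altogether in the all-inputs case: using the clauses of \Cref{def:csubt} it propagates $(S,T)\in\srel$ through the inputs $\ActionsB_0$ and then through $\out\MessageType$, obtaining $(S'(\ActionsB_0),T'(\ActionsB_0))\in\srel\subseteq{\conv}$, and then unfolds the definition of $\conv$ \emph{at that pair} to extract a fresh witness $\ActionsB'$ and tag $\TagC$ giving the required $S'(\ActionsB_0\ActionsB'\out\TagC)\conv T'(\ActionsB_0\ActionsB'\out\TagC)$. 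Note that this step uses the full hypothesis ${\srel}\subseteq{\conv}$, which your stand-alone claim ``$S\conv T$ implies $S'\conv T'$'' does not carry, and that no induction on the derivation of $S\conv T$ is needed anywhere.
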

\begin{proof}
Let\/ $\srel$ be the maximal asynchronous subtyping relation included in $\conv$.
We consider the relation 
$$
\srel' \eqdef \srel \cup \{(S',T') \mid \
         				   (S,T) \in \srel \wedge 
				            S \lred{\out\MessageType} S' \wedge 
		            	  	T \wlred{\out\MessageType}  T'\}
$$
obtained by extending $\srel$ with the additional pairs $(S',T')$ satisfying
the hypothesis in the statement of the lemma. We then prove that 
$\srel' \subseteq \srel$, thus the thesis follows, i.e.,
the additional pairs in $\srel'$ are also in $\srel$.

In order to prove that $\srel' \subseteq \srel$ we can concentrate
only on the additional pairs $(S',T')$ such that there exist 
$(S,T) \in \srel$ with $S \lred{\out\MessageType} S'$ and 
$T \wlred{\out\MessageType} T'$ because the other pairs $\srel'$ are
in $\srel$ by definition.

We reason by cases on the rule used to derive 
$T \wlred{\out\MessageType} T'$. 

If the rule is $\rulename{do1}$, then
$T \lred{\out\MessageType} T'$, hence also $\outputs(T)$.
Being $\srel$ an asynchronous subtyping, we have that
$(S,T) \in \srel$, $\outputs(T)$, $S \lred{\out\MessageType} S'$, and
$T \lred{\out\MessageType} T'$, imply that also the
additional pair $(S',T') \in \srel$.

If the rule is $\rulename{do2}$ we proceed as follows: we show that 
$\srel'$, which includes the additional pairs
$(S',T')$, is an asynchronous subtyping relation included in $\conv$.
Being $\srel$ the maximal of such relations, it also includes
the additional pairs $(S',T')$.

We first prove that $\srel'$ is an asynchronous subtyping relation,
by showing that all pairs $(S,T) \in \srel'$ satisfy the three clauses of
\Cref{def:csubt}. For all pairs $(S,T)$, excluding the additional pairs $(S',T')$
with $T \wlred{\out\MessageType} T'$ derived from rule $\rulename{do2}$,
the three clauses trivially hold because we already know that $(S,T) \in \srel$
and $\srel$ is an asynchronous subtyping.
We consider now the non trivial pairs: 
$(S',T')$ such that there exist $(S,T) \in \srel$ with
$S \lred{\out\MessageType} S'$ and $T \wlred{\out\MessageType} T'$
derived from rule $\rulename{do2}$.
The latter implies that $T$ starts with inputs hence also
$T'$ starts with inputs, hence $T' \neq \End$ and $\outputs(T')$
does not hold. The unique of the three clauses that apply to the
pair $(S',T')$ is the second one:
\begin{itemize}
\item
$T' \lred{\inp\MessageTypeT} T''$.\\
We have already noticed that $T'$ and $T$ start with inputs.
As $T \wlred{\out\MessageType} T'$ is derived from rule $\rulename{do2}$,
$T' \lred{\inp\MessageTypeT}$ implies that also $T \lred{\inp\MessageTypeT}$
because $T$ and $T'$ have the same initial inputs.
Hence there exists $T'''$ such that $T \lred{\inp\MessageTypeT} T'''$,
moreover, by $\rulename{do2}$, we also have $T''' \wlred{\out\MessageType} T''$.
We now consider $S$. By hypothesis we know that $S \lred{\out\MessageType} S'$. 
Moreover, being $(S,T) \in \srel$, which is an asynchronous subtyping relation,
we also have that $T \lred{\inp\MessageTypeT} T'''$ implies the existence
of $S'''$ such that $S \lred{\inp\MessageTypeT} S'''$ and $(S''',T''') \in \srel$.
Hence we have both $S \lred{\out\MessageType} S'$ and $S \lred{\inp\MessageTypeT} S'''$;
by \Cref{prop:diamond} there exists $S''$ such that 
$S' \lred{\inp\MessageTypeT} S''$ and $S''' \lred{\out\MessageType} S''$.
By $(S''',T''') \in \srel$, $S''' \lred{\out\MessageType} S''$, and
$T''' \wlred{\out\MessageType} T''$ , we conclude that $(S'',T'')$ is one
of the additional pairs in $\srel'$.
So we can conclude that the clause holds because $S' \lred{\inp\MessageTypeT} S''$
and $(S'',T'') \in \srel'$.
\end{itemize}

We now show that $\srel' \subseteq\ \conv$. To do so it is sufficient to
prove that the additional pairs $(S',T')$, such that there exist 
$(S,T) \in \srel$ with $S \lred{\out\MessageType} S'$ and $T \wlred{\out\MessageType} T'$,
are included in $\conv$. In fact, all the other pairs of $\srel'$ are also in $\srel$
hence also in $\conv$, by definition of $\srel$.
Let $(S',T')$ be one of such pairs.
We have that 
\[
\begin{array}{lll}
\tr(S') & = & \{\ActionsA' \ActionsA'' \mid \ActionsA' \out\MessageType \ActionsA'' \in \tr(S)\ \text{and}\ 
				\text{$\ActionsA'$ contains only inputs} \} 
\\
\tr(T') & = & \{\ActionsA' \ActionsA'' \mid \ActionsA' \out\MessageType \ActionsA'' \in \tr(T)\ \text{and}\ 
				\text{$\ActionsA'$ contains only inputs} \} 
\end{array}
\]
Let $\ActionsA\in\tr(T')\setminus\tr(S')$.
We have that $\ActionsA = \ActionsA' \ActionsA''$ with $\ActionsA'$ which contains only
inputs and $\ActionsA' \out\MessageType \ActionsA'' \in \tr(T)$. 
As $\ActionsA \not\in \tr(S')$,
we have $\ActionsA' \out\MessageType \ActionsA'' \not\in \tr(S)$.
$(S,T) \in \srel$ implies that $S \conv T$:
hence $\exists\ActionsB \leq \ActionsA' \out\MessageType \ActionsA''$ and $\TagB$
s.t. $S(\ActionsB\out\TagB) \conv T(\ActionsB\out\TagB)$.
There are two possible cases:
\begin{itemize}
\item
$\ActionsB$ contains at least one output.\\
In this case we have $\ActionsB=\ActionsA' \out\MessageType \ActionsB'$
with $\ActionsB' \leq \ActionsA''$.
We have that $S(\ActionsA' \out\MessageType \ActionsB' \out\TagB) = 
S'(\ActionsA' \ActionsB' \out\TagB)$ as $S'$ has already executed the
output labeled with $\out\MessageType$. The same holds also for $T$ and $T'$,
i.e., $T(\ActionsA' \out\MessageType \ActionsB' \out\TagB) = 
T'(\ActionsA' \ActionsB' \out\TagB)$. Thus
$S(\ActionsB\out\TagB) \conv T(\ActionsB\out\TagB)$ and
$\ActionsB=\ActionsA' \out\MessageType \ActionsB'$ imply that 
$S'(\ActionsA' \ActionsB' \out\TagB) \conv T'(\ActionsA' \ActionsB' \out\TagB)$.
We conclude that there exist $\ActionsA' \ActionsB' \leq \ActionsA$ 
and $\TagB$ such that 
$S'(\ActionsA' \ActionsB' \out\TagB) \conv T'(\ActionsA' \ActionsB' \out\TagB)$.
\item
$\ActionsB$ contains no output.\\ 
In this case we have 
$\ActionsB \leq \ActionsA'$, i.e., $\ActionsA' = \ActionsB \ActionsA'''$
(with also $\ActionsA'''$ containing only inputs).
As $T(\ActionsB\out\TagB)$ is defined, we have $T(\ActionsB) \lred{\out\TagB}$,
hence $\outputs(T(\ActionsB))$.
We have that $(S,T) \in \srel$: being $\srel$ an asynchronous subtyping relation,
by repeated application of the second rule of \Cref{def:csubt} we can conclude that
also $(S(\ActionsB), T(\ActionsB)) \in \srel$ because $\ActionsB$ contains only
inputs. By the third rule of the same definition, $\outputs(T(\ActionsB))$
implies also $\outputs(S(\ActionsB))$.
We now observe that the input transitions $\ActionsB$ are anticipated
deep transitions for $S$, because $S$ starts with outputs, namely, 
$S(\ActionsB) \lred{\out\MessageType} S'$.
Hence the initial output $\out\MessageType$ of $S$ remain
also in $S(\ActionsB)$, i.e., $S(\ActionsB) \lred{\out\MessageType}$.
As $(S(\ActionsB), T(\ActionsB)) \in \srel$, by the third rule of \Cref{def:csubt}
we have that also $T(\ActionsB) \lred{\out\MessageType}$ and
$(S(\ActionsB \out\MessageType), T(\ActionsB \out\MessageType)) \in \srel$.
But $S'$ (resp. $T'$) differs from $S$ (resp. $T$) because
it has already consumed the output $\out\MessageType$, thus 
$S(\ActionsB \out\MessageType)=S'(\ActionsB)$
(resp. $T(\ActionsB \out\MessageType)=T'(\ActionsB)$).
By $(S(\ActionsB \out\MessageType), T(\ActionsB \out\MessageType)) \in \srel$,
$S(\ActionsB \out\MessageType)=S'(\ActionsB)$ and $T(\ActionsB \out\MessageType)=T'(\ActionsB)$,
we conclude $(S'(\ActionsB), T'(\ActionsB)) \in \srel$.
As $\srel \subseteq\ \conv$, $(S'(\ActionsB), T'(\ActionsB)) \in \srel$ 
implies $S'(\ActionsB) \conv T'(\ActionsB)$.
We now observe that 
$\ActionsA''' \ActionsA'' \in\tr(T'(\ActionsB))\setminus\tr(S'(\ActionsB))$;
this holds because $\ActionsB \ActionsA''' \ActionsA'' = \ActionsA \in
\tr(T')\setminus\tr(S')$.
By definition of $\conv$, $S'(\ActionsB) \conv T'(\ActionsB)$ and 
$\ActionsA''' \ActionsA'' \in\tr(T'(\ActionsB))\setminus\tr(S'(\ActionsB))$ 
imply the existence of $\ActionsB' \leq \ActionsA''' \ActionsA''$ and $\TagC$ such that
$S'(\ActionsB\ActionsB'\out\TagC) \conv T'(\ActionsB\ActionsB'\out\TagC)$.
We have that $\ActionsA = \ActionsA' \ActionsA''$, $\ActionsA' = \ActionsB \ActionsA'''$,
and $\ActionsB' \leq \ActionsA''' \ActionsA''$:
these imply $\ActionsB\ActionsB' \leq \ActionsA$.
We then conclude that there exist $\ActionsB\ActionsB' \leq \ActionsA$ 
and $\TagC$ such that 
$S'(\ActionsB\ActionsB' \out\TagC) \conv T'(\ActionsB\ActionsB' \out\TagC)$.
\end{itemize}
As the above reasoning applies to any 
$\ActionsA\in\tr(T')\setminus\tr(S')$, we can conclude $S' \conv T'$.
\end{proof}

\begin{lemma}
    \label{lem:correct-deep-output}
    If $\correct{R}{T}$ and $R \lred{\inp\MessageType} R'$ and $T \wlred{\out\MessageType} T'$,
    then $\correct{R'}{T'}$.
\end{lemma}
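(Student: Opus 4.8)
The plan is to prove $\correct{R'}{T'}$ through the trace characterization of correctness given by \Cref{prop:correctness}: it suffices to show that, for every synchronised trace $R' \lred{\co\Actions} R''$ and $T' \lred{\Actions} T''$, the safety conditions $\outputs(R'') \subseteq \inputs(T'')$ and $\outputs(T'') \subseteq \inputs(R'')$ hold and there is a common $\ActionsB$ with $R'' \lred{\co\ActionsB} \End$ and $T'' \lred{\ActionsB} \End$. The guiding intuition is that the matched communication of $\MessageType$ (received by $R$, emitted deep inside $T$) can be \emph{postponed}: I will mirror the given trace of $\session{R'}{T'}$ by a genuine reduction sequence of $\session{R}{T}$ in which the $\MessageType$-exchange is reinserted exactly where $T$ reaches the output $\out\MessageType$.

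Concretely, I would process $\Actions$ step by step while maintaining the invariant that the reached state $\session{\hat R}{\hat T}$ of $\session{R}{T}$, obtained by valid reductions and hence itself correct since correctness is preserved under $\red$, either coincides with the current state of $\session{R'}{T'}$ or still satisfies $\hat R \lred{\inp\MessageType} R''$ and $\hat T \wlred{\out\MessageType} T''$. Each step is driven by the last rule of $\hat T \wlred{\out\MessageType} T''$. If it is \rulename{do1}, then $\hat T \lred{\out\MessageType} T''$; since the correctness of $\session{\hat R}{\hat T}$ makes the side conditions hold, $\session{\hat R}{\hat T} \red \session{R''}{T''}$ in one step and from there on the two runs coincide. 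If it is \rulename{do2}, then $\hat T$ begins with inputs; because $\outputs(\hat T)=\emptyset$ and correctness forbids non-final stuck states, $\hat R$ must begin with outputs, so $\hat R \lred{\inp\MessageType} R''$ is an \rulename{l-async} transition and $\outputs(\hat R)=\outputs(R'')$ by \Cref{prop:outputs-after-input}. The move taken by $\session{R'}{T'}$ is then necessarily an output of $R'$ matched by a top input of $T'$, which I mirror by the corresponding reduction of $\session{\hat R}{\hat T}$, re-establishing the invariant through \Cref{prop:input-after-output} and the premises of \rulename{do2}.

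Once the finite trace is consumed, I transfer the three conditions from the correct state $\session{\hat R}{\hat T}$ to $\session{R''}{T''}$. In the identity case there is nothing to do. Otherwise $\hat T \wlred{\out\MessageType} T''$ still holds: if its last rule is \rulename{do1} I reduce $\session{\hat R}{\hat T}$ directly onto $\session{R''}{T''}$ and conclude $\correct{R''}{T''}$, which already yields the three conditions; if it is \rulename{do2}, then $\outputs(T'')=\emptyset$ settles the second safety condition, while $\outputs(R'')=\outputs(\hat R) \subseteq \inputs(\hat T) \subseteq \inputs(T'')$ settles the first one, using \Cref{prop:outputs-after-input}, the correctness of $\session{\hat R}{\hat T}$, and \Cref{prop:inputs-after-output}. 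For liveness I apply \Cref{prop:must-output} to $\correct{\hat R}{\hat T}$ and $\hat T \wlred{\out\MessageType}$ to obtain dual terminating traces that cross the $\MessageType$-exchange; I then delete the $\out\MessageType$ on $T$'s side (using the description of $\tr(T')$ from the proof of \Cref{lem:supertype_deep_output}) and pull the $\inp\MessageType$ to the front on $R$'s side by repeated use of \Cref{prop:diamond} and \Cref{prop:input-after-output}, producing the required common $\ActionsB$.

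The main obstacle I anticipate is the bookkeeping of this simulation, in particular verifying that $\session{\hat R}{\hat T}$ stays correct and that $\hat R$ always offers the output needed to mirror a \rulename{do2} step; this is precisely where the no-stuck clause of correctness and \Cref{prop:outputs-after-input} are indispensable. A welcome side effect of routing everything through \Cref{prop:correctness} is that, although $\wlred{\out\MessageType}$ may admit infinite derivations (the tracked output can sit arbitrarily deep, or never be reached along some branches), no coinduction is required: each instance only inspects the finite trace $\Actions$ and a finite terminating witness furnished by \Cref{prop:must-output}.
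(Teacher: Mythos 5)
Your proof is correct and follows essentially the same route as the paper's: both reduce the claim to the trace characterization of \Cref{prop:correctness}, split on whether the trace of $\session{R'}{T'}$ reaches the point where $T$'s deep output $\out\MessageType$ surfaces (in which case the $\MessageType$-exchange is reinserted into a trace of the correct composition $\session{R}{T}$) or never does (in which case safety follows from \Cref{prop:input-after-output,prop:outputs-after-input,prop:inputs-after-output} and liveness from \Cref{prop:must-output}). The only difference is presentational: you unroll this dichotomy as a step-by-step simulation with an invariant, whereas the paper performs the case analysis on the shape of the whole trace at once.
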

\begin{proof}
    We use the alternative characterization of correctness given by
    \Cref{prop:correctness} in order to show that $\correct{R'}{T'}$.
    Assume $R' \lred{\co\Actions} R''$ and $T' \lred\Actions T''$ for some
    $\Actions$.
    We have to prove $\outputs(R'') \subseteq \inputs(T'')$ and $\outputs(T'')
    \subseteq \inputs(R'')$ and that there exists $\ActionsB$ such that $R''
    \lred{\co\ActionsB} \End$ and $T'' \lred\ActionsB \End$.
    We distinguish two possibilities, depending on whether or not $\Actions$
    begins with enough input actions so as to enable the $\out\MessageType$ output from
    $T$.
    \begin{itemize}
      \item ($\Actions = \inp\MessageType_1\dots\inp\MessageType_n\Actions'$ and 
      $T \xlred{\inp\MessageType_1\dots\inp\MessageType_n}\lred{\out\MessageType}$)
      Then we have $R \xlred{\out\MessageType_1\dots\out\MessageType_n\inp\MessageType\co\Actions'} R''$
      and $T \xlred{\inp\MessageType_1\dots\inp\MessageType_n\out\MessageType\Actions'} T''$ and we
      conclude that $R''$ and $T''$ satisfy the desired properties from the
      hypothesis $\correct{R}{T}$.
      \item ($\Actions = \inp\MessageType_1\dots\inp\MessageType_n$ and 
      $T \xlred{\inp\MessageType_1\dots\inp\MessageType_n}\nlred{\out\MessageType}$)
      Then $\outputs(T(\inp\MessageType_1\dots\inp\MessageType_n)) = \emptyset$.
      From the hypothesis $\correct{R}{T}$ we deduce that
      $\outputs(R(\out\MessageType_1\dots\out\MessageType_n)) \ne \emptyset$ and we obtain
      \[
        \begin{array}{r@{~}c@{~}ll}
            \outputs(R'') & = & \outputs(R'(\out\MessageType_1\dots\out\MessageType_n))
            \\
          & \subseteq & \outputs(R(\out\MessageType_1\dots\out\MessageType_n))
          & \text{by \Cref{prop:output-after-input,prop:outputs-after-input}}
          \\
          & \subseteq & \inputs(T(\inp\MessageType_1\dots\inp\MessageType_n))
          & \text{from $\correct{R}{T}$}
          \\
          & \subseteq & \inputs(T'(\inp\MessageType_1\dots\inp\MessageType_n))
          & \text{by \Cref{prop:input-after-output,prop:inputs-after-output}}
          \\
          & = & \inputs(T'')
        \end{array}
      \]

      From $\outputs(T(\inp\MessageType_1\dots\inp\MessageType_n)) = \emptyset$ we also deduce
      $\outputs(T'') = \outputs(T'(\inp\MessageType_1\dots\inp\MessageType_n)) = \emptyset
      \subseteq \inputs(R'')$. 
    \end{itemize}
    Concerning the existence of $\ActionsB$ with the desired properties, this is
    a straightforward consequence of \Cref{prop:must-output}.
\end{proof}
  
\begin{lemma}
    \label{lem:csubt-simulation}
    Let $\srel$ be the maximal asychronous subtyping relation contained
    in $\conv$. If $(S,T) \in \srel$ and $\correct{R}{T}$ and 
    $\session{R}{S} \red \session{R'}{S'}$, then there exists $T'$
    such that $(S',T') \in \srel$ and $\correct{R'}{T'}$.
\end{lemma}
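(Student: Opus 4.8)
The plan is to analyze the single reduction step $\session{R}{S} \red \session{R'}{S'}$ directly. By rule~\eqref{eq:red} this step is witnessed by a label $\Action$ with $S \lred\Action S'$ and $R \lred{\co\Action} R'$, together with the side conditions $\outputs(R) \subseteq \inputs(S)$ and $\outputs(S) \subseteq \inputs(R)$. I would then proceed by cases on the polarity of the action $\Action$ performed by $S$, the subtype on the right-hand side of the composition. In both cases I will exhibit a single $T'$ serving for the two conjuncts $(S',T') \in \srel$ and $\correct{R'}{T'}$ of the conclusion.

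In the first case, where $\Action = \inp\MessageType$, the subtype $S$ consumes an input while $R \lred{\out\MessageType} R'$ emits the dual output; this is the ``synchronous-looking'' case, which I expect to collapse to ordinary preservation of correctness. From $\correct{R}{T}$ and \Cref{prop:correctness} (instantiated at the empty trace) I obtain $\outputs(R) \subseteq \inputs(T)$ and $\outputs(T) \subseteq \inputs(R)$; since $\MessageType \in \outputs(R)$, the first inclusion gives $T \lred{\inp\MessageType} T'$ with $T' \eqdef T(\inp\MessageType)$. Applying clause~\eqref{csubt-inp} of \Cref{def:csubt} to $(S,T) \in \srel$ then yields $S \lred{\inp\MessageType} S''$ with $(S'', T') \in \srel$, and by uniqueness of derivatives (well-formedness condition~\eqref{wf:branch}) we have $S'' = S(\inp\MessageType) = S'$, so $(S', T') \in \srel$. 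Finally, $R \lred{\out\MessageType} R'$ and $T \lred{\inp\MessageType} T'$ together with the two inclusions above show that $\session{R}{T} \red \session{R'}{T'}$ is itself a legal reduction; since correctness is immediately preserved under reduction (every state reachable from $\session{R'}{T'}$ is reachable from $\session{R}{T}$, so \Cref{def:correctness} transfers), $\correct{R}{T}$ gives $\correct{R'}{T'}$. Note that maximality of $\srel$ is not needed here.

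In the second case, where $\Action = \out\MessageType$, the subtype $S \lred{\out\MessageType} S'$ emits an output while $R \lred{\inp\MessageType} R'$ consumes it. This is the genuinely asynchronous case: $T$ need not begin with an output, and the matching output may be buried beneath a block of inputs, so the three previously established deep-output lemmas are exactly what is required. From $(S,T) \in \srel$ and $S \lred{\out\MessageType}$, \Cref{lem:deep-output} produces some $T'$ with $T \wlred{\out\MessageType} T'$. Fixing this $T'$ and threading it through the remaining two lemmas, \Cref{lem:supertype_deep_output}---whose hypothesis that $\srel$ is the maximal asynchronous subtyping relation included in $\conv$ is precisely what we are given---yields $(S', T') \in \srel$, while \Cref{lem:correct-deep-output}, applied to $\correct{R}{T}$, $R \lred{\inp\MessageType} R'$ and $T \wlred{\out\MessageType} T'$, yields $\correct{R'}{T'}$.

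The whole argument is therefore a case split followed by an appeal to results already in hand, so I do not anticipate a genuine obstacle at this level; the substance has been discharged in \Cref{lem:deep-output,lem:supertype_deep_output,lem:correct-deep-output}. The only points demanding care are recognizing that the input case reduces to preservation of correctness along an \emph{actual} reduction of $\session{R}{T}$ (rather than needing a deep-output argument), and ensuring that the single $T'$ furnished by \Cref{lem:deep-output} is the same one fed to both \Cref{lem:supertype_deep_output} and \Cref{lem:correct-deep-output} in the output case.
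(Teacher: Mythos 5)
Your proof is correct and follows essentially the same route as the paper: the same case split on the polarity of the action performed by $S$, with the input case handled by clause~\eqref{csubt-inp} of \Cref{def:csubt} together with preservation of correctness along the induced reduction of $\session{R}{T}$, and the output case discharged by the same chain of \Cref{lem:deep-output}, \Cref{lem:supertype_deep_output} and \Cref{lem:correct-deep-output} applied to a single common $T'$. Your version merely spells out a few steps (uniqueness of derivatives, the legality of the reduction $\session{R}{T} \red \session{R'}{T'}$) that the paper leaves implicit.
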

\begin{proof}
    By definition of session reduction we deduce that $\outputs(R) \subseteq
    \inputs(S)$ and $\outputs(S) \subseteq \inputs(R)$ and there exists $\Action
    \in \co{\actions(R)} \cap \actions(S)$ such that $R' = R(\co\Action)$ and
    $S' = S(\Action)$.
    Let us distinguish two subcases depending on the shape of $\Action$.
    \begin{itemize}
        \item ($R \lred{\out\MessageType} R'$ and $S \lred{\inp\MessageType} S'$)
        From the hypothesis $\correct{R}{T}$ we deduce $T \lred{\inp\MessageType} T'$
        where $\correct{R'}{T'}$.
        Being $\srel$ an asynchronous session subtyping relation, from the hypothesis 
        $(S,T) \in \srel$ we conclude $(S',T') \in \srel$ by rule 2 of \Cref{def:csubt}.

        \item ($R \lred{\inp\MessageType} R'$ and $S \lred{\out\MessageType} S'$)
        From \Cref{lem:deep-output} we deduce that there exists $T'$ such that
        $T \wlred{\out\Tag} T'$.
        From \Cref{lem:supertype_deep_output} we deduce that $(S',T') \in \srel$.
        We conclude $\correct{R'}{T'}$ using \Cref{lem:correct-deep-output}.
        \qedhere
    \end{itemize}
\end{proof}

\begin{lemma}
    \label{lem:csubt-termination}
    If $S \conv T$ and $\correct{R}{T}$, then $\session{R}{S} \wred
    \session{\End}{\End}$.
\end{lemma}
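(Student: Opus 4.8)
The plan is to argue by well-founded induction on the derivation of $S \conv T$, using the correct \emph{guide} composition $\session{R}{T}$ to steer the reductions of $\session{R}{S}$ towards $\session{\End}{\End}$. The starting observation is that $\correct{R}{T}$, through the reformulation in \Cref{prop:correctness} instantiated at the empty trace, yields a string $\ActionsA$ with $R \lred{\co\ActionsA} \End$ and $T \lred\ActionsA \End$; thus $\ActionsA\in\tr(T)$ and it drives $\session{R}{T} \wred \session{\End}{\End}$. I would then split on whether $\ActionsA\in\tr(S)$.

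If $\ActionsA\in\tr(S)$, then $S \lred\ActionsA \End$ as well, and the goal is to replay $\ActionsA$ in $\session{R}{S}$ one reduction at a time, ending in $\session{\End}{\End}$. At each step the $R$-component is exactly the one occurring in the guide, so the only thing to verify is that the global side conditions $\outputs(\cdot)\subseteq\inputs(\cdot)$ of rule~\eqref{eq:red} hold along the way; I would transport them from the correctness of $\session{R}{T}$, using that the inputs and outputs exercised by $S$ along the common trace $\ActionsA$ agree with those of $T$ up to the additional branches of $S$, which must themselves remain receivable by $R$.

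If $\ActionsA\notin\tr(S)$, I would feed $\ActionsA$ to the single premise of the rule defining $\conv$, obtaining $\ActionsB\leq\ActionsA$ and a tag $\TagA$ with $S(\ActionsB\out\TagA)\conv T(\ActionsB\out\TagA)$ by a \emph{strictly smaller} derivation. Since both derivatives are defined we have $T(\ActionsB)\lred{\out\TagA}$, and because $\session{R}{T}$ is correct it reduces along $\ActionsB$ to a still-correct $\session{R(\co\ActionsB)}{T(\ActionsB)}$, so \Cref{prop:correctness} forces $\TagA\in\inputs(R(\co\ActionsB))$, i.e.\ $R$ can consume this anticipated output. I would therefore drive $\session{R}{S}$ along $\ActionsB$ and then let $S$ emit $\out\TagA$, reaching $\session{R'}{S(\ActionsB\out\TagA)}$ with $R'=R(\co\ActionsB\inp\TagA)$; the guide reduces correspondingly to the still-correct $\session{R'}{T(\ActionsB\out\TagA)}$, whence $\correct{R'}{T(\ActionsB\out\TagA)}$, and the induction hypothesis applied to the smaller derivation yields $\session{R'}{S(\ActionsB\out\TagA)}\wred\session{\End}{\End}$. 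Prepending the reductions that perform $\ActionsB\out\TagA$ closes the case.

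The step I expect to absorb most of the work, and the genuine obstacle, is discharging the global side conditions of rule~\eqref{eq:red} whenever $\session{R}{S}$ replays a (prefix of a) guide trace: unlike in $\session{R}{T}$, a priori $S$ could offer an output that $R$ is unable to receive, which would wedge the composition before $\session{\End}{\End}$ is reached. Ruling this out relies on the way $S$ refines $T$ — that $S$ introduces no outputs absent from $T$ and no fewer inputs, which are exactly the covariance clauses of \Cref{def:csubt} — so I would combine the correctness of the guide with the relationship between $S$ and $T$ available in the regime where this lemma is applied (the maximal asynchronous subtyping relation $\srel\subseteq\conv$), using \Cref{prop:diamond} to realign the two compositions when $R$ chooses to consume an anticipated output. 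Convergence then contributes precisely the well-founded measure that makes the redirection in the second case terminate.
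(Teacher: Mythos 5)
Your proposal follows the paper's own argument step for step: induction on the derivation of $S \conv T$, extraction of a terminating trace $\ActionsA$ of the guide $\session{R}{T}$ from $\correct{R}{T}$, the case split on $\ActionsA \in \tr(S)$, and, in the negative case, the redirection through the $\ActionsB\out\TagA$ supplied by the convergence rule followed by the induction hypothesis applied to the strictly smaller derivation of $S(\ActionsB\out\TagA) \conv T(\ActionsB\out\TagA)$ together with $\correct{R(\co\ActionsB\inp\TagA)}{T(\ActionsB\out\TagA)}$. So the route is the same. Where you go beyond the paper is exactly the point you single out as the genuine obstacle: the side conditions $\outputs(\cdot)\subseteq\inputs(\cdot)$ of rule~\eqref{eq:red} when $\session{R}{S}$ replays a prefix of the guide trace. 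The paper's proof asserts that $\ActionsA\in\tr(S)$ together with $R\lred{\co\ActionsA}\End$ ``immediately'' yields $\session{R}{S}\wred\session{\End}{\End}$, but this does not follow from the stated hypotheses alone: with $T = \out\TagA.\End$, $S = \out\TagA.\End + \out\TagB.\End$ and $R = \inp\TagA.\End$ we have $\tr(T)\subseteq\tr(S)$, hence $S\conv T$, and $\correct{R}{T}$, yet $\session{R}{S}$ is stuck because $\outputs(S)\not\subseteq\inputs(R)$. Your plan to discharge the side conditions via the clauses of \Cref{def:csubt} (which give $\outputs(S')\subseteq\outputs(T')\subseteq\inputs(R')$ and $\outputs(R')\subseteq\inputs(T')\subseteq\inputs(S')$ at every intermediate state) is the right repair, but note that it amounts to strengthening the lemma with the hypothesis that $S$ and $T$ are related by the maximal asynchronous subtyping relation; this hypothesis is indeed available at the one place the lemma is used, in the proof of \Cref{thm:asubt-sound}, where \Cref{lem:csubt-simulation} maintains precisely that invariant along reductions. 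In short: same approach as the paper, with a legitimate identification and fix of a step the paper's own proof leaves implicit and that is not valid under the lemma's literal hypotheses.
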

\begin{proof}
    We proceed by structural induction on the derivation of $S \conv T$.
    From the hypothesis $\correct{R}{T}$ we deduce $R \lred{\co\Actions} \End$
    and $T \lred\Actions \End$ for some $\Actions \in \tr(T)$.
    If $\Actions \in \tr(S)$, then we conclude immediately $\session{R}{S} \wred
    \session{\End}{\End}$.
    If $\Actions \in \tr(T) \setminus \tr(S)$, then    
    from $S \conv T$ we deduce that there exist $\ActionsB \leq \Actions$ and
    $\Tag$ such that $S(\ActionsB\out\Tag) \conv T(\ActionsB\out\Tag)$, where
    the derivation of this latter relation is strictly smaller than that for $S
    \conv T$.
    From the hypothesis $\correct{R}{T}$ we deduce $R
    \xlred{\co\ActionsB\inp\Tag}$ and also
    $\correct{R(\co\ActionsB\inp\Tag)}{T(\ActionsB\out\Tag)}$.
    Hence we have $S(\ActionsB\out\Tag) \conv T(\ActionsB\out\Tag)$,
    with the derivation of this latter relation strictly smaller than 
    that for $S \conv T$,
    and $\correct{R(\co\ActionsB\inp\Tag)}{T(\ActionsB\out\Tag)}$.
    Using the induction hypothesis we deduce
    $\session{R(\co\ActionsB\inp\Tag)}{S(\ActionsB\out\Tag)} \wred
    \session{\End}{\End}$.
    We conclude $\session{R}{S} \wred
    \session{R(\co\ActionsB\inp\Tag)}{S(\ActionsB\out\Tag)} \wred
    \session{\End}{\End}$.
\end{proof}

\begin{theorem}[soundness]
    \label{thm:asubt-sound}
    Let $\srel$ be the maximal asychronous subtyping relation contained
    in $\conv$. We have that ${\srel} \subseteq {\subt}$.
\end{theorem}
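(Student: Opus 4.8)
The plan is to prove soundness by reducing the statement to the two lemmas already established, namely the one-step simulation property (\Cref{lem:csubt-simulation}) and the termination property (\Cref{lem:csubt-termination}). Fix a pair $(S,T) \in \srel$. By the reformulation of refinement recalled at the beginning of \Cref{sec:subtyping}, showing $S \subt T$ amounts to proving that $\correct{R}{T}$ implies $\correct{R}{S}$ for every $R$. So I would fix an arbitrary $R$ with $\correct{R}{T}$ and, unfolding \Cref{def:correctness}, take an arbitrary reduction $\session{R}{S} \wred \session{R'}{S'}$ with the goal of showing $\session{R'}{S'} \wred \session{\End}{\End}$.

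The core of the argument is to maintain the following invariant along every reduction of $\session{R}{S}$: for every reachable composition $\session{R'}{S'}$ there exists a session type $T'$ such that $(S',T') \in \srel$ and $\correct{R'}{T'}$. I would prove this by induction on the length of the reduction sequence $\session{R}{S} \wred \session{R'}{S'}$. The base case is immediate, taking $T' \eqdef T$, since $(S,T) \in \srel$ and $\correct{R}{T}$ hold by assumption. For the inductive step, suppose the invariant holds at some intermediate composition $\session{R_1}{S_1}$, witnessed by $T_1$ (so $(S_1,T_1) \in \srel$ and $\correct{R_1}{T_1}$), and that $\session{R_1}{S_1} \red \session{R_2}{S_2}$. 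Then \Cref{lem:csubt-simulation}, applied to $(S_1,T_1)$, $\correct{R_1}{T_1}$ and this reduction, produces a $T_2$ with $(S_2,T_2) \in \srel$ and $\correct{R_2}{T_2}$, which re-establishes the invariant one step further.

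With the invariant in hand, the conclusion follows quickly. At the reached composition $\session{R'}{S'}$ the invariant provides a witness $T'$ with $(S',T') \in \srel$ and $\correct{R'}{T'}$. Since $\srel$ is contained in $\conv$ by hypothesis, $(S',T') \in \srel$ yields $S' \conv T'$. Applying \Cref{lem:csubt-termination} to $S' \conv T'$ and $\correct{R'}{T'}$ then gives $\session{R'}{S'} \wred \session{\End}{\End}$, which is exactly what was required. As $\session{R'}{S'}$ was an arbitrary composition reachable from $\session{R}{S}$, \Cref{def:correctness} lets me conclude $\correct{R}{S}$, and since $R$ was arbitrary, $S \subt T$; hence ${\srel} \subseteq {\subt}$.

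I expect the delicate point to lie not in this final assembly --- which is essentially a standard separation of the safety and liveness concerns of subtyping --- but in recognizing that the simulation of \Cref{lem:csubt-simulation} preserves the invariant only with a \emph{possibly different} witness $T'$ at each step. The argument must therefore thread this witness through the induction rather than keeping a single supertype fixed along the whole reduction, and it is crucial that membership in $\conv$ (required to invoke the termination lemma) is re-established at the reached state through the inclusion $\srel \subseteq \conv$. All the genuinely technical work --- the output-anticipation cases handled via \Cref{lem:deep-output}, \Cref{lem:supertype_deep_output} and \Cref{lem:correct-deep-output}, together with the well-foundedness argument by structural induction on the $\conv$ derivation in \Cref{lem:csubt-termination} --- is already discharged in those lemmas, so the soundness theorem itself is a clean combination of them.
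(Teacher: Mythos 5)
Your proof is correct and follows essentially the same route as the paper's: iterate \Cref{lem:csubt-simulation} along the reduction sequence to carry a (possibly changing) witness $T'$ with $(S',T')\in\srel$ and $\correct{R'}{T'}$, then use $\srel\subseteq\conv$ and \Cref{lem:csubt-termination} to reach $\session{\End}{\End}$. The only difference is presentational: you make the induction on the length of the reduction sequence explicit, whereas the paper phrases it as ``repeated application'' of the simulation lemma.
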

\begin{proof}
    Let $R$ be a session type such that $\correct{R}{T}$. We have to show that for every $S$
    s.t. $(S,T) \in \srel$ then $\correct{R}{S}$.
    Consider a reduction $\session{R}{S} \wred \session{R'}{S'}$.
    By repeated application of \Cref{lem:csubt-simulation}, one application
    for each transition in the sequence of transitions $\session{R}{S} \wred \session{R'}{S'}$, 
    we deduce that there exists $T'$ such that
    $\correct{R'}{T'}$ and $(S',T') \in \srel$. 
    Being $\srel$ included in $\conv$, we have $S' \conv T'$.
    From $S' \conv T'$ and $\correct{R'}{T'}$, 
    by \Cref{lem:csubt-termination} we conclude $\session{R'}{S'} \wred
    \session{\End}{\End}$.
\end{proof}


\subsection{Completeness of the Characterization of Subtyping}

\lemmaSemanticToCoinductive*
\begin{proof}
    Let $\srel \eqdef \set{ (S, T) \mid \forall R: \correct{R}{T} \Rightarrow
    \correct{R}{S} }$. We have to show that $\srel$ satisfies the clauses of
    \Cref{def:csubt}.
    \begin{enumerate}
        \item Suppose $T=\End$.
            $\correct{R}{T}$ implies $R = \End$.
            Then it must be the case that $S = \End$,
            clause 1 of \Cref{def:csubt}.
        \item Suppose $T \lred{\inp\MessageType} T'$.
            Now consider $R'$ such that 
            $\correct{R'}{T'}$. We also have
            $\correct{\out\MessageType.R'}{T}$ because the unique transition
            for $\session{\out\MessageType.R'}{T}$ is 
            $\session{\out\MessageType.R'}{T} \red \session{R'}{T'}$.
            By definition of $\srel$ it must be the case that
            $\correct{\out\MessageType.R'}{S}$, hence $\outputs(\out\MessageType.R') = 
            \set\MessageType \subseteq \inputs(S)$. We deduce $S \lred{\inp\MessageType} S'$ 
            for some $S'$.
            From $\correct{\out\MessageType.R'}{S}$ we also deduce $\correct{R'}{S'}$.
            Since $R'$ is arbitrary, we conclude $(S', T') \in \srel$ by
            definition of $\srel$, as required by clause \ref{csubt-inp} of
            \Cref{def:csubt}.
        \item Suppose $\outputs(T)$, namely $T = \sum_{i\in I}
            \out\MessageType_i.T_i$.
            Let $\set{R_i}_{i\in I}$ be a family of session types such that 
            $\correct{R_i}{T_i}$ for every $i\in I$,
			and consider $R \eqdef \sum_{i\in I} \inp\MessageType_i.R_i$.
            By construction of $R$ we have $\correct{R}{T}$, hence
            $\correct{R}{S}$ by definition of $\srel$.
            Then we deduce $\outputs(S)$ and also $\outputs(S) \subseteq
            \inputs(R)$ or else $\session{R}{S}$ would be stuck.
            Now consider $S \lred{\out\MessageType} S'$. It must be the case that 
            $\MessageType = \MessageType_k$ for some $k\in I$. Also, 
            $T \lred{\out\MessageType} T_k$.
            By construction of $R$ we know $\correct{R_k}{T_k}$ and by 
            $\correct{R}{S}$ and $\session{R}{S} \red \session{R_k}{S'}$,
            we also have $\correct{R_k}{S'}$.
            Since the $R_i$ are arbitrary, we conclude $(S', T_k) \in \srel$ as
            required by clause 3 of \Cref{def:csubt}.
            \qedhere
    \end{enumerate}
\end{proof}

\begin{theorem}[completeness]
    \label{thm:asubt-complete}
	Let $\srel$ be the maximal asychronous subtyping relation contained
    in $\conv$. We have that ${\subt} \subseteq {\srel}$.
\end{theorem}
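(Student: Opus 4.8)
The plan is to reduce the statement to the single inclusion ${\subt} \subseteq {\conv}$. The clauses of \Cref{def:csubt} are preserved under arbitrary unions of relations (if $(S,T)$ lies in some member of a family, the witnesses required by clauses \eqref{csubt-inp} and \eqref{csubt-out} lie in that same member, hence in the union), and ``being contained in $\conv$'' is trivially closed under unions; therefore the asynchronous subtyping relations contained in $\conv$ have a largest element, which is exactly $\srel$. By \Cref{lem:semantic-to-coinductive} we already know that $\subt$ is an asynchronous subtyping relation, so once ${\subt} \subseteq {\conv}$ is established, $\subt$ is one of the relations whose union is $\srel$, whence ${\subt} \subseteq {\srel}$.

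To prove ${\subt} \subseteq {\conv}$ I would argue by contraposition, showing that $S \not\conv T$ entails $S \not\subt T$, \ie that some $R$ satisfies $\correct{R}{T}$ but not $\correct{R}{S}$. The first ingredient is a divergence analysis of a single trace. Assuming $S \subt T$ and fixing $\ActionsA \in \tr(T) \setminus \tr(S)$, I would follow $\ActionsA$ simultaneously in $S$ and $T$: by \Cref{lem:semantic-to-coinductive}, clause \eqref{csubt-inp} lets $S$ match every input of $\ActionsA$ with related continuations, while clause \eqref{csubt-out} gives $\outputs(S(\ActionsB)) \subseteq \outputs(T(\ActionsB))$ along the way. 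Since $T(\ActionsA) = \End$ while $\ActionsA \notin \tr(S)$ --- and clause \eqref{csubt-end} would otherwise force $S(\ActionsA) = \End$ --- the two types must part company at a first \emph{output} step: there is a prefix $\ActionsB \leq \ActionsA$ with $(S(\ActionsB),T(\ActionsB)) \in {\subt}$ whose next prescribed output $\TagB$ satisfies $\TagB \in \outputs(T(\ActionsB)) \setminus \outputs(S(\ActionsB))$, and some $\TagA \in \outputs(S(\ActionsB))$ for which clause \eqref{csubt-out} yields $(S(\ActionsB\out\TagA),T(\ActionsB\out\TagA)) \in {\subt}$. This shows that the premise of the rule defining $\conv$ can always be met while remaining inside $\subt$, so the \emph{only} obstruction to $S \conv T$ is the failure of its well-foundedness requirement, namely an infinite descent of such divergences.

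It remains to turn such an infinite descent into the discriminating partner. Reading the complement of the inductively defined $\conv$ coinductively, $S \not\conv T$ yields a set $D \ni (S,T)$ of ``bad'' pairs such that every $(S',T') \in D$ comes with a bad trace $\ActionsA'$ \emph{all} of whose output-derivatives $(S'(\ActionsB\out\TagC),T'(\ActionsB\out\TagC))$, for every $\ActionsB \leq \ActionsA'$ and $\TagC$, again belong to $D$; since $S$ and $T$ are regular, $D$ may be taken finite. This closure under \emph{every} output move is exactly what prevents $S$ from escaping $D$: whatever output $S$ performs keeps the interaction inside the bad region. I would then define $R$ corecursively over $D$ so that at each pair it offers an input for every tag in $\outputs(T')$ (never blocking $T$), routes each ``bet'' tag $\TagB \in \outputs(T') \setminus \outputs(S')$ --- which $S'$ can never emit --- towards a continuation that replays the bad trace to $\End$, and routes the remaining outputs back into $D$. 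Finiteness makes $R$ a genuine session type. One then checks $\correct{R}{T}$, because from every reachable configuration $T$ can still steer the interaction through some bet tag to $\session{\End}{\End}$, and the failure of $\correct{R}{S}$, because in $\session{R}{S}$ the reachable configurations stay confined to $D$, where no bet tag is ever offered, so $R$ can never reach $\End$ and $\session{\End}{\End}$ is unreachable.

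The crux --- and the step I expect to demand the most care --- is precisely this construction: defining $R$ so that it is simultaneously \emph{permissive enough} to stay correct with the nondeterministic $T$ (absorbing all of $T$'s outputs while always retaining a terminating continuation) and \emph{demanding enough} that $S$'s systematic avoidance of the bet tags blocks termination. The asynchronous deep-input transitions aggravate the bookkeeping, since the set of currently live outputs at each pair must be computed through \AsyncTrans rather than read off the topmost prefix; here \Cref{prop:diamond} and the properties of the auxiliary relation $\wlred{\out\MessageType}$ are the tools that keep the two interacting behaviours aligned. Verifying $\correct{R}{T}$ --- that the permissive branches never trap $T$ away from successful termination --- is where the argument is most delicate.
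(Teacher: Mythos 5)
Your proposal is correct and follows essentially the same route as the paper: reduce completeness to ${\subt}\subseteq{\conv}$ via \Cref{lem:semantic-to-coinductive}, then (by contradiction from $S \subt T$ and $S \not\conv T$) build a partner that is correct with $T$ but can only terminate through the output branches of $T$ that $S$ has pruned. Your corecursively defined $R$ over the finite bad set $D$ is exactly the paper's discriminator $\mathcal{D}(S,T)$, which realizes the delicate ``bet-tag'' continuations simply as the duals $\co{T_j}$ (correct with $T_j$ by \Cref{prop:duality-correctness}) and, in the case you leave implicit where $T$ starts with inputs, has $R$ emit only those tags $\MessageType_i$ with $S(\inp\MessageType_i)\not\conv T_i$ so as to stay inside the bad region.
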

\begin{proof}
    \newcommand{\disc}[2]{\mathcal{D}(#1,#2)}
    From \Cref{lem:semantic-to-coinductive} we already know that ${\subt}$
    is an asynchronous session subtyping relation hence it is sufficient
    to show that $\subt\ \subseteq\ \conv$. 
    We proceed by contradiction from the hypotheses $S \subt T$ and $S
    \not\conv T$.
    To reach the contradiction, we define a session type $\disc{S}{T}$ called
    ``discriminator'' under the hypotheses $S \subt T$ and $S \not\conv T$. The
    discriminator is corecursively defined by the following equations.
    \[
        \disc{S}{T} =
        \begin{cases}
            \sum_{i\in I, S(\inp\MessageType_i) \not\conv T_i} \out\MessageType_i.\disc{S(\inp\MessageType_i)}{T_i}
            &
            \text{if $T = \sum_{i\in I} \inp\MessageType_i.T_i$}
            \\
            \\
            \sum_{i\in I} \inp\MessageType_i.\disc{S_i}{T_i}
            +
            \sum_{j\in J} \inp\MessageTypeT_j.\co{T_j}
            &
            \text{if 
              $\begin{array}{l}
                S = \sum_{i\in I} \out\MessageType_i.S_i\qquad \mbox{and}\\
                T = \sum_{i\in I} \out\MessageType_i.T_i + \sum_{j\in J} \out\MessageTypeT_j.T_j
                \end{array}$
             }   
        \end{cases}
    \]

    Note that $\disc{S}{T}$ is well defined. In particular:
    \begin{itemize}
        \item The case $S = T = \End$ is ruled out by the hypothesis $S \not\conv T$.
        \item When $T = \Branch_{i\in I} \inp\MessageType_i.T_i$ we know that
            $S(\inp\MessageType_i)$ is defined from the hypothesis $S \subt T$. Also,
            from the hypothesis $S \not\conv T$ we deduce that there exists
            $k\in I$ such that $S(\inp\MessageType_k) \not\conv T_k$, hence the choice
            in the discriminator is not empty. 
        \item When $T = \sum_{i\in I} \out\MessageType_i.T_i$ we observe that,
        	by \Cref{lem:semantic-to-coinductive}, $\subt$ is an
			asynchronous subtyping relation, hence also
            $\outputs(S)$ 
            for the clause
            \ref{csubt-out} of \Cref{def:csubt}. For the same reasons we also
            deduce that $S$ performs in general a subset of the outputs
            performed by $T$, hence the second case in the definition of the
            discriminator exhausts all the possibilities when $T$ performs
            outputs. 
            Moreover, we have that $S_i \subt T_i$ for every $i \in \outputs(S)$
            because for every family
            $\{R_i\}$ such that $\correct{R_i}{T_i}$ we have that 
            $\correct{ \sum_{i\in I}\inp\MessageType_i.R_i}{\sum_{i\in I} \out\MessageType_i.T_i}$,
            hence also 
            $\correct{ \sum_{i\in I}\inp\MessageType_i.R_i}{S}$ because $S \subt T$,
            from which we deduce $\correct{R_i}{S_i}$ for every $i \in \outputs(S)$.
            Finally, for every $i \in \outputs(S)$ we also have $S_i \not\conv T_i$.
            In fact, if $\outputs(S)$ is a singleton $S_i$ is the unique type
            reachable from $S$ (with a transition labeled with $\MessageType_i$), 
            $S \not\conv T$ and $T_i$ is the type reachable from $T$ with a transition 
            labeled with $\MessageType_i$. If $\outputs(S)$ has more than one element,
            the messages $\MessageType_i$ are tags and if there exists $j  \in \outputs(S)$ such 
            that $S_j \conv T_j$, this would imply that for each $\ActionsA\in\tr(T)\setminus\tr(S)$
            there exists the empty sequence and the tag $\MessageType_j$ such that
            $S(\out \MessageType_j) \conv T(\out \MessageType_j)$.
        \item Concerning the fact every subtree of $\disc{S}{T}$ contains an
            $\End$ leaf, note that from the hypothesis $S \not\conv T$ we deduce
            that there exists a trace $\ActionsA \in \tr(T) \setminus \tr(S)$
            that contains the output of a tag permitted by $T$ but not by $S$.
            This trace leads to some subtree $\co{T_j}$ in the definition of
            $\disc{S}{T}$, and this subtree contains an $\End$ leaf. 
    \end{itemize}

    Now it's easy to see that $\correct{\disc{S}{T}}{T}$ but not
    $\correct{\disc{S}{T}}{S}$, which contradicts the hypothesis $S \subt T$.
    Concerning $\correct{\disc{S}{T}}{T}$, observe that the discriminator always
    outputs a subset of tags accepted by $T$ (top equation) and always accepts
    all tags sent by $T$ (bottom equation). The existence of a common path that
    leads to the termination of $\disc{S}{T}$ and $T$ follows by construction of
    $\disc{S}{T}$, as argued earlier when discussing the well-formedness of the
    discriminator. 
    %
    Concerning the fact that $\correct{\disc{S}{T}}{S}$ does \emph{not} hold,
    simply observe that every path leading to an $\End$ leaf in $\disc{S}{T}$
    goes through an input action $\inp\MessageTypeT_j$ for which $S$ does not perform
    the corresponding output action $\out\MessageTypeT_j$.
\end{proof}

\section{Undecidability results}
\label{sec:proofs-undecidability}

\begin{definition}[Queue Machine]\label{def:queuemachines}
  A queue machine $M$ is defined by a six-tuple
  $(Q , \Sigma , \Gamma , \$ , s , \delta )$ where:
  \begin{itemize}
  \item $Q$ is a finite set of states;
  \item $\Sigma \subset \Gamma$ is a finite set denoting the input
    alphabet;
  \item $\Gamma$ is a finite set denoting the queue alphabet (ranged
    over by $A,B,C$);
  \item $\$ \in \Gamma -\Sigma$ is the initial queue symbol;
  \item $s \in Q$ is the start state;
  \item $\delta : Q \times \Gamma \rightarrow Q\times \Gamma ^{*}$ is
    the transition function ($\Gamma ^{*}$ is the set of sequences of
    symbols in $\Gamma$).
  \end{itemize}
Considering a queue machine
$M=(Q , \Sigma , \Gamma , \$ , s , \delta )$,
a {\em configuration} of $M$ is an ordered pair
$(q,\gamma)$ where $q\in Q$ is its {\em current state} and
$\gamma\in\Gamma ^{*}$ is the 
{\em queue}.  The
starting configuration on an input string $x \in \Sigma^*$ is 
$(s , x\$)$, composed of
the start state $s$ and the input $x$ followed by the initial queue symbol $\$$.
The transition relation ($\rightarrow_{M}$) over configurations 
$Q \times \Gamma ^{*}$, leading from a configuration to the next one,
is defined as follows.
For $p,q \in Q$, $A \in \Gamma$, and $\alpha,\gamma \in \Gamma ^{*}$, we have
$(p,A\alpha )\rightarrow _{M}(q,\alpha \gamma)$ whenever
$\delta (p,A)=(q,\gamma)$. Let $\rightarrow _{M}^{*}$ be the reflexive and
transitive closure of $\rightarrow _{M}$.
A machine $M$ accepts an input $x$ if it terminates on input $x$, i.e. it reaches
a blocking configuration with the empty queue (notice that, as the transition 
relation is total, the unique way to terminate is by emptying the queue). 
Formally $x$ is accepted by $M$ if and only if there exists $q \in Q$ such that 
$(s , x\$) \rightarrow _{M}^{*} (q,\varepsilon)$, where $\varepsilon$ is the empty string.
\end{definition}

Since queue machines can deterministically encode Turing machines
(see, e.g.,~\cite{KozenBook}, page~354, solution to exercise~99),
checking the acceptance of $x$ by a queue machine $M$ is an
undecidable problem.

\begin{definition}\label{encoding_queue_machines}
Consider a queue machine $M=(Q , \Sigma , \Gamma , \$ , s , \delta )$ 
and an input $x \in \Sigma^*$. Let $E,E'\not\in\Gamma$ be special symbols 
outside the queue alphabet.
We define the session type $T^M_x$ as follows:
$$
\begin{array}{llll}
T^M_x & = & \out\Tag[X_1]. \cdots . \out\Tag[X_n] . \out\Tag[\$] . \out\Tag[E] . T^M 
	& \mbox{with $x = X_1 \cdots X_n$} \\
T^M & = & \sum_{A \in \Gamma} \inp\Tag[A] . \out\Tag[A] . T^M +  \inp\Tag[E] . \out\Tag[E] . T^M_E \\
T^M_E & = & \sum_{A \in \Gamma} \inp\Tag[A] . \out\Tag[A] . T^M +  \inp\Tag[E] . \out\Tag[E'] . \End \\
\end{array}
$$
and the session types $S^M_p$, for every $p \in Q$, as follows:
$$
\begin{array}{llll}
S^M_p & = & \sum_{A \in \Gamma} \inp\Tag[A] . \out\Tag[B_1]. \cdots . \out\Tag[B_n] . S^M_{q} 
			+ \inp\Tag[E] . \out\Tag[E] . S^M_p +  \inp\Tag[E'] . \End
	& \mbox{with $\delta(p,A) = (q,B_1 \cdots B_n)$} 
\end{array}
$$
\end{definition}

\begin{proposition}\label{from_machines_to_types}
Consider a queue machine $M=(Q , \Sigma , \Gamma , \$ , s , \delta )$
and the transition $(p,\alpha )\rightarrow _{M}(q,\gamma)$,
with $\alpha = A_1 \cdots A_n$ and $\gamma = A_2 \cdots A_n B_1 \cdots B_m$,
because $\delta(p,A_1) = B_1 \cdots B_m$.
Consider now the two types:
$$
\begin{array}{lll}
T  & = & {\out\Tag[A_1]. \cdots . \out\Tag[A_l]. \out\Tag[E] . \out\Tag[A_{l+1}]. \cdots . \out\Tag[A_{n}].T^M} \\
T' & = & {\out\Tag[A_2]. \cdots . \out\Tag[A_l]. \out\Tag[E] . \out\Tag[A_{l+1}]. \cdots . \out\Tag[A_{n}]
.\out\Tag[B_1]. \cdots . \out\Tag[B_m].T^M}
\end{array}
$$
with $1 \leq l \leq n$.
We have that 
$\session{S^M_p}{T} \red^+ \session{S^M_q}{T'}$.\\
Consider now the two types:
$$
\begin{array}{lll}
R  & = & {\out\Tag[E] . \out\Tag[A_1]. \cdots . \out\Tag[A_{n}].T^M} \\
R' & = & {\out\Tag[A_2]. \cdots . \out\Tag[A_n]. \out\Tag[E] . \out\Tag[B_1]. \cdots . \out\Tag[B_m].T^M}
\end{array}
$$
We have that 
$\session{S^M_p}{R} \red^+ \session{S^M_q}{R'}$.
\end{proposition}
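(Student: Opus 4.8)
The plan is to establish both reductions by exhibiting the relevant sequences explicitly, firing the composition rule~\eqref{eq:red} one step at a time. The mechanism that makes the sequences work is the asynchronous rule \AsyncTrans: the relays $T^M$ and $T^M_E$ offer an input branch for every symbol of $\Gamma \cup \set{E}$, and $S^M_p$ offers one for every symbol of $\Gamma \cup \set{E, E'}$. Hence, whenever one side of the composition is a chain of output prefixes ending in one of these types, an input enabled in that trailing type can be lifted through the whole prefix by repeated use of \AsyncTrans (the prefixes are single branches, so the ``along every branch'' proviso is vacuous). I would first isolate this as an observation: for such a chain $U$ with continuation $V$ (one of $S^M_q$, $T^M$, $T^M_E$), we have $\inputs(U) = \inputs(V)$, while $\outputs(U)$ is the singleton consisting of the leading output of $U$ (outputs cannot be lifted). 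This is exactly what is needed to discharge both side conditions of~\eqref{eq:red} at every step.

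For the first claim I would build the sequence as follows. The opening reduction lets the right component emit its leading $\out\Tag[A_1]$ and lets $S^M_p$ consume it via its $\inp\Tag[A_1]$ branch; since $\delta(p,A_1) = (q, B_1\cdots B_m)$, the left component becomes $\out\Tag[B_1]\cdots\out\Tag[B_m].S^M_q$ and the right becomes $\out\Tag[A_2]\cdots\out\Tag[A_l].\out\Tag[E].\out\Tag[A_{l+1}]\cdots\out\Tag[A_n].T^M$. Then, for $j = 1, \dots, m$, the left component emits $\out\Tag[B_j]$ and the right consumes it by lifting $\inp\Tag[B_j]$ through its entire output prefix into the trailing $T^M$, which appends a pending $\out\Tag[B_j]$ at the tail. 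After the $m$-th step the left component is $S^M_q$ and the right is $T'$, yielding $\session{S^M_p}{T} \red^+ \session{S^M_q}{T'}$ (when $m = 0$ the opening reduction alone suffices, since $\red^+$ requires just one step).

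The second claim follows the same template with a short preliminary phase for the leading $\out\Tag[E]$ of $R$. First the right emits $\out\Tag[E]$ and $S^M_p$ absorbs it through $\inp\Tag[E].\out\Tag[E].S^M_p$, so the left becomes $\out\Tag[E].S^M_p$; then the right emits $\out\Tag[A_1]$ and the left consumes it by lifting $\inp\Tag[A_1]$ past the pending $\out\Tag[E]$, firing the transition and leaving the left as $\out\Tag[E].\out\Tag[B_1]\cdots\out\Tag[B_m].S^M_q$ and the right as $\out\Tag[A_2]\cdots\out\Tag[A_n].T^M$. Now the left discharges its pending outputs: $\out\Tag[E]$ is emitted and absorbed by the right (lifting $\inp\Tag[E]$ into $T^M$, which unfolds to $\out\Tag[E].T^M_E$ behind the $A$-outputs), and then $\out\Tag[B_1],\dots,\out\Tag[B_m]$ are emitted and absorbed one at a time exactly as before, using the branches of $T^M_E$ and then $T^M$. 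The left ends at $S^M_q$ and the right at $R'$.

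The arguments are otherwise routine; the only real work — and the point I expect to need the most care — is checking at each reduction that both premises $\outputs(\cdot)\subseteq\inputs(\cdot)$ of~\eqref{eq:red} hold and that each lifted input is genuinely enabled in the trailing relay, which the observation above reduces to the fact that every leading output is a symbol the partner can receive deep inside its $S^M_q$, $T^M$, or $T^M_E$ continuation.
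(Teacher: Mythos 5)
Your proof is correct and takes the same route as the paper, which simply asserts the result as a ``direct consequence of the possible transitions'' of the two compositions: you exhibit exactly those reduction sequences, step by step. The extra care you put into checking the side conditions of rule~\eqref{eq:red} at each step (via the observation that a single-branch output chain has the inputs of its trailing relay and only its leading tag as output) is the right thing to verify and is handled correctly, including the $m=0$ and $l=n$ edge cases.
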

\begin{proof}
Direct consequence of the possibile transitions of
$\session{S^M_p}{T}$ and
$\session{S^M_p}{R}$.
\end{proof}


\begin{lemma}\label{undecidability_termination}
Consider a queue machine $M=(Q , \Sigma , \Gamma , \$ , s , \delta )$
and an input string $x \in \Sigma^*$. We have that $M$ accepts $x$ if and only if 
$\session{S^M_s}{T^M_x} \wred \session{\End}{\End}$.
\end{lemma}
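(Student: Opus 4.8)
The plan is to prove the two directions of the biconditional separately, using Proposition~\ref{from_machines_to_types} to simulate individual machine steps and a fixed ``termination gadget'' that fires exactly when the encoded queue is empty. I first make precise the correspondence between machine configurations and session states. Call a triple $(p,\beta,\gamma)$ with $p\in Q$ and $\beta,\gamma\in\Gamma^*$ a \emph{marked configuration}; it encodes the queue $\beta\gamma$ together with the position of the circulating marker and is represented by the session state $\session{S^M_p}{U}$, where $U$ outputs the symbols of $\beta$, then $\Tag[E]$, then the symbols of $\gamma$, and finally continues as $T^M$. The initial machine configuration $(s,x\$)$ corresponds to $(s,x\$,\varepsilon)$, whose representation is exactly $\session{S^M_s}{T^M_x}$.

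For the forward direction, assuming $(s,x\$)\rightarrow^*_M(q,\varepsilon)$, I would prove by induction on the length of this computation that $\session{S^M_s}{T^M_x}\wred\session{S^M_p}{U}$ where $\session{S^M_p}{U}$ represents the marked configuration reached after the corresponding prefix. The inductive step is an instance of Proposition~\ref{from_machines_to_types}: reading the front symbol with the marker strictly inside the queue ($\beta\neq\varepsilon$) uses the first ($T/T'$) instance, and reading it with the marker at the front ($\beta=\varepsilon$) uses the second ($R/R'$) instance; in both cases $\beta\gamma$ evolves exactly as the machine queue does. A halting computation ends in $(q,\varepsilon)$, i.e.\ in the marked configuration $(q,\varepsilon,\varepsilon)$ represented by $\session{S^M_q}{\out\Tag[E].T^M}$, from which the termination gadget
\begin{gather*}
\session{S^M_q}{\out\Tag[E].T^M}\red\session{\out\Tag[E].S^M_q}{T^M}\red\session{S^M_q}{\out\Tag[E].T^M_E}\\
\red\session{\out\Tag[E].S^M_q}{T^M_E}\red\session{S^M_q}{\out\Tag[E'].\End}\red\session{\End}{\End}
\end{gather*}
drives the composition to $\session{\End}{\End}$; one checks directly that the side conditions of rule~\eqref{eq:red} hold at each of these five steps. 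Concatenating gives $\session{S^M_s}{T^M_x}\wred\session{\End}{\End}$.

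For the backward direction I would show that this simulation is essentially the only available behaviour, so that reaching $\session{\End}{\End}$ forces the machine to empty its queue. The structural facts I would rely on are: (i) each participant emits its pending outputs in strict FIFO order, since $\out$-labelled transitions arise only from \SyncTrans (the asynchronous rule \AsyncTrans lifts \emph{inputs} only), so the content and order of the exchanged messages are fixed while only their interleaving is nondeterministic; (ii) the encoding contains a single marker $\Tag[E]$, which merely bounces between the two sides; (iii) $S^M_p$ reaches $\End$ only through its branch $\inp\Tag[E'].\End$, and $\Tag[E']$ is produced only by $T^M_E$ via $\inp\Tag[E].\out\Tag[E'].\End$, i.e.\ only when $T$ receives two markers with no intervening $\Gamma$-symbol. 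I would then fix an invariant: every state reachable from $\session{S^M_s}{T^M_x}$ projects onto a marked configuration reachable from $(s,x\$)$, up to a bounded amount of in-transit data (a partially emitted response of $S$, or the marker mid-bounce), and every reduction either leaves this projection unchanged or advances it by one machine step. By fact (iii) and the invariant, $T^M_E$ can emit $\Tag[E']$ only from a state whose projection has an empty queue; since $\delta$ is total, the machine empties its queue only by halting, i.e.\ by accepting $x$. Thus $\session{S^M_s}{T^M_x}\wred\session{\End}{\End}$ entails $(s,x\$)\rightarrow^*_M(q,\varepsilon)$.

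The main obstacle is the backward direction, and specifically the formulation and preservation of this simulation invariant across \emph{all} asynchronously enabled reductions. I must verify that the anticipation permitted by \AsyncTrans only reshuffles the timing of the otherwise FIFO and deterministic message exchanges, and can never manufacture a second marker nor let $T^M_E$ consume a marker and fire $\Tag[E']$ while the encoded queue still contains a $\Gamma$-symbol. Once this ``no spurious termination'' property is in place, the projection argument closes the backward direction and the lemma follows.
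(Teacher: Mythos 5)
Your forward direction is essentially the paper's: iterate \cref{from_machines_to_types} along the accepting computation and then fire the explicit termination gadget from $\session{S^M_q}{\out\Tag[E].T^M}$, so there is nothing to add there. For the backward direction you take a genuinely different and much heavier route. You propose a full simulation invariant --- every reachable session state projects onto a reachable marked configuration up to bounded in-transit data --- and you correctly identify that its preservation under all \AsyncTrans-anticipated reductions is the crux; but you leave exactly that step unverified, and it is the only nontrivial part of your argument. The paper avoids the invariant altogether with a one-line observation: the encoding contains no internal (output) choices --- every output occurring in $T^M_x$, $T^M$, $T^M_E$ and $S^M_p$ is a singleton prefix, and every sum is an input choice resolved by the partner's already-determined output --- so the reduction behaviour of $\session{S^M_s}{T^M_x}$ is deterministic (up to the confluent interleaving guaranteed by \cref{prop:diamond}). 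Hence if $M$ does not accept $x$, iterating \cref{from_machines_to_types} yields an infinite reduction sequence, and determinism rules out the coexistence of a finite run reaching $\session{\End}{\End}$; no ``no spurious termination'' analysis of $\Tag[E']$ is needed. Your facts (i)--(iii) are all true and your invariant could presumably be pushed through, but as written the proposal stops short of the one step that matters. If you keep your structure you must actually prove the invariant preservation; more economically, you can replace the entire backward direction by the determinism-plus-contradiction argument, which is what the paper does.
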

\begin{proof}
We first consider the \emph{only if} part.
By definition, $M$ accepts $x$ implies $(s , x\$) \rightarrow _{M}^{*} (q,\varepsilon)$.
By repeated application of \Cref{from_machines_to_types} we have that
$\session{S^M_s}{T^M_x} \wred \session{S^M_q}{\out\Tag[E] . T^M}$.
But we have that 
$\session{S^M_q}{\out\Tag[E] . T^M} \red  
 \session{\out\Tag[E].S^M_q}{T^M} \red
 \session{S^M_q}{\out\Tag[E].T^M_E} \red 
 \session{\out\Tag[E].S^M_q}{T^M_E} \red 
 \session{S^M_q}{\out\Tag[E'].\inp\End} \red 
 \session{\out\End}{\inp\End}$.

We now move to the \emph{if} part, 
assuming that $\session{S^M_s}{T^M_x} \wred \session{\End}{\End}$.
We first observe that the sequence of reductions generated by  
$\session{S^M_s}{T^M_x}$
is deterministic because there are no internal choices in both types 
$S^M_s$ and $T^M_x$.
We proceed by contradiction assuming that $x$ is not accepted by $M$.
We can repeatedly apply \Cref{from_machines_to_types} to prove that
$\session{S^M_s}{T^M_x}$ has an infinite sequence of reductions.
But, as the sequence of reductions of $\session{S^M_s}{T^M_x}$
is deterministic, this contradicts the existence of the finite 
terminating sequence $\session{S^M_s}{T^M_x} \wred \session{\End}{\End}$.
\end{proof}

\begin{theorem}
The problem of checking the correct composition of two types is undecidable.
\end{theorem}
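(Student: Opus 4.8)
The plan is to reduce the undecidable acceptance problem for queue machines to the problem of deciding correctness, exploiting the encoding of \Cref{encoding_queue_machines}. Given a queue machine $M$ and an input string $x$, the session types $S^M_s$ and $T^M_x$ are effectively computable, so it suffices to prove the equivalence
\[
    \correct{S^M_s}{T^M_x} \iff M \text{ accepts } x.
\]
Undecidability of correctness then follows at once, since the right-hand side is undecidable because queue machines deterministically simulate Turing machines.

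I would start from \Cref{undecidability_termination}, which already supplies $M \text{ accepts } x$ if and only if $\session{S^M_s}{T^M_x} \wred \session{\End}{\End}$. The remaining work is to connect this \emph{reachability} of $\session{\End}{\End}$ (an existential statement about one computation) with \emph{correctness} (a universal statement: \emph{every} reachable composition can reach $\session{\End}{\End}$). The bridge is confluence. At every composition reachable from $\session{S^M_s}{T^M_x}$ each of the two types exhibits at most one output action, because the encoding uses no output (internal) choices; hence a configuration offers at most two reductions, one matching the pending output of $S^M_s$ against a (possibly deep) input of $T^M_x$ and one matching the pending output of $T^M_x$ against a (possibly deep) input of $S^M_s$. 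Applying the diamond property of \Cref{prop:diamond} to each of the two types shows that these two reductions commute, so $\red$ is strongly confluent on the configurations of interest, and therefore confluent.

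Given confluence, both directions of the equivalence are immediate. If $M$ accepts $x$, then $\session{S^M_s}{T^M_x} \wred \session{\End}{\End}$; for any reachable $\session{S'}{T'}$, confluence applied to $\session{S^M_s}{T^M_x} \wred \session{\End}{\End}$ and $\session{S^M_s}{T^M_x} \wred \session{S'}{T'}$ yields a common reduct, which must be $\session{\End}{\End}$ since the latter is stuck, so $\session{S'}{T'} \wred \session{\End}{\End}$ and hence $\correct{S^M_s}{T^M_x}$. Conversely, if $M$ does not accept $x$, then $\session{S^M_s}{T^M_x} \not\wred \session{\End}{\End}$, and since the initial composition is reachable from itself, correctness fails outright.

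I expect the crux to be exactly this passage from reachability to correctness. A single terminating computation does not by itself entail that \emph{all} computations terminate, and the asynchronous semantics genuinely permits branching, because the deep-input transitions of \AsyncTrans let either party consume the other's anticipated output first. What rescues the argument is that the encoding introduces no internal nondeterminism, so the only branching is this input/output interleaving, which reconverges by \Cref{prop:diamond}; establishing this strong confluence rigorously---including checking that the side conditions of rule~\eqref{eq:red} are preserved along the two commuting steps---is the one place where some care is required.
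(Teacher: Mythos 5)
Your proof is correct and follows the same reduction as the paper: the same encoding (\Cref{encoding_queue_machines}), the same key lemma (\Cref{undecidability_termination}), and the same bridge from reachability of $\session{\End}{\End}$ to correctness of $\session{S^M_s}{T^M_x}$. The one place where you genuinely diverge is in how that bridge is justified. The paper simply asserts that $\session{S^M_s}{T^M_x}$ ``generates a deterministic sequence of reductions'' because neither type contains an internal (output) choice, and concludes that correctness coincides with reachability of $\session{\End}{\End}$. Taken literally this is imprecise: as soon as both sides hold a pending output, rule~\eqref{eq:red} enables two distinct reductions (the left output consumed by a deep input on the right, and vice versa), so the reduction graph of the composition genuinely branches even though each component is output-deterministic. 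Your replacement of ``determinism'' by strong confluence --- at most two reductions from any reachable composition because the encoding has no output choices, and these two reductions commute by \Cref{prop:diamond} --- is the honest version of this step, and you correctly isolate the residual obligation, namely that the side conditions of rule~\eqref{eq:red} survive the two commuting steps; one of the two containments follows generically from \Cref{prop:input-after-output} and \Cref{prop:outputs-after-input} (inputs persist across outputs and outputs persist across inputs), while the other requires the specific shape of the encoding (single-output chains returning to states whose input set is the whole alphabet), and both do hold here. So your argument is not merely equivalent to the paper's: it repairs an imprecision in it. What the paper's phrasing buys is brevity; what yours buys is an argument that actually matches the branching behaviour of the asynchronous operational semantics.
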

\begin{proof}
The proof is by reduction from the acceptance problem in queue machines.
Consider a queue machine $M=(Q , \Sigma , \Gamma , \$ , s , \delta )$
and an input string $x \in \Sigma^*$. We have that $M$ accepts $x$ if and only if 
$\correct{S^M_s}{T^M_x}$.
In fact, given that $\session{S^M_s}{T^M_x}$ generates a deterministic sequence
of reductions (as observed in the proof of \Cref{undecidability_termination})
we have that $\correct{S^M_s}{T^M_x}$ if and only if 
$\session{S^M_s}{T^M_x} \wred \session{\End}{\End}$.
But, by \Cref{undecidability_termination}, the latter holds if and only if
$M$ accepts $x$.
\end{proof}

\begin{definition}
Consider a queue machine $M=(Q , \Sigma , \Gamma , \$ , s , \delta )$ 
and a string $x \in \Gamma^*$. Let $E,E'\not\in\Gamma$ be special symbols 
outside the queue alphabet.
We consider $\co{T^M}$, the dual of the type $T^M$ defined in \Cref{encoding_queue_machines}.
Namely:
$$
\begin{array}{llll}
\co{T^M} & = & \sum_{A \in \Gamma} \out\Tag[A] . \inp\Tag[A] . \co{T^M} +  
			   \out\Tag[E] . \inp\Tag[E] . \co{T^M_E} \\
\co{T^M_E} & = & \sum_{A \in \Gamma} \out\Tag[A] . \inp\Tag[A] . \co{T^M} +  \out\Tag[E] . \inp\Tag[E'] . \End \\
\end{array}
$$
and the session types $S^M_{p,x}$, for every $p \in Q$, as follows:
$$
\begin{array}{llll}
S^M_{p,x} & = & \out\Tag[X_1]. \cdots . \out\Tag[X_n] . \out\Tag[\$] . \out\Tag[E] .S^M_p 
	& \mbox{with $x = X_1 \cdots X_n$} \\
S^M_p & = &
\sum_{A \in \Gamma} \inp\Tag[A] . \out\Tag[B_1]. \cdots . \out\Tag[B_n] . S^M_{q} 
			+ \inp\Tag[E] . \out\Tag[E] . S^M_p +  \inp\Tag[E'] . \End
	& \mbox{with $\delta(p,A) = (q,B_1 \cdots B_n)$} 
\end{array}
$$
Notice that the types $S^M_p$ are the same as those defined
in \Cref{encoding_queue_machines}.
\end{definition}

\begin{lemma}\label{undecidability_convergence}
Consider a queue machine $M=(Q , \Sigma , \Gamma , \$ , s , \delta )$
and an input string $x \in \Sigma^*$. We have that $S^M_{p,x} \conv \co{T^M}$
if and only if $M$ accepts $x$.
\end{lemma}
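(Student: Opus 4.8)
The plan is to reduce the biconditional to one operational fact about the composition $\session{S^M_{p,x}}{T^M}$ and then to route the two directions through it. First I would prove the operational lemma that $\session{S^M_{p,x}}{T^M} \wred \session{\End}{\End}$ if and only if the machine started in $(p,x\$)$ empties its queue, i.e.\ $(p,x\$) \rightarrow_M^* (q,\varepsilon)$ for some $q$ (for $p=s$ this is exactly $M$ accepts $x$). This parallels \Cref{undecidability_termination}, the only difference being that here the initial segment $x\$E$ is emitted on the $S$-side while the pure echoer $T^M$ sits on the $T$-side: once $S^M_{p,x}$ has fed $x\$E$ and $T^M$ has relayed it, the residual pair has the form $\session{S^M_p}{\ldots.T^M}$ and each queue-machine step is simulated by a block of reductions exactly as in \Cref{from_machines_to_types}, with the double-$E$ protocol driving the composition to $\session{\End}{\End}$ precisely when the queue empties. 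Since neither $S^M_{p,x}$ nor $T^M$ contains internal (output) choice, the diamond property (\Cref{prop:diamond}) makes the outcome independent of the interleaving, so the weak reduction to $\session{\End}{\End}$ coincides with $\correct{S^M_{p,x}}{T^M}$.

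For the direction $S^M_{p,x} \conv \co{T^M} \Rightarrow M$ accepts $x$, I would argue as follows. Since $T^M = \co{\co{T^M}}$, \Cref{prop:duality-correctness} gives $\correct{T^M}{\co{T^M}}$. Applying \Cref{lem:csubt-termination} with $S \eqdef S^M_{p,x}$, $T \eqdef \co{T^M}$ and $R \eqdef T^M$ then yields $\session{T^M}{S^M_{p,x}} \wred \session{\End}{\End}$, hence $\session{S^M_{p,x}}{T^M} \wred \session{\End}{\End}$ by symmetry of composition, and the operational lemma closes this direction.

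The converse, acceptance $\Rightarrow S^M_{p,x} \conv \co{T^M}$, is the main obstacle, and it genuinely requires the special shape of the encoding: termination of the composition against the dual does not in general entail convergence (for instance $S$ and $T$ of \Cref{ex:csubt-not-subt} satisfy $\session{\co{T}}{S} \wred \session{\End}{\End}$ yet $S \not\conv T$), so \Cref{lem:csubt-termination} cannot simply be run backwards. The structural ingredient I would exploit is that every reachable derivative of $S^M_{p,x}$ is output-deterministic, offering at most the single output prescribed by the machine, whereas $\co{T^M}$ may output any symbol. Hence, for each $\ActionsA \in \tr(\co{T^M}) \setminus \tr(S^M_{p,x})$, the escape $\ActionsB\out\TagA$ demanded by \Cref{def:conv} is forced to follow the machine's own output, so a derivation of $S^M_{p,x} \conv \co{T^M}$ is obliged to trace out the very run of $M$ on $(p,x\$)$. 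I would therefore construct the derivation by induction on the length of the accepting computation $(p,x\$) \rightarrow_M^* (q,\varepsilon)$: one machine step is matched by a bounded block of forced escapes leading to the convergence statement of the next configuration with a strictly smaller derivation, and the base case is the empty-queue configuration, where the double-$E$ termination collapses the residual trace difference by reflexivity of $\conv$.

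Making this correspondence watertight is where the real work lies: one must describe $\tr(\co{T^M}) \setminus \tr(S^M_{p,x})$ taking into account the input/output reordering permitted by asynchrony and the initial feeding phase, check that the forced escape always exists and strictly decreases the derivation, and verify that the derivation is well founded exactly when the computation is finite. An alternative packaging of this direction would be to prove instead acceptance $\Rightarrow S^M_{p,x} \subt \co{T^M}$, namely that every $R$ with $\correct{R}{\co{T^M}}$ also satisfies $\correct{R}{S^M_{p,x}}$ (using that the machine specialises the echoer's free outputs while still reaching termination), and then to deduce $S^M_{p,x} \conv \co{T^M}$ from the inclusion $\subt\ \subseteq\ \conv$ furnished by completeness (\Cref{thm:asubt-complete}); but this reformulation carries exactly the same liveness-preservation difficulty.
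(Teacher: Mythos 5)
Your proposal is sound, and the two directions deserve separate comments. For the direction ``$M$ accepts $x$ implies $S^M_{p,x} \conv \co{T^M}$'' you do essentially what the paper does: both arguments fix the trace of $\co{T^M}$ induced by the accepting run, observe that $S^M_{p,x}$ and all of its derivatives offer at most one output while $\co{T^M}$ branches over all of $\Gamma\cup\{E\}$, and build the $\conv$-derivation by peeling off one forced output escape at a time; your induction on the length of the run and the paper's bound given by the number of outputs in the common trace are the same measure, and the residual bookkeeping about asynchronous reordering of traces that you flag as ``the real work'' is left at a comparable level of detail in the paper itself. The interesting divergence is in the converse direction. The paper first proves, by induction on the $\conv$-derivation, that $S \conv T$ forces the existence of a trace of $T$ that is also a trace of $S$, and then reads a terminating computation of $M$ directly off such a common trace of $\co{T^M}$ and $S^M_{p,x}$. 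You instead reuse the soundness machinery: $\correct{T^M}{\co{T^M}}$ from \Cref{prop:duality-correctness}, then \Cref{lem:csubt-termination} instantiated with $R \eqdef T^M$ to obtain $\session{T^M}{S^M_{p,x}} \wred \session{\End}{\End}$, and finally a new operational lemma --- the analogue of \Cref{undecidability_termination} for $\session{S^M_{p,x}}{T^M}$, with the tape fed from the $S$-side and the pure echoer on the other side --- to extract acceptance. Both routes work. The paper's is more self-contained, since the common-trace fact is a short induction and no new simulation lemma is needed, whereas yours trades that for a dependency on \Cref{lem:csubt-termination} plus the burden of proving (and arguing confluence for) the variant termination lemma; on the other hand, your remark that \Cref{ex:csubt-not-subt} blocks any attempt to run \Cref{lem:csubt-termination} backwards correctly pinpoints why the constructive direction cannot be obtained by the same detour and must exploit the output-determinism of the encoding.
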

\begin{proof}
We first observe that if $S \conv T$ then there exists a trace of $T$ which is
also a trace of $S$. This can be proved by induction on the depth of the 
proof of $S \conv T$. In the base case all the traces of $T$ are also
traces of $S$. In the inductive case, such common trace can be found
by appending the trace of the induction hypothesis to the prefix
$\ActionsB\out\MessageType$ used in the premise of the last application
of the rule in \Cref{def:conv}.

We now prove the \emph{only if} part.
If $S^M_{p,x} \conv \co{T^M}$ we have seen that there exists a 
trace $\Actions$ of $\co{T^M}$ which is also a trace
of $S^M_{p,x}$. All the traces of $\co{T^M}$ are sequences of pairs
of input/output actions on the same tag, excluding the last actions in the trace
which are $\out\Tag[E] \inp\Tag[E] \out\Tag[E] \inp\Tag[E']$.
Traces of this shape executed by $S^M_{p,x}$ corresponds to 
computations of the machine $M$ on the input $x$ (with the
insertion of some actions on the additional tag $E$ which does 
not affect the computation because they simply dequeue/enqueue $E$).
In fact, output actions corresponds to dequeue operations and
the subsequent input actions have the effect of enqueueing the
symbols produced by the corresponding transition in the machine $M$. 
Given that the trace $\Actions$ is finite,
the corresponding computation of $M$ with input $x$ terminates, 
hence $M$ accepts $x$.

We now move to the \emph{if} part. Assume that $M$ accepts $x$.
Hence there exists a trace $\Actions$ of $\co{T^M}$ corresponding
to the computation of the machine $M$ on the input $x$
which is a sequence of pairs
of input/output on the same tag, excluding the last actions in the trace
which are $\out\Tag[E] \inp\Tag[E] \out\Tag[E] \inp\Tag[E']$.
Such trace can be used to prove that $S^M_{p,x} \conv \co{T^M}$.
In fact, we have that $\Actions$ is also a trace of $S^M_{p,x}$,
and the other traces of $\co{T^M}$ have the same structure described
above for $\Actions$, but if they differ from $\Actions$ is only because
after a common prefix $\ActionsB$ they execute a different output tag.
This allows us to apply the rule in \Cref{def:conv}; the application
can be nested and each rule application restrict the domain 
of traces of $\co{T^M}$ diverging from $\Actions$. Let $k$ be the 
number of output actions in $\Actions$; the maximal depth of the proof of 
$S^M_{p,x} \conv \co{T^M}$ will be $k$.
\end{proof}

\begin{theorem}
The problem of checking the convergence of two types is undecidable.
\end{theorem}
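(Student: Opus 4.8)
The plan is to obtain the result as a straightforward many-one reduction from the acceptance problem for queue machines, which is undecidable because queue machines can deterministically simulate Turing machines (as recalled right after \Cref{def:queuemachines}). Given an arbitrary queue machine $M = (Q, \Sigma, \Gamma, \$, s, \delta)$ and an input string $x \in \Sigma^*$, the encoding fixed in the preceding definitions produces the two session types $S^M_{s,x}$ and $\co{T^M}$. First I would observe that this construction is effective: the branchings of $S^M_p$ are indexed by the finite queue alphabet $\Gamma$ and determined by the transition function $\delta$, while $\co{T^M}$ does not even depend on $x$; hence the map $(M, x) \mapsto (S^M_{s,x}, \co{T^M})$ is computable and is a genuine many-one reduction onto instances of the convergence problem.

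Next I would invoke \Cref{undecidability_convergence}, instantiated with the start state $p = s$, which states precisely that $S^M_{s,x} \conv \co{T^M}$ holds if and only if $M$ accepts $x$. Consequently, any decision procedure for $\conv$ would, when fed the computable pair $(S^M_{s,x}, \co{T^M})$, decide whether $M$ accepts $x$, contradicting the undecidability of acceptance. This mirrors the structure of the earlier undecidability proof for correct composition, which reduces acceptance to $\correct{S^M_s}{T^M_x}$ through \Cref{undecidability_termination}; here the only change is that the semantic target is convergence rather than correctness.

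The genuinely substantive content is carried by \Cref{undecidability_convergence} itself, so I would treat that lemma as the heart of the argument and the theorem as its immediate corollary. The main obstacle is therefore not the final reduction but ensuring, inside the lemma, that convergence collapses to the existence of a \emph{single} terminating trace of $M$ on $x$. Concretely, one has to exploit that $S \conv T$ forces a trace common to both types, that the traces of $\co{T^M}$ are exactly the interleavings of dual input/output pairs on matching tags closed by the sentinel sequence on $\Tag[E]$ and $\Tag[E']$, and that such a finite shared trace witnesses a terminating computation of $M$ — and conversely. The delicate part is the \emph{if} direction: a terminating computation must be shown to anchor a well-founded derivation of \Cref{def:conv}, with the nested applications of the inductive rule pruning away exactly the branches of $\co{T^M}$ that diverge from the chosen trace, the derivation depth being bounded by the number of output actions along it.
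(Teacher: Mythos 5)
Your proposal matches the paper exactly: the theorem is stated as an immediate corollary of \Cref{undecidability_convergence}, i.e., a many-one reduction from queue-machine acceptance via the computable map $(M,x)\mapsto(S^M_{s,x},\co{T^M})$. Your additional remarks on where the real work lies (inside the lemma, especially the well-founded derivation for the \emph{if} direction) are accurate and consistent with the paper's development.
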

\begin{proof}
Corollary of \Cref{undecidability_convergence}.
%
%
%
\end{proof}

\begin{lemma}\label{undecidability_subtyping}
Consider a queue machine $M=(Q , \Sigma , \Gamma , \$ , s , \delta )$
and an input string $x \in \Sigma^*$. We have that $S^M_{p,x} \subt \co{T^M}$
if and only if $M$ accepts $x$.
\end{lemma}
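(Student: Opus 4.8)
The plan is to route both directions through the convergence relation, leaning on \Cref{thm:subt}, which identifies $\subt$ with the largest asynchronous subtyping relation included in $\conv$, and on \Cref{undecidability_convergence}, which already equates $S^M_{p,x} \conv \co{T^M}$ with acceptance of $x$ by $M$. The only-if direction is then immediate: if $S^M_{p,x} \subt \co{T^M}$, then ${\subt} \subseteq {\conv}$ gives $S^M_{p,x} \conv \co{T^M}$, and \Cref{undecidability_convergence} yields that $M$ accepts $x$.

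For the if direction, assume $M$ accepts $x$, so that $S^M_{p,x} \conv \co{T^M}$ by \Cref{undecidability_convergence}. By \Cref{thm:subt} it suffices to exhibit an asynchronous subtyping relation contained in $\conv$ that relates the two types. I would take $\srel$ to be the set of pairs reachable from $(S^M_{p,x}, \co{T^M})$ by repeatedly applying the clauses of \Cref{def:csubt}. The crucial structural observation is that, because the two encoding types mirror each other, this reachable set is in fact a single deterministic chain: the queue-carrying component always exposes exactly one output $\out\Tag[A]$ (the front of the encoded queue), which is matched by an output of $\co{T^M}$, and the input-prefixed residual $\inp\Tag[A].\co{T^M}$ exposes exactly one input, which is matched by an anticipated (deep) input of the queue-carrying component; this alternation reproduces exactly the machine steps of \Cref{from_machines_to_types}. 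Checking that $\srel$ satisfies the three clauses of \Cref{def:csubt} is then a routine case analysis that does not use acceptance: whenever the right-hand type offers outputs it is $\co{T^M}$ or $\co{T^M_E}$ and the left-hand type still carries the round marker $E$ as a pending output, so clause~\eqref{csubt-out} applies and is satisfied; whenever the left-hand type has no outputs the right-hand type is input-prefixed, so $\outputs$ of the right-hand type is empty and clause~\eqref{csubt-out} is vacuous; clause~\eqref{csubt-inp} always holds because every input of the right-hand type is available deep inside the left-hand type; and clause~\eqref{csubt-end} is reached only at the final pair $(\End, \End)$.

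It remains to show ${\srel} \subseteq {\conv}$, and this is the step I expect to be the main obstacle. Because $M$ accepts $x$ and the encoded computation is deterministic, the chain $\srel$ is finite and terminates at $(\End, \End)$; hence every pair $(S', T')$ in it shares a finite common trace leading both components to $\End$. For each such pair one proves $S' \conv T'$ by the same trace analysis used for \Cref{undecidability_convergence}: any trace of $T'$ that is not a trace of $S'$ coincides with the common terminating trace up to a shared prefix $\ActionsB$ followed by a single output $\out\Tag[A]$, which is precisely the witness required by the rule defining $\conv$ in \Cref{def:conv}, and the induction terminates because the common trace is finite. The difficulty is that the residuals occurring along the chain are not literally of the form $S^M_{q,y}$ handled by \Cref{undecidability_convergence}, so this convergence argument has to be restated in slightly more general form and carefully matched against the intermediate machine configurations, each of which inherits termination from the accepting run. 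Once ${\srel} \subseteq {\conv}$ is established, \Cref{thm:subt} gives $(S^M_{p,x}, \co{T^M}) \in {\subt}$, completing the if direction.
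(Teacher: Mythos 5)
Your proposal is correct and follows essentially the same route as the paper: the only-if direction goes through $\subt\subseteq\conv$ (via \Cref{thm:subt}) and \Cref{undecidability_convergence}, and the if direction builds the deterministic chain of pairs tracking the accepting computation as an explicit asynchronous subtyping relation (the paper writes this relation out concretely rather than describing it as the closure under the clauses of \Cref{def:csubt}), verifies the clauses, and establishes inclusion in $\conv$ by reusing the trace analysis from the convergence lemma. Your remark that the convergence argument must be restated for the intermediate residuals matches the paper's (equally terse) appeal to ``the same reasoning'' from the if part of \Cref{undecidability_convergence}.
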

\begin{proof}
We start with the \emph{only if} part. We assume $S^M_{p,x} \subt \co{T^M}$.
We proceed by contradiction and we consider that $M$ does not accept $x$.
By \Cref{undecidability_convergence} we have that $S^M_{p,x} \not\conv \co{T^M}$.
By \Cref{thm:subt} we have that $\subt$ is the largest asynchronous subtyping 
relation included in $\conv$, hence $S^M_{p,x} \not\conv \co{T^M}$ implies
$S^M_{p,x} \not\subt \co{T^M}$. But this contradicts the above assumption.

We now move to the \emph{only if} part. We assume that $M$ accepts $x$.
Hence there exists a terminating computation of $M$:
$(s , x\$) = (q_0 , \gamma_0) \rightarrow_{M}
 (q_1 , \gamma_1) \rightarrow_{M}
 (q_2 , \gamma_2) \rightarrow_{M}
 \cdots \rightarrow_{M}
 (q_m , \gamma_m) \rightarrow_{M}
 (q_{m+1},\gamma_{m+1}) = (q , \varepsilon)$.
Consider the following relation $\srel$:
$$
\begin{array}{ll}
\{\ (\out\Tag[a_1]. \cdots . \out\Tag[a_n] .S^M_{q_i},\co{T^M}),
    (\out\Tag[a_2]. \cdots . \out\Tag[a_n] .S^M_{q_i}, \inp\Tag[a_1] . \co{T^M})  \ \mid \\
\qquad \qquad \qquad \qquad \qquad \qquad \qquad \qquad
  1 \leq i \leq m,\ \Tag[a_1] \cdots \Tag[a_n] = \gamma_i' \Tag[E] \gamma_i'',\
    \gamma_i = \gamma_i' \gamma_i'',\ \gamma_i' \neq \varepsilon\ \}\ \cup \\
\{\ (\out\Tag[E]. \out\Tag[a_1] \cdots . \out\Tag[a_n] .S^M_{q_{i}},\co{T^M}),
    (\out\Tag[a_1]. \cdots . \out\Tag[a_n] .S^M_{q_{i}}, \inp\Tag[E] . \co{T^M_E})  \ \mid \ 
    1 \leq i \leq m+1,\ \Tag[a_1] \cdots \Tag[a_n] = \gamma_i\}\ \cup \\    
\{\ (\out\Tag[E] .S^M_{q_{m+1}},\co{T^M_E}),
    (S^M_{q_{m+1}},\inp\Tag[E'] . \End),
    (\End,\End)\  \} \\
\end{array}
$$
We have that $\srel$ is an asynchronous subtyping 
relation. Moreover, by applying the same reasoning in the \emph{if} part
of the proof of \Cref{undecidability_convergence} we prove that
all the pairs in $\srel$ belong also to the convergence relation.
Hence we have that $S^M_{p,x} \subt \co{T^M}$ because $(S^M_{p,x}, \co{T^M}) \in \srel$.
\end{proof}

\begin{theorem}
The problem of checking the subtyping of two types is undecidable.
\end{theorem}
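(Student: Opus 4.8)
The plan is to reduce the acceptance problem for queue machines---which is undecidable precisely because such machines can deterministically simulate Turing machines, as noted after \Cref{def:queuemachines}---to the problem of deciding $\subt$. All the substantive technical work has already been carried out in \Cref{undecidability_subtyping}, so the present statement is essentially a corollary and I expect no genuine obstacle beyond packaging that equivalence as an undecidability result.

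Concretely, given a queue machine $M = (Q, \Sigma, \Gamma, \$, s, \delta)$ and an input string $x \in \Sigma^*$, I would form the two session types $S^M_{s,x}$ and $\co{T^M}$, where $s$ is the start state of $M$. The first prepends the outputs encoding the initial queue contents $x\$$ to the state component $S^M_s$, while the second is obtained by dualizing the fixed encoding $T^M$. Both are finite regular session types whose syntactic descriptions are computed uniformly and effectively from the finite data $M$ and $x$, so the map $(M, x) \mapsto (S^M_{s,x}, \co{T^M})$ is a computable reduction. I would briefly note that this construction lands in the class of well-formed session types to which $\subt$ applies.

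By \Cref{undecidability_subtyping} we have $S^M_{s,x} \subt \co{T^M}$ if and only if $M$ accepts $x$. Consequently, any decision procedure for $\subt$ would, composed with the above reduction, decide the acceptance problem, contradicting its undecidability; hence no such procedure exists. The only part that requires care---the biconditional between subtyping and machine termination---is entirely localized in \Cref{undecidability_subtyping}, where it rests on \Cref{thm:subt} (identifying $\subt$ as the largest asynchronous subtyping relation included in $\conv$) together with the convergence analysis of \Cref{undecidability_convergence}. The present theorem merely transports that equivalence through the undecidability of the source problem, so the proof can be stated as a one-line corollary in the same style as the undecidability result for convergence.
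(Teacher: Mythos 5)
Your proposal is correct and follows exactly the paper's own route: the paper proves this theorem as an immediate corollary of \Cref{undecidability_subtyping}, which is precisely the reduction from queue-machine acceptance that you describe. The additional remarks about computability of the map $(M,x)\mapsto(S^M_{s,x},\co{T^M})$ and well-formedness are sensible packaging but add nothing beyond what the paper leaves implicit.
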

\begin{proof}
Corollary of \Cref{undecidability_subtyping}.
\end{proof}

\end{document}